
\documentclass[12pt]{amsart}
\usepackage{amssymb,color}
\numberwithin{equation}{section} 

\usepackage{graphicx}

\textwidth 170mm \textheight 240mm \topmargin -10mm \oddsidemargin
-5mm \evensidemargin -5mm \baselineskip+6pt

\newcommand{\bea}{\begin{eqnarray}}
\newcommand{\eea}{\end{eqnarray}}
\newcommand{\ba}{\begin{array}}
\newcommand{\ea}{\end{array}}
\newcommand{\edc}{\end{document}}
\newcommand{\bc}{\begin{center}}
\newcommand{\ec}{\end{center}}
\newcommand{\be}{\begin{equation}}
\newcommand{\ee}{\end{equation}}


\def\cb{{\mathcal B}}

\def\cg{{\mathcal G}}



\def\bc{{\mathbb C}}

\def\bn{{\mathbb N}}

\def\bq{{\mathbb Q}}

\def\bz{{\mathbb Z}}


\def\g{\gamma}  \def\G{\Gamma}
\def\d{\delta}  

\def\e{\epsilon}
\def\l{\lambda} 

\def\m{\mu}

\def\n{\nu}
\def\r{\rho}
\def\s{\sigma} 
\def\t{\theta}

\def\w{\omega} \def\Om{\Omega}

\def\h{{\mathbf{h}}}

\def\xb{{\mathbf{x}}}
\def\sb{{\mathbf{s}}}

\newtheorem{thm}{Theorem}[section]
\newtheorem{lem}[thm]{Lemma}

\newtheorem{prop}[thm]{Proposition}

\theoremstyle{remark}
\newtheorem{rem}{Remark}[section]

\begin{document}

\title[On $p$-adic Gibbs measures]
{On dynamical systems and phase transitions for $Q+1$-state $P$-adic
Potts model on the Cayley tree}


\author{Farrukh Mukhamedov}
\address{Farrukh Mukhamedov\\
 Department of Computational \& Theoretical Sciences\\
Faculty of Science, International Islamic University Malaysia\\
P.O. Box, 141, 25710, Kuantan\\
Pahang, Malaysia} \email{{\tt far75m@yandex.ru} {\tt
farrukh\_m@iiu.edu.my}}

\begin{abstract}
In the present paper, we introduce a new kind of $p$-adic measures
for $q+1$-state Potts model, called {\it $p$-adic quasi Gibbs
measure}. For such a model, we derive a recursive relations with
respect to boundary conditions. Note that we consider two mode of
interactions: ferromagnetic and antiferromagnetic. In both cases, we
investigate a phase transition phenomena from the associated
dynamical system point of view. Namely, using the derived recursive
relations we define one dimensional fractional $p$-adic dynamical
system. In ferromagnetic case, we establish that if $q$ is divisible
by $p$, then such a dynamical system has two repelling and one
attractive fixed points. We find basin of attraction of the fixed
point. This allows us to describe all solutions of the nonlinear
recursive equations. Moreover, in that case there exists the strong
phase transition. If $q$ is not divisible by $p$, then the fixed
points are neutral, and this yields that the existence of the quasi
phase transition. In antiferromagnetic case, there are two
attractive fixed points, and we find basins of attraction of both
fixed points, and describe solutions of the nonlinear recursive
equation. In this case, we prove the existence of a quasi phase
transition.

\vskip 0.3cm \noindent {\it
Mathematics Subject Classification}: 46S10, 82B26, 12J12, 39A70, 47H10, 60K35.\\
{\it Key words}: $p$-adic numbers, Potts model; $p$-adic quasi
Gibbs measure, phase transition.
\end{abstract}

\maketitle

\section{introduction}

Due to the assumption that $p$-adic numbers provide a more exact and
more adequate description of microworld phenomena, starting the
1980s, various models described in the language of $p$-adic analysis
have been actively studied
\cite{ADFV},\cite{FO},\cite{MP},\cite{V1}. The well-known studies in
this area are primarily devoted to investigating quantum mechanics
models using equations of mathematical physics \cite{ADV,V2,VVZ}.
Furthermore, numerous applications of the $p$-adic analysis to
mathematical physics have been proposed in
\cite{ABK},\cite{Kh1},\cite{Kh2}. One of the first applications of
$p$-adic numbers in quantum physics appeared in the framework of
quantum logic in \cite{BC}. This model is especially interesting for
us because it could not be described by using conventional real
valued probability. Besides, it is also known
\cite{Kh2,Ko,MP,Ro,Vi,VVZ} that a number of $p$-adic models in
physics cannot be described using ordinary Kolmogorov's probability
theory. New probability models, namely $p$-adic ones were
investigated in \cite{BD},\cite{K3},\cite{KhN}. After that in
\cite{KYR} an abstract $p$-adic probability theory was developed by
means of the theory of non-Archimedean measures \cite{Ro}. Using
that measure theory in \cite{KL},\cite{Lu} the theory of stochastic
processes with values in $p$-adic and more general non-Archimedean
fields having probability distributions with non-Archimedean values
has been developed. In particular, a non-Archimedean analog of the
Kolmogorov theorem was proven (see also \cite{GMR}). Such a result
allows us to construct wide classes of stochastic processes using
finite dimensional probability distributions\footnote{We point out
that stochastic processes on the field $\bq_p$ of $p$-adic numbers
with values of real numbers have been studied by many authors, for
example, \cite{AK,AZ1,AZ2,DF,Koc,Y}. In those investigations wide
classes of Markov processes on $\bq_p$ were constructed and studied.
In our case the situation is different, since probability measures
take their values in $\bq_p$. This leads our investigation to some
difficulties. For example, there is no information about the
compactness of $p$-adic values probability measures. }. Therefore,
this result give us a possibility to develop the theory of
statistical mechanics in the context of the $p$-adic theory, since
it lies on the basis of the theory of probability and stochastic
processes. Note that one of the central problems of such a theory is
the study of infinite-volume Gibbs measures corresponding to a given
Hamiltonian, and a description of the set of such measures. In most
cases such an analysis depend on a specific properties of
Hamiltonian, and complete description is often a difficult problem.
This problem, in particular, relates to a phase transition of the
model (see \cite{G}).

In \cite{KK1,KK2} a notion of ultrametric Markovianity, which
describes independence of contributions to random field from
different ultrametric balls, has been introduced, and shows that
Gaussian random fields on general ultrametric spaces (which were
related with hierarchical trees), which were defined as a solution
of pseudodifferential stochastic equation (see also \cite{KaKo}),
satisfies the Markovianity. In addition,  covariation of the defined
random field was computed with the help of wavelet analysis on
ultrametric spaces (see also \cite{Koz}). Some applications of the
results to replica matrices, related to general ultrametric spaces
have been investigated in \cite{KK3}.

The aim of this paper is devoted to the development of  $p$-adic
probability theory approaches to study $q+1$-state nearest-neighbor $p$-adic Potts model
on Cayley tree (see \cite{W}). We are especially interested in the construction of $p$-adic quasi
Gibbs measures for the mentioned model, since such measures present
more natural concrete examples of $p$-adic Markov processes (see
\cite{KL}, for definitions). In \cite{MR1,MR2} we have studied  $p$-adic Gibbs
measures and existence of phase transitions for the $q$-state Potts models on the Cayley tree\footnote{The classical (real value)
counterparts of such models were considered in \cite{W}}. It was
established that a phase transition occurs \footnote{Here the phase
transition means the existence of two distinct $p$-adic Gibbs
measures for the given model.} if $q$ is divisible by $p$. This
shows that the transition depends on the number of spins $q$.

To investigate phase transitions, a dynamical system approach, in
real case, has greatly enhanced our understanding of complex
properties of models. The interplay of statistical mechanics with
chaos theory has even led to novel conceptual frameworks in
different physical settings \cite{E}. On the other hand, the theory
$p$-adic dynamical systems is a rapidly growing topic, there are
many papers devoted to this subject (see for example,
\cite{KhN},\cite{Sil1}).  We remark that first investigations of
non-Archimedean dynamical systems have appeared in \cite{HY}. We
also point out that intensive development of $p$-adic (and more
general algebraic) dynamical systems has happened few years, (for
example, see \cite{AKh,AV1,AV2,B1,FL1,FL2,KM1,RL,TVW,Wo}). More
extensive lists may be found in the $p$-adic dynamics bibliography
maintained by Silverman \cite{Sil2} and the algebraic dynamics
bibliography of Vivaldi \cite{Vi}.

In the present paper, we are going to investigate a phase transition
phenomena from the such a dynamical system point of view. In the
paper we introduce a new class of $p$-adic measures, associated with
$q+1$-state Potts model, called {\it $p$-adic quasi Gibbs measure}.
Note such a class is totaly different from the $p$-adic Gibbs
measures considered in \cite{MR1,MR2}. For the model under
consideration, we derive a recursive relations with respect to
boundary conditions. Note that we shall consider two mode of
interactions: ferromagnetic and antiferromagnetic.  Namely, using
the derived recursive relations we define one dimensional fractional
$p$-adic dynamical system. In both cases, we are going to
investigate a phase transition phenomena from the associated
dynamical system point of view. In ferromagnetic case, we establish
that if $q$ is divisible by $p$, then such a dynamical system has
two repelling and one attractive fixed points. We find basin of
attraction of the fixed point. This allows us to describe all
solutions of the nonlinear recursive equations. Moreover, in that
case there exists the strong phase transition. If $q$ is not
divisible by $p$, then the fixed points are neutral, and this yields
that the existence of the quasi phase transition. In
antiferromagnetic case, there are two attractive fixed points, and
we find basins of attraction of both fixed points, and describe
solutions of the nonlinear recursive equation. In this case, we
prove the existence of a quasi phase transition. Note that the
obtained results are totaly different from the results of
\cite{MR1,MR2}, since when $q$ is divisible by $p$ means that $q+1$
is not divided by $p$, which according to \cite{MR1} means that
uniqueness and boundedness of $p$-adic Gibbs measure.

\section{Preliminaries}

\subsection{$p$-adic numbers}

In what follows $p$ will be a fixed prime number, and $\bq_p$
denotes the field of $p$-adic filed, formed by completing $\bq$ with
respect to the unique absolute value satisfying $|p|_p = 1/p$. The
absolute value $|\cdot|_p$, is non- Archimedean, meaning that it
satisfies the ultrametric triangle inequality $|x + y|_p \leq
\max\{|x|_p, |y|_p\}$.

Any $p$-adic number $x\in\bq_p$, $x\neq 0$ can be uniquely represented in the form
\begin{equation}\label{canonic}
x=p^{\g(x)}(x_0+x_1p+x_2p^2+...),
\end{equation}
where $\g=\g(x)\in\bz$ and $x_j$ are integers, $0\leq x_j\leq p-1$,
$x_0>0$, $j=0,1,2,\dots$ In this case $|x|_p=p^{-\g(x)}$.

We recall that an integer $a\in \bz$ is called {\it a quadratic
residue modulo $p$} if the equation $x^2\equiv a(\textrm{mod $p$})$
has a solution $x\in \bz$.

\begin{lem}\label{quadrat} \cite{Ko} In order that the equation
$$
x^2=a, \ \ 0\neq a=p^{\g(a)}(a_0+a_1p+...), \ \ 0\leq a_j\leq p-1, \
a_0>0
$$
has a solution $x\in \bq_p$, it is necessary and sufficient that the
following conditions are fulfilled:
\begin{enumerate}
\item[(i)] $\g(a)$ is even;\\

\item[(ii)] $a_0$ is a quadratic residue modulo $p$ if $p\neq 2$, and moreover
$a_1=a_2=0$ if $p=2$.
\end{enumerate}
\end{lem}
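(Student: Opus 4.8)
The plan is to prove the classical criterion for solvability of $x^2 = a$ in $\bq_p$ by reducing, via the canonical form \eqref{canonic}, to the case of a $p$-adic unit and then applying Hensel's lemma. First I would observe that necessity of (i) is immediate: if $x^2 = a$ then $2\g(x) = \g(a)$, so $\g(a)$ must be even. Conversely, assuming $\g(a)$ is even, write $a = p^{\g(a)} u$ with $u = a_0 + a_1 p + \cdots$ a $p$-adic unit; since $p^{\g(a)} = (p^{\g(a)/2})^2$ is already a square, solving $x^2 = a$ is equivalent to solving $y^2 = u$ for a unit $u$. Thus the whole problem reduces to: a $p$-adic unit $u$ is a square in $\bq_p$ if and only if condition (ii) holds.

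For the unit case with $p \neq 2$, I would argue as follows. If $y^2 = u$ then reducing mod $p$ gives $y_0^2 \equiv a_0 \pmod p$, so $a_0$ is a quadratic residue mod $p$ — this is necessity. For sufficiency, suppose $a_0$ is a quadratic residue mod $p$; pick $b_0 \in \{1,\dots,p-1\}$ with $b_0^2 \equiv a_0 \pmod p$. Consider $f(y) = y^2 - u \in \bz_p[y]$. Then $|f(b_0)|_p \leq 1/p < 1$ while $|f'(b_0)|_p = |2b_0|_p = 1$ (here $p \neq 2$ is used), so $|f(b_0)|_p < |f'(b_0)|_p^2$, and Hensel's lemma produces a root $y \in \bz_p$ with $y^2 = u$. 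This settles $p \neq 2$.

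For $p = 2$, the argument is more delicate because $|f'(y)|_2 = |2y|_2 \leq 1/2$, so the naive Hensel estimate fails and one needs the stronger "quadratic" form of Hensel's lemma requiring $|f(c)|_2 < |f'(c)|_2^2 = 1/4$, i.e. $|f(c)|_2 \leq 1/8$. Writing $u = 1 + a_1 2 + a_2 4 + \cdots$ (note $a_0 = 1$ automatically since $u$ is a unit), I would show: if $u$ is a square then squaring an odd integer $y = 1 + 2t$ gives $y^2 = 1 + 4t(t+1) \equiv 1 \pmod 8$, forcing $a_1 = a_2 = 0$ — this is necessity of (ii) for $p=2$. For sufficiency, assume $a_1 = a_2 = 0$, so $u \equiv 1 \pmod 8$; take $c = 1$, then $|f(1)|_2 = |1 - u|_2 \leq 1/8 < 1/4 = |f'(1)|_2^2$, and the strong Hensel lemma gives $y \in \bz_2$ with $y^2 = u$.

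The main obstacle is the $p = 2$ case: one must be careful to invoke the correct sharpened version of Hensel's lemma (the one with the $|f(c)|_p < |f'(c)|_p^2$ hypothesis rather than $|f(c)|_p < 1$), and to verify the congruence $u \equiv 1 \pmod 8$ is exactly what makes that hypothesis hold, as well as to check that an odd square is necessarily $\equiv 1 \pmod 8$ to get the converse direction. The $p \neq 2$ part and the reduction to units are routine. Since Lemma~\ref{quadrat} is quoted from \cite{Ko}, it would also be legitimate simply to cite that reference; the above is the self-contained argument one would write if a proof were expected.
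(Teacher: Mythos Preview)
Your argument is correct and follows the standard route via Hensel's lemma; there is nothing to fix. Note, however, that the paper does not give a proof of this lemma at all: it is stated with the citation \cite{Ko} and used as a black box throughout. So there is no ``paper's own proof'' to compare against --- your self-contained argument (reduce to units by parity of $\g(a)$, then apply the simple Hensel lemma for $p\neq 2$ and the strengthened version with the $|f(c)|_p<|f'(c)|_p^2$ hypothesis for $p=2$) is exactly the classical proof one finds in Koblitz, and you correctly anticipated that a bare citation would also suffice here.
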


Note the basics of $p$-adic analysis, $p$-adic mathematical physics
are explained in \cite{Ko,Ma,S,Ro,VVZ}.

\subsection{Dynamical systems in $\bq_p$}

In this subsection we recall some standard terminology of the theory
of dynamical systems (see for example \cite{PJS},\cite{KhN}).

Given $r,s>0$ ($r<s$) and $a\in\bq_p$ denote
\begin{eqnarray}\label{B}
&&B_r(a)=\{x\in\bq_p:\ |x-a|_p<r\}, \ \ \bar B_r(a)=\{x\in\bq_p:\ |x-a|_p\leq r\} \\
\label{S} && B_{r,s}(a)=\{x\in\bq_p:\ r<|x-a|_p<s\}, \ \
S_r(a)=\{x\in\bq_p:\ |x-a|_p=r\}.
\end{eqnarray}
It is clear that $\bar B_r(a)=B_r(a)\cup S_r(a)$.

A function $f:B_r(a)\to\bq_p$ is said to be {\it analytic} if it can
be represented by
$$
f(x)=\sum_{n=0}^{\infty}f_n(x-a)^n, \ \ \ f_n\in \bq_p,
$$ which converges uniformly on the ball $B_r(a)$.

Consider a dynamical system $(f,B)$ in $\bq_p$, where $f: x\in B\to
f(x)\in B$ is an analytic function and $B=B_r(a)$ or $\bq_p$. Denote
$x^{(n)}=f^n(x^{(0)})$, where $x^0\in B$ and
$f^n(x)=\underbrace{f\circ\dots\circ f(x)}_n$.
 If $f(x^{(0)})=x^{(0)}$ then $x^{(0)}$
is called a {\it fixed point}. A fixed point $x^{(0)}$ is called an
{\it attractor} if there exists a neighborhood $U(x^{(0)})(\subset
B)$ of $x^{(0)}$ such that for all points $y\in U(x^{(0)})$ it holds
$\lim\limits_{n\to\infty}y^{(n)}=x^{(0)}$, where $y^{(n)}=f^n(y)$.
If $x^{(0)}$ is an attractor then its {\it basin of attraction} is
$$
A(x^{(0)})=\{y\in \bq_p :\ y^{(n)}\to x^{(0)}, \ n\to\infty\}.
$$
A fixed point $x^{(0)}$ is called {\it repeller} if there  exists a
neighborhood $U(x^{(0)})$ of $x^{(0)}$ such that
$|f(x)-x^{(0)}|_p>|x-x^{(0)}|_p$ for $x\in U(x^{(0)})$, $x\neq
x^{(0)}$. For a fixed point  $x^{(0)}$ of a function $f(x)$ a ball
$B_r(x^{(0)})$ (contained in $B$) is said to be a {\it Siegel disc}
if each sphere $S_{\r}(x^{(0)})$, $\r<r$ is an invariant sphere of
$f(x)$, i.e. if $x\in S_{\r}(x^{(0)})$ then all iterated points
$x^{(n)}\in S_{\r}(x^{(0)})$ for all $n=1,2\dots$. The union of all
Siegel discs with the center at $x^{(0)}$ is said to {\it a maximum
Siegel disc} and is denoted by $SI(x^{(0)})$.

\begin{rem} In non-Archimedean geometry, a center of a
disc is nothing but a point which belongs to the disc, therefore, in
principle, different fixed points may have the same Siegel disc (see
\cite{AV2}).
\end{rem}

Let $x^{(0)}$ be a fixed point of an analytic function $f(x)$. Set
$$
\l=\frac{d}{dx}f(x^{(0)}).
$$

The point $x^{(0)}$ is called {\it attractive} if $0\leq |\l|_p<1$,
{\it indifferent} if $|\l|_p=1$, and {\it repelling} if $|\l|_p>1$.

\subsection{$p$-adic measure}

Let $(X,\cb)$ be a measurable space, where $\cb$ is an algebra of
subsets $X$. A function $\m:\cb\to \bq_p$ is said to be a {\it
$p$-adic measure} if for any $A_1,\dots,A_n\subset\cb$ such that
$A_i\cap A_j=\emptyset$ ($i\neq j$) the equality holds
$$
\mu\bigg(\bigcup_{j=1}^{n} A_j\bigg)=\sum_{j=1}^{n}\mu(A_j).
$$

A $p$-adic measure is called a {\it probability measure} if
$\mu(X)=1$.  A $p$-adic probability measure $\m$ is called {\it
bounded} if $\sup\{|\m(A)|_p : A\in \cb\}<\infty $. Note that in
general, a $p$-adic probability measure need not be bounded
\cite{K3,KL,Ko}. For more detail information about $p$-adic measures
we refer to \cite{K3},\cite{KhN},\cite{Ro}.

\subsection{Cayley tree}

Let $\Gamma^k_+ = (L,E)$ be a semi-infinite Cayley tree of order
$k\geq 1$ with the root $x^0$ (whose each vertex has exactly $k+1$
edges, except for the root $x^0$, which has $k$ edges). Here $L$ is
the set of vertices and $E$ is the set of edges. The vertices $x$
and $y$ are called {\it nearest neighbors} and they are denoted by
$l=<x,y>$ if there exists an edge connecting them. A collection of
the pairs $<x,x_1>,\dots,<x_{d-1},y>$ is called a {\it path} from
the point $x$ to the point $y$. The distance $d(x,y), x,y\in V$, on
the Cayley tree, is the length of the shortest path from $x$ to $y$.

Recall a coordinate structure in $\G^k_+$:  every vertex $x$ (except
for $x^0$) of $\G^k_+$ has coordinates $(i_1,\dots,i_n)$, here
$i_m\in\{1,\dots,k\}$, $1\leq m\leq n$ and for the vertex $x^0$ we
put $(0)$.  Namely, the symbol $(0)$ constitutes level 0, and the
sites $(i_1,\dots,i_n)$ form level $n$ ( i.e. $d(x^0,x)=n$) of the
lattice.

\begin{figure}
\begin{center}
\includegraphics[width=10.07cm]{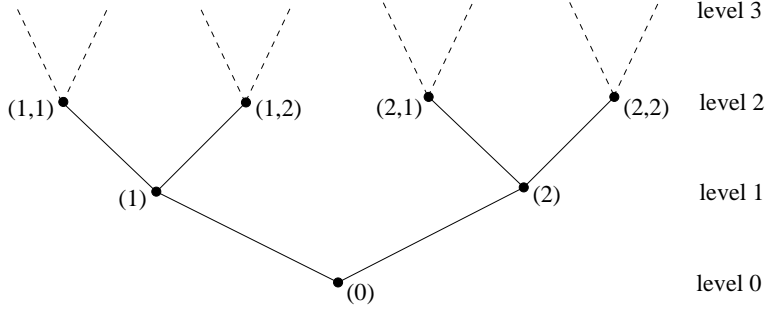}
\end{center}
\caption{The first levels of $\G_+^2$} \label{fig1}
\end{figure}

Let us set
$$ W_n=\{x\in V| d(x,x^0)=n\}, \ \ \
V_n=\bigcup_{m=1}^n W_m, \ \ L_n=\{l=<x,y>\in L | x,y\in V_n\}.
$$
For $x\in \G^k_+$, $x=(i_1,\dots,i_n)$ denote
\begin{equation}\label{S(x)}
 S(x)=\{(x,i):\ 1\leq
i\leq k\},
\end{equation}
here $(x,i)$ means that $(i_1,\dots,i_n,i)$. This set is called a
set of {\it direct successors} of $x$.

\section{$p$-adic Potts model and its $p$-adic quasi Gibbs measures}

In this section we consider the $p$-adic Potts model where spin takes values in the
set $\Phi=\{0,1,2,\cdots,q\}$, here $q\geq 1$, ($\Phi$ is called a
{\it state space}) and is assigned to the vertices of the tree
$\G^k=(V,\Lambda)$. A configuration $\s$ on $V$ is then defined as a
function $x\in V\to\s(x)\in\Phi$; in a similar manner one defines
configurations $\s_n$ and $\w$ on $V_n$ and $W_n$, respectively. The
set of all configurations on $V$ (resp. $V_n$, $W_n$) coincides with
$\Omega=\Phi^{V}$ (resp. $\Omega_{V_n}=\Phi^{V_n},\ \
\Omega_{W_n}=\Phi^{W_n}$). One can see that
$\Om_{V_n}=\Om_{V_{n-1}}\times\Om_{W_n}$. Using this, for given
configurations $\s_{n-1}\in\Om_{V_{n-1}}$ and $\w\in\Om_{W_{n}}$ we
define their concatenations  by
$$
(\s_{n-1}\vee\w)(x)= \left\{
\begin{array}{ll}
\s_{n-1}(x), \ \ \textrm{if} \ \  x\in V_{n-1},\\
\w(x), \ \ \ \ \ \ \textrm{if} \ \ x\in W_n.\\
\end{array}
\right.
$$
It is clear that $\s_{n-1}\vee\w\in \Om_{V_n}$.

The Hamiltonian $H_n:\Om_{V_n}\to\bq_p$ of the $p$-adic {\it
$q+1$-state Potts} model has a form
\begin{equation}\label{Potts}
H_n(\s)=N\sum_{<x,y>\in L_n}\delta_{\s(x),\s(y)},  \ \
\s\in\Om_{V_n}, \  n\in\mathbb{N},
\end{equation}
where $\delta$ is the Kronecker symbol and the coupling constant $N$ ($N\neq 0$),
belongs to $\bz$. We call the model {\it ferromagnetic} if $N>0$, and {\it antiferromagnetic} if $N<0$.

Note that when $q=1$, then the corresponding model reduces to the
$p$-adic Ising model. Such a model was investigated in \cite{GMR,KM}.

Now let us construct $p$-adic quasi Gibbs measures corresponding to the
model.

Assume that  $\h: V\setminus\{x^{(0)}\}\to\bq_p^{\Phi}$ is a function,
i.e. $\h_x=(h_{0,x},h_{1,x},\dots,h_{q,x})$,  where $h_{i,x}\in\bq_p$ ($i\in\Phi$) and $x\in V\setminus\{x^{(0)}\}$. Given $n\in\bn$, let us consider a $p$-adic probability measure $\m^{(n)}_\h$ on $\Om_{V_n}$ defined by
\begin{equation}\label{mu}
\mu^{(n)}_{\h}(\s)=\frac{1}{Z_n^{(\h)}}p^{H_n(\s)}\prod_{x\in
W_n}h_{\s(x),x}
\end{equation}
Here, $\s\in\Om_{V_n}$, and $Z_n^{(\h)}$ is the corresponding
normalizing factor called a {\it partition function} given by
\begin{equation}\label{ZN1}
Z_n^{(\h)}=\sum_{\s\in\Omega_{V_n}}p^{H_n(\s)}\prod_{x\in
W_n}h_{\s(x),x},
\end{equation}
here subscript $n$ and superscript $(\h)$ are accorded to the $Z$,
since it depends on $n$ and a function $\h$.

One of the central results of the theory of probability concerns a
construction of an infinite volume distribution with given
finite-dimensional distributions, which is called {\it Kolmogorov's
Theorem} \cite{Sh}. Therefore, in this paper we are interested in
the same question but in a $p$-adic context. More exactly, we want
to define a $p$-adic probability measure $\m$ on $\Om$ which is
compatible with defined ones $\m_\h^{(n)}$, i.e.
\begin{equation}\label{CM}
\m(\s\in\Om: \s|_{V_n}=\s_n)=\m^{(n)}_\h(\s_n), \ \ \ \textrm{for
all} \ \ \s_n\in\Om_{V_n}, \ n\in\bn.
\end{equation}

In general, \`{a} priori the existence such a kind of measure
$\m$ is not known, since there is not much information on
topological properties, such as compactness, of the set of all
$p$-adic measures defined even on compact spaces\footnote{In the
real case, when the state space is compact, then the existence
follows from the compactness of the set of all probability measures
(i.e. Prohorov's Theorem). When the state space is non-compact, then
there is a Dobrushin's Theorem \cite{Dob1,Dob2} which gives a
sufficient condition for the existence of the Gibbs measure for a
large class of Hamiltonians. }. Note that certain properties of the set of $p$-adic measures has been
studied in \cite{kas2}, but those properties are not enough to prove the existence of the limiting
measure. Therefore,
at a moment, we can only use the $p$-adic Kolmogorov extension
Theorem (see \cite{GMR},\cite{KL}) which based on so called {\it
compatibility condition} for the measures $\m_\h^{(n)}$, $n\geq 1$,
i.e.
\begin{equation}\label{comp}
\sum_{\w\in\Om_{W_n}}\m^{(n)}_\h(\s_{n-1}\vee\w)=\m^{(n-1)}_\h(\s_{n-1}),
\end{equation}
for any $\s_{n-1}\in\Om_{V_{n-1}}$. This condition according to the
theorem implies the existence of a unique $p$-adic measure $\m$
defined on $\Om$ with a required condition \eqref{CM}. Note that
more general theory of $p$-adic measures has been developed in \cite{kas1}.

So, if for some function $\h$ the measures  $\m_\h^{(n)}$ satisfy
the compatibility condition, then there is a unique $p$-adic
probability measure, which we denote by $\m_\h$, since it depends on
$\h$. Such a measure $\m_\h$ is said to be {\it a $p$-adic quasi
Gibbs measure} corresponding to the $p$-adic Potts model. By
$Q\cg(H)$ we denote the set of all $p$-adic quasi Gibbs measures
associated with functions $\h=\{\h_x,\ x\in V\}$. If there are at
least two distinct $p$-adic quasi Gibbs measures $\m,\n\in Q\cg(H)$
such that $\m$ is bounded and $\n$ is unbounded,
then we say that {\it a phase transition}
occurs. By another words, one can find two different functions $\sb$ and $\h$
defined on $\bn$ such that there exist the corresponding measures
$\m_\sb$ and $\m_\h$, for which one is bounded, another one is unbounded. Moreover, if
there is a sequence of sets $\{A_n\}$ such that $A_n\in\Om_{V_n}$  with
$|\m(A_n)|_p\to 0$ and $|\n(A_n)|_p\to\infty$ as $n\to\infty$, then we say that there occurs
a {\it strong phase transition}.
If there are two different functions $\sb$ and $\h$ defined on $\bn$ such that
there exist the corresponding measures $\m_\sb$, $\m_\h$, and they are bounded, then we
say there is a {\it quasi phase transition}.

\begin{rem} Note that in \cite{MR1} we considered the following
sequence of $p$-adic measures defined by
\begin{equation}\label{mr-mu}
\mu^{(n)}_{\h}(\s)=\frac{1}{\tilde Z_n^{(\h)}}\exp_p\{H_n(\s)\}\prod_{x\in
W_n}h_{\s(x),x},
\end{equation}
here as usual $\tilde Z_n^{(\h)}$ is the corresponding
normalizing factor. A limiting $p$-adic measures generated by \eqref{mr-mu} was called
{\it $p$-adic Gibbs measure}. Such kind of measures and phase transitions, for Ising and Potts
models on Cayley tree, have been studied in \cite{GMR,KM,MR1,MR2}. When a state space $\Phi$ is countable, the
corresponding $p$-adic Gibbs measures have been investigated in \cite{KMM,M}.
\end{rem}

Now one can ask for what kind of functions $\h$ the measures
$\m_\h^{(n)}$ defined by \eqref{mu} would satisfy the
compatibility condition \eqref{comp}. The following theorem gives
an answer to this question.

\begin{thm}\label{comp1} The measures $\m^{(n)}_\h$, $
n=1,2,\dots$ (see \eqref{mu}) satisfy the compatibility condition
\eqref{comp} if and only if for any $n\in \bn$ the following
equation holds:
\begin{equation}\label{eq1}
\hat h_{x}=\prod_{y\in S(x)}{\mathbf{F}}(\hat \h_{y};\theta),
\end{equation}
here and below $\theta=p^N$, a vector $\hat \h=(\hat h_1,\dots,\hat
h_q)\in\bq_p^q$ is defined by a vector
$\h=(h_0,h_1,\dots,h_q)\in\bq_p^{q+1}$ as follows
\begin{equation}\label{H}
\hat h_i=\frac{h_i}{h_0}, \ \ \ i=1,2,\dots,q
\end{equation}
and mapping ${\mathbf{F}}:\bq_p^{q}\times\bq_p\to\bq_p^q$ is defined
by ${\mathbf{F}}(\xb;\t)=(F_1(\xb;\t),\dots,F_q(\xb;\t))$ with
\begin{equation}\label{eq2}
F_i(\xb;\theta)=\frac{(\theta-1)x_i+\sum\limits_{j=1}^{q}x_j+1}
{\sum\limits_{j=1}^{q}x_j+\theta}, \ \ \xb=\{x_i\}\in\bq_p^q, \ \
i=1,2,\dots,q.
\end{equation}
\end{thm}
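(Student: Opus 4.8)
The plan is to substitute the explicit form \eqref{mu} of $\m^{(n)}_\h$ into the compatibility condition \eqref{comp} and reduce the resulting identity to a one-vertex recursion. First I would exploit the level decomposition $\Om_{V_n}=\Om_{V_{n-1}}\times\Om_{W_n}$: since every edge of $L_n$ either already lies in $L_{n-1}$ or joins a vertex $x\in W_{n-1}$ to one of its successors $y\in S(x)$, we have
$$H_n(\s_{n-1}\vee\w)=H_{n-1}(\s_{n-1})+N\sum_{x\in W_{n-1}}\sum_{y\in S(x)}\delta_{\s_{n-1}(x),\w(y)},$$
so with $\theta=p^N$ the factor $p^{H_n(\s_{n-1}\vee\w)}$ equals $p^{H_{n-1}(\s_{n-1})}\prod_{x\in W_{n-1}}\prod_{y\in S(x)}\theta^{\delta_{\s_{n-1}(x),\w(y)}}$. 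Because $W_n=\bigcup_{x\in W_{n-1}}S(x)$ is a disjoint union and the weight $\prod_{y\in W_n}h_{\w(y),y}$ factorizes over the vertices of $W_n$, the sum over $\w\in\Om_{W_n}=\Phi^{W_n}$ breaks into independent one-site sums and yields
$$\sum_{\w\in\Om_{W_n}}p^{H_n(\s_{n-1}\vee\w)}\prod_{y\in W_n}h_{\w(y),y}=p^{H_{n-1}(\s_{n-1})}\prod_{x\in W_{n-1}}\prod_{y\in S(x)}\Big(\sum_{i\in\Phi}\theta^{\delta_{\s_{n-1}(x),i}}h_{i,y}\Big).$$

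Next I would evaluate the inner sum: it equals $\theta h_{0,y}+\sum_{i=1}^{q}h_{i,y}$ when $\s_{n-1}(x)=0$ and $h_{0,y}+(\theta-1)h_{j,y}+\sum_{i=1}^{q}h_{i,y}$ when $\s_{n-1}(x)=j\in\{1,\dots,q\}$. Cancelling the common nonzero factor $p^{H_{n-1}(\s_{n-1})}$ and clearing the partition functions, \eqref{comp} becomes equivalent to
$$\frac{Z_{n-1}^{(\h)}}{Z_n^{(\h)}}\prod_{x\in W_{n-1}}\prod_{y\in S(x)}\Big(\sum_{i\in\Phi}\theta^{\delta_{\s_{n-1}(x),i}}h_{i,y}\Big)=\prod_{x\in W_{n-1}}h_{\s_{n-1}(x),x}$$
for every $\s_{n-1}\in\Om_{V_{n-1}}$. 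For necessity I would fix a vertex $x\in W_{n-1}$ and a state $j\in\{1,\dots,q\}$ and divide this identity written for the configuration with $\s_{n-1}(x)=j$ by the one with $\s_{n-1}(x)=0$, keeping all other spins fixed; everything except the factors attached to $x$ cancels, and after dividing numerator and denominator of each successor factor by $h_{0,y}$ and invoking \eqref{H} and \eqref{eq2}, what survives is exactly $\hat h_{j,x}=\prod_{y\in S(x)}F_j(\hat\h_y;\theta)$, i.e. the $j$-th component of \eqref{eq1}.

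For sufficiency, assume \eqref{eq1}. The case $j=0$ of the displayed identity is automatic, and introducing
$$a_x=\frac{1}{h_{0,x}}\prod_{y\in S(x)}\Big(\theta h_{0,y}+\sum_{i=1}^{q}h_{i,y}\Big),$$
relation \eqref{eq1} says precisely that $\prod_{y\in S(x)}\big(\sum_{i\in\Phi}\theta^{\delta_{j,i}}h_{i,y}\big)=a_x h_{j,x}$ for all $j\in\Phi$ and all $x\in W_{n-1}$. Hence, by the factorization above, the left-hand side of \eqref{comp} equals $\frac{1}{Z_n^{(\h)}}\big(\prod_{x\in W_{n-1}}a_x\big)p^{H_{n-1}(\s_{n-1})}\prod_{x\in W_{n-1}}h_{\s_{n-1}(x),x}$; summing this over all $\s_{n-1}\in\Om_{V_{n-1}}$, using $\sum_{\s_{n-1}}\sum_{\w}\m^{(n)}_\h(\s_{n-1}\vee\w)=1$ and the definition \eqref{ZN1} of $Z_{n-1}^{(\h)}$, forces $\prod_{x\in W_{n-1}}a_x=Z_n^{(\h)}/Z_{n-1}^{(\h)}$. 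Substituting this back shows the left-hand side of \eqref{comp} is $\m^{(n-1)}_\h(\s_{n-1})$, which is the required compatibility.

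The only step demanding real care is the sufficiency direction: one must check that the purely local relations \eqref{eq1}, which constrain only the ratios $h_{j,x}/h_{0,x}$, glue together with exactly the right global normalization, and this is what the auxiliary constants $a_x$ and the identification $\prod_{x\in W_{n-1}}a_x=Z_n^{(\h)}/Z_{n-1}^{(\h)}$ accomplish. Throughout one also uses tacitly that the partition functions $Z_n^{(\h)}$ are nonzero, so that \eqref{mu} genuinely defines $p$-adic probability measures, and that the components $h_{0,x}$ are nonzero, so that \eqref{H} makes sense; both are implicit in the standing hypothesis that the $\m^{(n)}_\h$ are well-defined.
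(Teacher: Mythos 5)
Your argument is correct and is exactly the computation the paper alludes to when it says ``the proof consists of checking condition \eqref{comp} for the measures \eqref{mu}''; in particular, your auxiliary constants $a_x$ and the identification $\prod_{x\in W_{n-1}}a_x=Z_n^{(\h)}/Z_{n-1}^{(\h)}$ are precisely the device the paper itself uses in the proof of Lemma \ref{parti}. The only caveats (nonvanishing of $h_{0,x}$, of the one-site sums, and of the partition functions) you flag explicitly, and they are indeed implicit in the theorem's standing hypotheses.
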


The proof consists of checking condition \eqref{comp} for the
measures \eqref{mu} (cp. \cite{MR1,KMM}).

\begin{lem}\label{parti} Let $\h$ be a solution of \eqref{eq1}, and
$\m_\h$ be an associated $p$-adic quasi Gibbs measure. Then for the
corresponding partition function $Z^{(\h)}_n$ (see \eqref{ZN1})
the following equality holds
\begin{equation}\label{ZN2}
Z^{(\h)}_{n+1}=A_{\h,n}Z^{(\h)}_n,
\end{equation}
where $A_{\h,n}$ will be defined below (see \eqref{aN3}).
\end{lem}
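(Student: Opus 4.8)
The plan is to derive the recursion \eqref{ZN2} directly from the definition \eqref{ZN1} of the partition function by splitting the sum over $\Omega_{V_{n+1}}$ according to the decomposition $\Omega_{V_{n+1}}=\Omega_{V_n}\times\Omega_{W_{n+1}}$ and using the fact that $\h$ solves \eqref{eq1}. First I would write $Z^{(\h)}_{n+1}=\sum_{\s_n\in\Omega_{V_n}}\sum_{\w\in\Omega_{W_{n+1}}}p^{H_{n+1}(\s_n\vee\w)}\prod_{x\in W_{n+1}}h_{\w(x),x}$, and use the additivity of the Hamiltonian over levels: $H_{n+1}(\s_n\vee\w)=H_n(\s_n)+N\sum_{x\in W_n}\sum_{y\in S(x)}\delta_{\s_n(x),\w(y)}$, since the only new edges in $L_{n+1}$ relative to $L_n$ are those joining a vertex $x\in W_n$ to its successors $y\in S(x)$. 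This lets me factor $p^{H_{n+1}(\s_n\vee\w)}=p^{H_n(\s_n)}\prod_{x\in W_n}\prod_{y\in S(x)}\theta^{\delta_{\s_n(x),\w(y)}}$ with $\theta=p^N$.

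Next I would carry out the sum over $\w\in\Omega_{W_{n+1}}=\prod_{x\in W_n}\prod_{y\in S(x)}\Phi$, which splits as a product over the successor vertices $y$: for each $y\in S(x)$ the factor is $\sum_{j\in\Phi}\theta^{\delta_{\s_n(x),j}}h_{j,y}=h_{\s_n(x),y}\theta + \sum_{j\neq \s_n(x)} h_{j,y} = (\theta-1)h_{\s_n(x),y}+\sum_{j\in\Phi}h_{j,y}$. Dividing and multiplying by $h_{0,y}$, and using the normalized variables $\hat h_{i,y}=h_{i,y}/h_{0,y}$ from \eqref{H}, this factor becomes $h_{0,y}\big[(\theta-1)\hat h_{\s_n(x),y}+\sum_{j=1}^q\hat h_{j,y}+1\big]$ when $\s_n(x)\neq 0$ and $h_{0,y}\big[\sum_{j=1}^q\hat h_{j,y}+\theta\big]$ when $\s_n(x)=0$. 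Comparing with \eqref{eq2}, in either case the bracket equals $\big(\sum_{j=1}^q\hat h_{j,y}+\theta\big)$ times $F_{\s_n(x)}(\hat\h_y;\theta)$, with the convention $F_0\equiv 1$. Hence the product over $y\in S(x)$ contributes $\prod_{y\in S(x)}h_{0,y}\big(\sum_{j=1}^q\hat h_{j,y}+\theta\big)$ times $\prod_{y\in S(x)}F_{\s_n(x)}(\hat\h_y;\theta)$, and by \eqref{eq1} the latter equals $\hat h_{\s_n(x),x}$ (again with $F_0\equiv 1$ handling the $\s_n(x)=0$ case). Summing over all $x\in W_n$ and factoring out the normalizations, I get
\begin{equation*}
Z^{(\h)}_{n+1}=\Bigg(\prod_{x\in W_n}\prod_{y\in S(x)}h_{0,y}\Big(\theta+\sum_{j=1}^q\hat h_{j,y}\Big)\Bigg)\sum_{\s_n\in\Omega_{V_n}}p^{H_n(\s_n)}\prod_{x\in W_n}\frac{h_{\s_n(x),x}}{h_{0,x}}\cdot\prod_{x\in W_n}h_{0,x}.
\end{equation*}
The last two products over $x\in W_n$ recombine into $\prod_{x\in W_n}h_{\s_n(x),x}$, so the $\s_n$-sum is exactly $Z^{(\h)}_n$, and the prefactor is the claimed $A_{\h,n}$ (matching \eqref{aN3}).

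The main obstacle is bookkeeping the role of the distinguished state $0$: the recursion \eqref{eq1} is written only for the normalized components $\hat h_1,\dots,\hat h_q$, so one must check carefully that the computation is uniform in $\s_n(x)$, i.e. that adopting $F_0(\xb;\theta)\equiv 1$ and $\hat h_{0,x}\equiv 1$ makes the $\s_n(x)=0$ case fall under the same formula; this is where a sign or an off-by-one in the Kronecker-delta sum could creep in, and it is the only genuinely delicate point. Everything else is the routine interchange of finite sums and products justified by the multiplicative structure of $p^{H_{n+1}}$ over the edge set $L_{n+1}\setminus L_n$. I would end by recording the explicit form $A_{\h,n}=\prod_{x\in W_n}\prod_{y\in S(x)}h_{0,y}\big(\theta+\sum_{j=1}^q\hat h_{j,y}\big)$ as the promised \eqref{aN3}.
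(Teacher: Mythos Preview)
Your approach is essentially the paper's: both expand the level-$(n{+}1)$ partition function over $\Omega_{V_n}\times\Omega_{W_{n+1}}$ and invoke the recursion \eqref{eq1}. The paper packages this via the equivalent un-normalized identity \eqref{aN1}, namely $\prod_{y\in S(x)}\sum_{j=0}^q p^{N\delta_{ij}}h_{j,y}=a_{\h}(x)h_{i,x}$ for every $i\in\Phi$, which has the mild advantage of treating $i=0$ on the same footing and so avoids the $F_0\equiv 1$ bookkeeping you flag.

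There is, however, a slip in your final displayed equation and hence in your formula for $A_{\h,n}$. After summing out $\w$ your computation actually gives
\[
Z^{(\h)}_{n+1}=\Bigg(\prod_{x\in W_n}\prod_{y\in S(x)}h_{0,y}\Big(\theta+\sum_{j=1}^q\hat h_{j,y}\Big)\Bigg)\sum_{\s_n\in\Omega_{V_n}}p^{H_n(\s_n)}\prod_{x\in W_n}\hat h_{\s_n(x),x},
\]
with no extra factor $\prod_{x\in W_n}h_{0,x}$ on the right. To turn the inner sum into $Z^{(\h)}_n$ you must write $\hat h_{\s_n(x),x}=h_{\s_n(x),x}/h_{0,x}$ and pull the $1/h_{0,x}$ \emph{into} the prefactor, yielding
\[
A_{\h,n}=\prod_{x\in W_n}\frac{1}{h_{0,x}}\prod_{y\in S(x)}h_{0,y}\Big(\theta+\sum_{j=1}^q\hat h_{j,y}\Big)=\prod_{x\in W_n}a_{\h}(x),
\]
which is exactly \eqref{aN3} once one evaluates \eqref{aN1} at $i=0$. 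Your stated $A_{\h,n}$ is off by the factor $\prod_{x\in W_n}h_{0,x}$; the rest of the argument is correct.
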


\begin{proof} Since $\h$ is a solution of \eqref{eq1}, then we conclude that there is a constant
$a_\h(x)\in\bq_p$ such that
\begin{equation}\label{aN1}
\prod_{y\in
S(x)}\sum_{j=0}^qp^{N\d_{ij}}h_{j,y}=a_{\h}(x)h_{i,x}
\end{equation}
for any $i\in\{0,\dots,q\}$. From this one gets
\begin{eqnarray}\label{aN2}
\prod_{x\in W_{n}}\prod_{y\in
S(x)}\sum_{j=0}^q p^{N\d_{ij}}h_{j,y}=\prod_{x\in
W_n}a_{\h}(x)h_{i,x}=A_{\h,n}\prod_{x\in W_n}h_{i,x},
\end{eqnarray}
where
\begin{equation}\label{aN3}
A_{\h,n}=\prod_{x\in W_n}a_{\h}(x).
\end{equation}
Given $j\in\Phi$, by $\eta^{(j)}\in\Om_{W_n}$ we denote a configuration on $W_n$ defined as follows: $\eta^{(j)}(x)=j$ for all $x\in W_n$.

Hence, by  \eqref{mu},\eqref{aN2} we have
\begin{eqnarray*}
1&=&\sum_{\s\in\Om_n}\sum_{\w\in\Om_{W_n}}\m^{(n+1)}_\h(\s\vee \w)\\
&=&\sum_{\s\in\Om_n}\sum_{\w\in\Om_{W_n}}\frac{1}{Z^{(\h)}_{n+1}}p^{H(\s\vee \w)}\prod_{x\in
W_{n+1}}h_{\w(x),x}\\
&=&\frac{1}{Z^{(\h)}_{n+1}}\sum_{\s\in\Om_n}p^{H(\s)}\prod_{x\in W_n}\prod_{y\in S(x)}\sum_{j=0}^q p^{N\d_{\s(x),j}}h_{j,y}\\
&=&\frac{A_{\h,n}}{Z^{(\h)}_{n+1}}
\sum_{\s\in\Om_n}p^{H(\s)}\prod_{x\in W_n}h_{\s(x),x}\\
&=&\frac{A_{\h,n}}{Z^{(\h)}_{n+1}}Z_n^{(\h)}
\end{eqnarray*}
which implies the required relation.
\end{proof}

\section{Dynamical systems and existence of $p$-adic quasi Gibbs measures}

In this section we will establish existence of $p$-adic quasi Gibbs
measures on a Cayley tree of order 2, i.e. $k=2$. To do it, we
reduce the equation \eqref{eq1} to the fixed point problem for
certain dynamical system. This allows us to investigate the
existence of $p$-adic quasi Gibbs measure.

We say that a function $\h=\{\h_x\}_{x\in V\setminus\{x^0\}}$ is
called {\it translation-invariant} if $\h_x=\h_y$ for all $x,y\in
V\setminus\{x^0\}$. A $p$-adic measure $\m_\h$, corresponding to a
translation-invariant function $\h$, is called {\it
translation-invariant $p$-adic quasi Gibbs measure}.

Let us first restrict ourselves to the description of
translation-invariant solutions of \eqref{eq1}, namely
$\h_x=\h(=(h_0,h_1,\dots,h_q))$ for all $x\in V$. Then \eqref{eq1}
can be rewritten as follows
\begin{equation}\label{eq11}
\hat h_{i}=\bigg(\frac{(\theta-1)\hat h_{i}+\sum_{j=1}^{q}\hat
h_j+1} {\sum_{j=1}^{q}\hat h_j+\theta}\bigg)^2, \ \ i=1,2,\dots,q.
\end{equation}

One can see that $(\underbrace{1,\dots,1,h}_m,1,\dots,1)$ is an
invariant line for \eqref{eq11} ($m=1,\dots,q$). On such kind of
invariant line equation \eqref{eq11} reduces to the following fixed
point problem
\begin{equation}\label{eq12}
x=f(x),
\end{equation}
where
\begin{equation}\label{f(x)}
f(x)=\bigg(\frac{\t x+q}{x+\t+q-1}\bigg)^2.
\end{equation}

A simple calculation shows that \eqref{eq12} has a form
$$
(x-1)(x^2+(2\t-\t^2+2q-1)x+q^2)=0.
$$

Hence, $x_0=1$ solution defines a $p$-adic quasi Gibbs measure
$\m_0$.

Now we are interested in finding other solutions of \eqref{eq12},
which means we need to solve the following one
\begin{equation}\label{eq13}
x^2+(2\t-\t^2+2q-1)x+q^2=0.
\end{equation}

Observe that the solution of \eqref{eq13} can be formally written by
\begin{equation}\label{eq14}
x_{1,2}=\frac{-(2\theta-\theta^2+2q-1)\pm(\theta-1)\sqrt{D(\t,q)}}{2},
\end{equation}
where $D(\t,q)=\theta^2-2\theta-4q+1$

So, if the defined solutions exist in $\bq_p$, then they define
$p$-adic quasi Gibbs measures $\m_{1}$ and
$\m_2$, respectively. Note that to exist such solutions the expression $\sqrt{D(\t,q)}$ should have
a sense in $\bq_p$, since in $\bq_p$ not every quadratic equation has a
solution (see Lemma \ref{quadrat}). Therefore, we are going to check
when $\sqrt{D(\t,q)}$ does exist.

Now consider two distinct cases with respect to $N$.

\subsection{Ferromagnetic case}

In this case, we assume that $N>0$, this means $|\t|_p=p^{-N}<1$.
Now let us consider several cases with respect to $q$.

{\sc Case $q=1$}. Note that this case corresponds to the $p$-adic
Ising model, and $D(\t,1)=\theta^2-2\theta-3$.
\begin{enumerate}
\item[(i)] Let $p=2$. Then from $-3=1+2^2+2^3+\cdots$ one has
$$
D(\t,1)=1+2^2+2^3+2^4\e-2\t+\t^2,
$$
where $\e=1+2+2^2+\cdots$. Hence, due to Lemma \ref{quadrat} one can
check that for any $N\geq 1$ the $\sqrt{D(\t,1)}$ does not exist.

\item[(ii)] Let $p=3$. Then taking into account that $\t=p^N$ we find
$$
D(\t,1)=3(3^{2N-1}-2\cdot 3^{N-1}-1).
$$
If $N=1$ then $D(\t,1)=0$, so $\sqrt{D(\t,1)}$ exists. If $N>1$ then
due to Lemma \ref{quadrat} we conclude that $\sqrt{D(\t,1)}$ does
not exist.

\item[(iii)] Let $p\geq 5$. Then from the expression
$$-3=p-3+(p-1)p+(p-1)p^2+\cdots
$$
we obtain
$$
D(\t,1)=p-3+(p-1)p\e_1-2p^N+p^{2N},
$$
where $\e_1=1+p+p^2+\cdots$. So, according to Lemma \ref{quadrat}
$\sqrt{D(\t,1)}$ exists if and only if the equation $x^2\equiv p-3
(\textrm{mod}\ p)$ has a solution in $\bz$. It is easy to see that
the last equation equivalent to $x^2+3\equiv 0(\textrm{mod}\ p)$.
For example, when $p=7$ the equation $x^2+3\equiv 0(\textrm{mod}\
p)$ has a solution $x=2$. So, in this case $\sqrt{D(\t,1)}$ exists.
\end{enumerate}

Hence, we can formulate the following

\begin{thm}\label{q1} Let $N\geq 1$ and $q=1$ (ferromagnetic Ising model). Then the following
assertions hold true:
\begin{enumerate}
\item[(i)] If $p=2$, then there is a unique translation-invariant $p$-adic quasi Gibbs measure
$\m_0$;\\
\item[(ii)] Let $p=3$. If $N=1$, then there are three translation-invariant $p$-adic quasi Gibbs measures $\m_0$, $\m_1$ and $\m_2$, otherwise  there is a unique translation-invariant $p$-adic quasi Gibbs measure $\m_0$;
\item[(iii)] Let $p\geq 5$, then there are three translation-invariant $p$-adic quasi Gibbs measures $\m_0$, $\m_1$ and $\m_2$ if and only if $-3$ is a quadratic residue of modulo $p$, otherwise  there is a unique translation-invariant $p$-adic quasi Gibbs measure $\m_0$;
\end{enumerate}
\end{thm}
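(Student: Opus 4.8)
The plan is to reduce, via Theorem \ref{comp1} and the discussion preceding the statement, the classification of translation-invariant $p$-adic quasi Gibbs measures for $q=1$ to counting the roots in $\bq_p$ of the quadratic \eqref{eq13} with $q=1$, i.e. of $x^2+(2\theta-\theta^2+1)x+1=0$. Since $x_0=1$ always solves \eqref{eq12} and gives $\m_0$, everything hinges on whether $\sqrt{D(\theta,1)}$, with $D(\theta,1)=\theta^2-2\theta-3$, exists in $\bq_p$ (and, when it does, whether the two roots $x_{1,2}$ from \eqref{eq14} are genuinely distinct from $x_0=1$ and from each other). So first I would record that $D(\theta,1)=0$ forces exactly one extra solution, while $D(\theta,1)$ a nonzero square forces exactly two extra solutions, and $D(\theta,1)$ a non-square forces none; and I would note $x_1=x_2=1$ cannot happen because substituting $x=1$ into \eqref{eq13} gives $2\theta-\theta^2+3=0$, i.e. $\theta^2-2\theta-3=0$, which combined with $D(\theta,1)=0$ would force $\theta\in\{-1,3\}$, neither of which is of the form $p^N$ with $N\ge 1$; hence the count of translation-invariant measures is $1$ or $3$ exactly as claimed.

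Next I would dispatch the three residue-characteristic cases by applying Lemma \ref{quadrat} to $a=D(\theta,1)$, exactly along the lines already sketched in items (i)--(iii) preceding the theorem. For $p=2$: write $-3=1+2^2+2^3+\cdots$ so that $D(\theta,1)=1+2^2+2^3+2^4\varepsilon-2\theta+\theta^2$ with $\gamma(D)=0$; since $|\theta|_2=2^{-N}\le 2^{-1}$, the coefficients $a_1$ and $a_2$ in the canonical expansion of $D(\theta,1)$ are $a_1=0$, $a_2=1\ne 0$, so part (ii) of Lemma \ref{quadrat} (the $p=2$ clause) fails and $\sqrt{D(\theta,1)}\notin\bq_2$; giving (i). For $p=3$: factor $D(\theta,1)=3\big(3^{2N-1}-2\cdot 3^{N-1}-1\big)$; if $N=1$ the bracket is $3-2-1=0$ so $D(\theta,1)=0$ and the root exists (three measures); if $N\ge 2$ the bracket is $\equiv -1\pmod 3$ (a unit), so $\gamma(D(\theta,1))=1$ is odd and part (i) of Lemma \ref{quadrat} fails, hence no square root (one measure); giving (ii). For $p\ge 5$: write $-3=(p-3)+(p-1)p+(p-1)p^2+\cdots$ so $D(\theta,1)=(p-3)+(p-1)p\varepsilon_1-2p^N+p^{2N}$ with $\gamma(D)=0$ and leading digit $a_0=p-3$; by Lemma \ref{quadrat}(ii) the square root exists iff $p-3$, equivalently $-3$, is a quadratic residue mod $p$; giving (iii).

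The one point that needs a little care — and which I would flag as the main (though minor) obstacle — is the bookkeeping for $p\ge 5$ and $p=3,\,N\ge 2$: one must be sure that adding the tail $-2p^N+p^{2N}$ (which is divisible by $p$ since $N\ge 1$, and by $p^2$ once $N\ge 1$ for the $p^{2N}$ term and $N\ge 1$ for $-2p^N$ only when... actually $-2p^N$ is divisible by $p^2$ only for $N\ge 2$) does not disturb the relevant leading digits. For $p\ge 5$ only the digit $a_0=p-3$ matters and $-2p^N+p^{2N}$ is divisible by $p$, so $a_0$ is unchanged. For $p=3$, $N\ge 2$ one only needs $\gamma(D)$, and $D(\theta,1)=3(3^{2N-1}-2\cdot3^{N-1}-1)$ has the bracket a $3$-adic unit precisely because $N-1\ge 1$, so $\gamma=1$; the case $N=1$ is exactly where $N-1=0$ lets the bracket vanish. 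For $p=2$ one needs digits $a_0,a_1,a_2$, and since $|\theta|_2\le 2^{-1}$ the perturbation $-2\theta+\theta^2$ only affects digits of index $\ge 1$, with $-2\theta$ contributing to index $\ge 1$ and, more sharply, not to index $1$ unless $N=1$; a direct check of the canonical form confirms $a_1=0,\ a_2=1$ in all cases $N\ge 1$. Assembling these three computations with the preliminary root-count gives the theorem.
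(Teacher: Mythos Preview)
Your approach is the paper's: reduce to whether $\sqrt{D(\theta,1)}$ exists via Lemma~\ref{quadrat}, and run the same three-case computation (i)--(iii) that precedes the theorem. The $p=2$, $p=3$, and $p\ge 5$ analyses you give match the paper's essentially line for line, and your extra bookkeeping about which $p$-adic digits of $D(\theta,1)$ are affected by the tail $-2\theta+\theta^2$ is a welcome clarification the paper leaves implicit.

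There is, however, a genuine error in your distinctness argument. You claim that $x_1=x_2=1$ would force $\theta\in\{-1,3\}$, ``neither of which is of the form $p^N$ with $N\ge 1$''. But $\theta=3=3^1$ \emph{is} of this form, exactly in the case $p=3$, $N=1$. Indeed, for $p=3$, $N=1$ one has $D(\theta,1)=0$ and \eqref{eq13} becomes $x^2-2x+1=(x-1)^2$, so the full cubic is $(x-1)^3$ and $x_0=x_1=x_2=1$; there is only \emph{one} translation-invariant solution, not three. More generally, substituting $x=1$ into \eqref{eq13} yields $-D(\theta,1)$, so $x=1$ is a root of the quadratic precisely when $D=0$, and in that case it is the double root. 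Thus your trichotomy should read: $D$ a nonzero square $\Rightarrow$ three distinct solutions; $D=0$ $\Rightarrow$ one solution; $D$ a non-square $\Rightarrow$ one solution. This also resolves the inconsistency you created between ``$D=0$ forces exactly one extra solution'' and your asserted ``$1$ or $3$'' dichotomy. The paper itself does not check distinctness at all and simply declares three measures whenever $\sqrt{D}$ exists; your more careful analysis has in fact exposed an imprecision in the statement of part~(ii).
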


{\sc Case $q\geq 2$}. This case corresponds to $q+1$-state Potts
model. Here we shall again consider several cases, but with respect
to $p$.

\begin{enumerate}
\item[(i)] Let $p=2$. Let us represent $q$ in a 2-adic form, i.e.
$$
q=k_0+k_12+\cdots+k_s2^s, \ \ \ s \geq 1.
$$
Then we have
$$
-4q=2^2\big((2-k_0)+(1-k_1)2+\cdots+(1-k_s)2^s\big).
$$
Therefore, one has
$$
 D(\t,q)=1+2^2\big((2-k_0)+(1-k_1)2+\cdots+(1-k_s)2^s\big)-2^{N+1}+2^{2N}.
$$
Now according to Lemma \ref{quadrat} we conclude that
$\sqrt{D(\t,q)}$ exists if and only if $k_0=0$, which is equivalent
to $|q|_2\leq 1/2$.

\item[(ii)] Let $p=3$. We represent $q$ in a 3-adic form, i.e.
$$
q=k_0+k_13+\cdots+k_s3^s, \ \ \ s \geq 0.
$$
Then we have
\begin{eqnarray*}
D(\t,q)&=&1-q-q\cdot 3-2\cdot 3^N+3^{2N}\nonumber \\
&=&1+(3-k_0)+(2-k_1)3+\cdots(2-k_s)3^s-q\cdot 3-2\cdot 3^N+3^{2N}.
\end{eqnarray*}

If $k_0=0$, then from Lemma \ref{quadrat} one can see that
$\sqrt{D(\t,1)}$ exists.

If $k_0=2$, then $\sqrt{D(\t,q)}$ does not exists, since $x^2\equiv
2(\textrm{mod}\ 3)$ has no solution in $\bz$.

 If $k_0=1$, then this case
is more complicated. We cannot provide certain rule to check the
existence of $\sqrt{D(\t,q)}$. But in this case, $\sqrt{D(\t,q)}$
may exist or may not. For example, if $k_1\neq 2$ then
$\sqrt{D(\t,q)}$ does not exist whenever $N\geq 3$. If $k_1=2$ and
$k_2=2$ then $\sqrt{D(\t,q)}$ exists whenever $N\geq 4$.

\item[(iii)] Let $p\geq 5$. Let us  represent $q$ in a $p$-adic expression
$$q=k_0+k_1p+\cdots+k_sp^s, \ \ \ s \geq 0.
$$
Then we have
$$
D(\t,1)=1+4(p-k_0)+4(p-1-k_1)p+\cdots+4(p-1-k_s)p^s-2p^N+p^{2N}.
$$
So, according to Lemma \ref{quadrat} $\sqrt{D(\t,q)}$ exists if the
equation $x^2\equiv 1-4k_0(\textrm{mod}\ p)$ has a solution in $\bz$
whenever $1-4k_0$ is not divided by $p$.  It is clear that if
$k_0=0$ then the equation has a solution for any value of $p$
($p\geq 5$). Note that if $1-4k_0$ is divided by $p$, then
$\sqrt{D(\t,q)}$ does not exist.

If $p=5$  and $k_0=3$, then one can check that $x^2\equiv
-11(\textrm{mod}\ 5)$ has a solution $x=5n+2$. So, in this case
$\sqrt{D(\t,q)}$ exists.
\end{enumerate}

So, we have the following

\begin{thm}\label{q2} Let $N\geq 1$ and $q\geq 2$ (ferromagnetic Potts model). Then the following
assertions hold true:
\begin{enumerate}
\item[(i)] If $|q|_p<1$, then there are three translation-invariant $p$-adic quasi Gibbs measures $\m_0$, $\m_1$ and $\m_2$;
\item[(ii)] Let $p=3$. If $|q-2|_p<1$, then there is a unique translation-invariant $p$-adic quasi Gibbs measure $\m_0$; if $|q-1|_p<1$ there is at least one translation-invariant $p$-adic quasi Gibbs measures $\m_0$;
\item[(iii)] Let $p\geq 5$ and $|4q-1|_p<1$, then there is a unique translation-invariant $p$-adic quasi Gibbs measure $\m_0$;
\end{enumerate}
\end{thm}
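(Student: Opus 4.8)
The plan is to reduce the classification of translation-invariant $p$-adic quasi Gibbs measures to a fixed-point count for the map $f$ of \eqref{f(x)}, and then to settle that count by feeding $D(\theta,q)=\theta^2-2\theta-4q+1$ into the square-root criterion of Lemma~\ref{quadrat}.

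By Theorem~\ref{comp1}, a translation-invariant solution $\h$ of \eqref{eq1} lying, after the normalization \eqref{H}, on an invariant line $(\underbrace{1,\dots,1,h}_m,1,\dots,1)$ amounts to a solution $x=h$ of $x=f(x)$, and, as already noted in the text, every such solution in $\bq_p$ produces a genuine translation-invariant quasi Gibbs measure via \eqref{mu}. Since $x=f(x)$ is equivalent to $(x-1)\big(x^2+(2\theta-\theta^2+2q-1)x+q^2\big)=0$ with $\theta=p^N$, $N\ge 1$, the root $x_0=1$ always produces $\mu_0$, whereas the two roots $x_{1,2}$ of \eqref{eq13}, written out in \eqref{eq14}, lie in $\bq_p$ exactly when $\sqrt{D(\theta,q)}$ exists there. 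So the theorem reduces to (a) deciding, in each of the three regimes on $q$, whether $D(\theta,q)$ is a square in $\bq_p$, and (b) when it is, checking $x_1\ne x_2$ and $x_{1,2}\ne 1$ (the latter being equivalent to $(q+1)^2\ne(\theta-1)^2$) so that $\mu_1,\mu_2$ are genuinely new.

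For (a) one expands $D(\theta,q)=(\theta-1)^2-4q$ in base $p$. In case (i), $p\mid q$: each of $-4q$, $-2p^N$, $p^{2N}$ is $\equiv 0\pmod{p}$ (and $\equiv 0\pmod{8}$ when $p=2$, since $-2^{N+1}+2^{2N}\equiv 0\pmod{8}$ for every $N\ge 1$), so $D\equiv 1\pmod{p}$, resp. $\pmod{8}$; as $1$ is a quadratic residue, Lemma~\ref{quadrat} gives $\sqrt{D}\in\bq_p$, and $D\ne 0$ forces $x_1\ne x_2$, so three measures arise (for $p$ odd one checks $x_{1,2}\ne 1$ directly from $(q+1)^2\ne(\theta-1)^2$). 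In case (ii), $p=3$: if $q\equiv 2\pmod{3}$ then $D\equiv -4q+1\equiv 2\pmod{3}$, a non-residue, so Lemma~\ref{quadrat} rules out $\sqrt{D}$ and only $\mu_0$ survives; if $q\equiv 1\pmod{3}$ then $D\equiv 0\pmod{3}$, and --- as the worked examples in the text already show --- the parity of $\gamma(D)$ and its leading digit depend on the higher $3$-adic digits of $q$ and on $N$, so one can assert no more than the (automatic) existence of $\mu_0$. In case (iii), $p\ge 5$ with $4q\equiv 1\pmod{p}$: then $1-4k_0\equiv 0\pmod{p}$, so $p\mid D$, and inspecting the base-$p$ expansion of $D$ (whose $p^1$-coefficient involves the unit $(1-4k_0)/p+4\in\{1,2,3\}$) one finds that $\sqrt{D}\notin\bq_p$, the criterion of Lemma~\ref{quadrat} failing, so again only $\mu_0$ survives.

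The main obstacle is precisely this $p$-adic digit bookkeeping. Two points are genuinely delicate. First, for $p=2$ one cannot stop at residues mod $4$: Lemma~\ref{quadrat} imposes a condition mod $8$, so one must carry the $-2^{N+1}$ and $+2^{2N}$ terms together and verify their cancellation for all $N\ge 1$. Second, in the borderline regime where the leading $p$-adic digit of $q$ already forces $p\mid D$, one must pin down $\gamma(D)$ exactly --- or at least its parity --- together with the leading nonzero digit; this is exactly where the subcase $p=3$, $q\equiv 1\pmod{3}$ does not admit a uniform answer, which is why part (ii) of the theorem claims only the (automatic) existence of $\mu_0$ in that subcase.
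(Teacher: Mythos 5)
Your reduction and digit analysis follow the paper's own route essentially verbatim: translation\-/invariant measures beyond $\mu_0$ correspond to roots of \eqref{eq13}, which lie in $\bq_p$ iff $\sqrt{D(\theta,q)}$ does, and this is decided by Lemma~\ref{quadrat} applied to the base-$p$ expansion of $D=(\theta-1)^2-4q$. Parts (i) and (ii) are handled exactly as in the paper (your mod-$8$ check for $p=2$ and the observation that $D\equiv 2\pmod 3$ when $q\equiv 2\pmod 3$ are the paper's own computations), and your added verification that $x_1\neq x_2$ and $x_{1,2}\neq 1$ is a point the paper glosses over; note, though, that your criterion $(q+1)^2\neq(\theta-1)^2$ is only checked for odd $p$, and for $p=2$, $q=2^N-2$ one of the roots of \eqref{eq13} actually equals $1$.

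The decisive step in part (iii), however, is a genuine gap --- and it is the same gap the paper has. Both you and the paper simply assert that $p\mid(1-4k_0)$, i.e. $|4q-1|_p<1$, forces $\sqrt{D}\notin\bq_p$; your parenthetical about the $p^1$-coefficient does not establish this, because once $p\mid D$ one must control the exact order $\gamma(D)$ (in particular its parity) and the leading nonzero digit, and these depend on the higher $p$-adic digits of $q$ and on $N$. In fact the assertion is false: take $p=5$, $N=1$, $q=29$. Then $|4q-1|_5=|115|_5=1/5<1$, but $D=\theta^2-2\theta-4q+1=25-10-116+1=-100=5^2\cdot(-4)$ has even order and leading digit congruent to $1\pmod 5$, a quadratic residue, so $\sqrt{D}\in\bq_5$ and \eqref{eq13} has the two roots $-21\pm 20\sqrt{-1}$ in $\bq_5$, distinct from each other and from $1$; hence three translation-invariant measures arise and the claimed uniqueness fails. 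So part (iii) cannot be obtained by this argument as it stands (and, for the uniqueness claims in (ii) and (iii), neither your argument nor the paper's rules out translation-invariant solutions of \eqref{eq11} off the invariant lines). A correct treatment of (iii) would have to pin down $\gamma(D)$ --- exactly the ``delicate point'' you flag in your closing paragraph but do not carry out.
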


\subsection{Antiferromagnetic case}
Now suppose that $N<0$. Denoting $\bar N=-N$, one has $\t=p^{-\bar
N}$. Therefore, $D(\t,q)$ can be represented as follows:
\begin{equation}\label{aD1}
D(\t,q)=p^{-\bar N}\big(1-2p^{\bar N}-4qp^{2\bar N}+p^{2\bar N}\big).
\end{equation}
Hence, due to Lemma \ref{quadrat} we conclude that $\sqrt{D(\t,q)}$
exists for all values of prime $p$, but $\bar N$ should be even.

\begin{thm}\label{aq2} Let $N<0$ and $q\geq 1$ (antiferromagnetic Potts model). If
$-N$ is even, than, for the model \eqref{Potts}, there are three
translation-invariant $p$-adic quasi Gibbs measures $\m_0$, $\m_1$
and $\m_2$.
\end{thm}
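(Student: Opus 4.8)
The plan is to run the same reduction used just above: a translation-invariant $p$-adic quasi Gibbs measure on the invariant line $(1,\dots,1,h,1,\dots,1)$ corresponds to a solution of the fixed-point equation \eqref{eq12}, i.e. to $x_0=1$ together with the roots of \eqref{eq13}. So it suffices to show that when $-N$ is even the quadratic \eqref{eq13} has two roots in $\bq_p$ and that the three resulting solutions of \eqref{eq12} are pairwise distinct; the root $x_0=1$ always yields the measure $\m_0$ via Theorem \ref{comp1}.

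First I would analyze the discriminant. Put $\bar N=-N\ge 1$ and $\theta=p^{-\bar N}$, so that $|\theta|_p=p^{\bar N}>1$; as in \eqref{aD1},
\[
D(\theta,q)=p^{-\bar N}\,u,\qquad u=1-2p^{\bar N}-4qp^{2\bar N}+p^{2\bar N}.
\]
Since $\bar N\ge 1$, every term of $u$ other than the leading $1$ is divisible by $p$, so $u$ is a $p$-adic unit with $u\equiv 1\ (\textrm{mod}\ p)$; thus in the canonical form \eqref{canonic} one has $\gamma(D)=-\bar N$ with leading digit $a_0=1$. Now apply Lemma \ref{quadrat}. Condition (i) requires $\gamma(D)=-\bar N$ to be even, which is exactly the hypothesis that $-N$ is even. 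Condition (ii) for $p\ne 2$ requires $a_0=1$ to be a quadratic residue modulo $p$, which always holds; for $p=2$ it requires $a_1=a_2=0$, and here evenness of $-N$ forces $\bar N\ge 2$, so every term of $u-1$ has $2$-adic absolute value $\le 2^{-3}$, i.e. $u\equiv 1\ (\textrm{mod}\ 8)$, giving $a_1=a_2=0$. Hence $\sqrt{D(\theta,q)}\in\bq_p$, the numbers $x_{1,2}$ of \eqref{eq14} are well-defined elements of $\bq_p$ solving \eqref{eq13}, and by Theorem \ref{comp1} they give measures $\m_1,\m_2$.

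It then remains to check that $x_0,x_1,x_2$ are distinct, which is where a little care is needed. Since $u\ne 0$ we have $D(\theta,q)\ne 0$, and $\theta\ne 1$ because $\bar N\ge 1$; therefore $(\theta-1)\sqrt{D(\theta,q)}\ne 0$, so $x_1\ne x_2$. If $x=1$ were a root of \eqref{eq13}, substituting would force $\theta^2-2\theta=q(q+2)$, whose right-hand side is a positive integer (as $q\ge1$) while its left-hand side $p^{-2\bar N}-2p^{-\bar N}$ has $p$-adic valuation $-2\bar N<0$ — impossible. So $x_1,x_2\ne 1=x_0$, the three solutions of \eqref{eq12} are pairwise distinct, they determine distinct normalized vectors $\hat\h$, and hence $\m_0,\m_1,\m_2$ are three distinct translation-invariant $p$-adic quasi Gibbs measures. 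The only genuinely delicate point is the $p=2$ case of Lemma \ref{quadrat}, which is precisely why the hypothesis is that $-N$ be even rather than merely that $\sqrt{D(\theta,q)}$ formally make sense; all other steps are routine once the discriminant is put in the form \eqref{aD1}.
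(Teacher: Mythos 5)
Your overall route is the same as the paper's: factor the discriminant as in \eqref{aD1}, invoke Lemma \ref{quadrat}, and read off the two extra roots of \eqref{eq13}. The paper's own argument is only the two sentences surrounding \eqref{aD1}, so your added verification of condition (ii), your treatment of $p=2$, and your check that $x_0,x_1,x_2$ are pairwise distinct (hence give three genuinely different measures) are welcome details that the paper omits.

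There is, however, one concrete error, inherited from the paper: the identity \eqref{aD1} as printed, and as you reproduce it, is false. Multiplying out, $p^{-\bar N}\big(1-2p^{\bar N}-4qp^{2\bar N}+p^{2\bar N}\big)=p^{-\bar N}-2-4qp^{\bar N}+p^{\bar N}$, which is not $\t^2-2\t-4q+1$ with $\t=p^{-\bar N}$. The correct factorization is $D(\t,q)=p^{-2\bar N}\big(1-2p^{\bar N}-4qp^{2\bar N}+p^{2\bar N}\big)$, so $\g(D)=-2\bar N$, not $-\bar N$. Consequently condition (i) of Lemma \ref{quadrat} holds for \emph{every} $\bar N\geq 1$, not only for even $\bar N$, and your diagnosis that evenness of $-N$ is what rescues the $p=2$ case is also not right: for $p=2$ and $\bar N=1$ the unit part equals $1-4+4(1-4q)=1-16q\equiv 1\ (\textrm{mod}\ 8)$, so $\sqrt{D(\t,q)}$ exists there as well. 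Your proof of the theorem \emph{as stated} still goes through, because under the hypothesis every fact you actually need ($\g(D)$ even, unit part $\equiv 1$ modulo $p$, respectively modulo $8$) is true; but the step asserting $\g(D)=-\bar N$ is wrong as written, and with the corrected valuation the parity hypothesis turns out to be superfluous for the existence of the square root.
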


\section{Behavior of the dynamical system \eqref{f(x)}}

In this section we are going to investigate the dynamical system
given by \eqref{f(x)}. In the previous section, we have established
some conditions for the existence of its fixed points. In the
sequel, we are going to describe possible attractors of the system,
which allows us to find a relation between behavior of that
dynamical system and the phase transitions. In what follows, for the
sake of simplicity, we always assume that $p\geq 3$.

From \eqref{f(x)} we easily find the following auxiliary facts:
\begin{equation}\label{df(x)}
f'(x)=\bigg(\frac{\t
x+q}{x+\t+q-1}\bigg)^2\cdot\frac{2(\t-1)(\t+q)}{(\t x+q)(x+\t+q-1)};
\end{equation}
\begin{eqnarray}\label{f-xy}
|f(x)-f(y)|_p=\frac{|\t-1|_p|\t+q|_p|x-y|_p|\eta(\t,q;x,y)|_p}{|x+\t+q-1|_p^2|y+\t+q-1|_p^2},
\end{eqnarray}
where
\begin{equation}\label{eta}
\eta(\t,q;x,y)=A\t(x+y)+2\t xy+2qA+q(x+y),
\end{equation}
here $A=\t+q-1$. Furthermore, we assume that $f(x)$ has three fixed
points, the existence such points has been investigated in section
4. We denote them as follows $x_0$, $x_1$, $x_2$. Note that $x_0=1$.
For the fixed points $x_1$ and $x_2$, from \eqref{eq13} we find that
\begin{equation}\label{vieta-h}
x_1+x_2=-2q+1+\t^2-2\t, \ \ \ x_1\cdot x_2=q^2.
\end{equation}

To study dynamics of $f$ we shall consider two different settings
with respect to ferromagnetic and antiferrimagmetic ones.

\subsection{Ferromagnetic case} In this setting, we suppose that $|\t|_p<1$,
moreover $|\t|_p\leq |q|^2$ if $|q|_p<1$.

\begin{lem}\label{fx12-p} Let $x_1$ and $x_2$ be the fixed points of
$f(x)$. Then the followings hold true:
\begin{eqnarray}\label{fx-12-q1}
&&|x_1|_p=|q|^2_p, \ \ |x_2|_p=1, \ \ \textrm{if} \ \ |q|_p<1,\\[2mm]
\label{fx-12-q2} &&|x_1|_p=1, \ \ |x_2|_p=1, \ \ \textrm{if} \ \
|q|_p=1.
\end{eqnarray}
\begin{eqnarray}\label{fx1-q}
&&|\t x_1+q|_p=|q|_p, \ \ |x_1+\t+q-1|_p=1, \ \ \textrm{if} \ \ |q|_p<1,\\[2mm]
\label{fx2-q}
&&|\t x_2+q|_p=|q|_p, \ \ |x_2+\t+q-1|_p=|q|_p, \ \ \textrm{if} \ \ |\t|_p\leq |q|_p^2, |q|_p<1,\\[2mm]
\label{fx12-q} &&|\t x_i+q|_p=1, \ \ |x_i+\t+q-1|_p=1, \ \
\textrm{if} \ \ |q|_p=1, \ i=1,2.
\end{eqnarray}
\end{lem}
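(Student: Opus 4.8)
The plan is to obtain all six identities from two ingredients: the Vieta relations
$$x_1+x_2=1-2q+\t^2-2\t,\qquad x_1x_2=q^2$$
from \eqref{vieta-h}, and the fixed-point identity
$$x_i\,(x_i+\t+q-1)^2=(\t x_i+q)^2,\qquad i=1,2,$$
which is just \eqref{eq12} combined with \eqref{f(x)} evaluated at $x_i$. Throughout one uses the standing hypotheses of this subsection, $|\t|_p<1$ and, when $|q|_p<1$, also $|\t|_p\le|q|_p^2$, the ultrametric inequality, and the fact that $|2|_p=1$ for $p\ge3$, so that $|2q|_p=|q|_p$ and $|2\t|_p=|\t|_p$. (The roots $x_1,x_2$ lie in $\bq_p$ by the hypothesis of the preceding section that $f$ has three fixed points.)

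\emph{Step 1 (the values $|x_1|_p,|x_2|_p$).} Since $|2q|_p,|\t|_p,|\t|_p^2$ are each $\le1$ while the constant $1$ is present in the sum, the Vieta sum gives $|x_1+x_2|_p=1$ if $|q|_p<1$ and $|x_1+x_2|_p\le1$ if $|q|_p=1$, while always $|x_1x_2|_p=|q|_p^2$. Assume $|q|_p<1$: the product being $<1$ excludes both roots having absolute value $\ge1$; the sum having absolute value $1$ excludes both having absolute value $<1$; and a root of absolute value $>1$ would differ in absolute value from the other and force $|x_1+x_2|_p>1$. Hence exactly one root has absolute value $1$ and the other has absolute value $|x_1x_2|_p=|q|_p^2$, which, after choosing the labelling, is \eqref{fx-12-q1}. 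If instead $|q|_p=1$, then $|x_1|_p\neq|x_2|_p$ would put one of them strictly above $1$ (as $|x_1x_2|_p=1$) and contradict $|x_1+x_2|_p\le1$; so $|x_1|_p=|x_2|_p$, and both equal $1$, giving \eqref{fx-12-q2}.

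\emph{Step 2 (the remaining quantities).} If $|q|_p<1$ then $|\t x_1|_p=|\t|_p|q|_p^2<|q|_p$ and $|\t x_2|_p=|\t|_p\le|q|_p^2<|q|_p$, so $|\t x_i+q|_p=|q|_p$; if $|q|_p=1$ then $|\t x_i|_p=|\t|_p<1$, so $|\t x_i+q|_p=1$. Inserting these into the fixed-point identity written as $|x_i|_p\,|x_i+\t+q-1|_p^2=|\t x_i+q|_p^2$ and using Step 1: for $|q|_p<1$ one gets $|q|_p^2\,|x_1+\t+q-1|_p^2=|q|_p^2$ and $|x_2+\t+q-1|_p^2=|q|_p^2$, hence $|x_1+\t+q-1|_p=1$ and $|x_2+\t+q-1|_p=|q|_p$; for $|q|_p=1$ one gets $|x_i+\t+q-1|_p^2=1$. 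These are precisely \eqref{fx1-q}, \eqref{fx2-q}, \eqref{fx12-q}.

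The only step that is not a bare ultrametric estimate is $|x_2+\t+q-1|_p$ in the regime $|q|_p<1$: since $|x_2|_p=1$ the summands $x_2$ and $-1$ may cancel, so a term-by-term bound is useless and one must route through the fixed-point identity — and it is exactly here that the extra hypothesis $|\t|_p\le|q|_p^2$ is consumed, namely to force $|\t x_2+q|_p=|q|_p$. I expect this to be the only genuine obstacle; everything else is routine $p$-adic bookkeeping built on \eqref{vieta-h}.
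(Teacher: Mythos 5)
Your proof is correct, and its skeleton coincides with the paper's: both start from the Vieta relations \eqref{vieta-h} to pin down $|x_1|_p$ and $|x_2|_p$ (your case analysis just makes explicit the ``without loss of generality'' step the paper leaves implicit), and both then read off the remaining moduli by ultrametric estimates. The one genuine difference is how $|x_2+\t+q-1|_p=|q|_p$ is obtained when $|q|_p<1$: the paper rearranges the Vieta sum to write $x_2+\t+q-1=\t^2-\t-q-x_1$ and estimates term by term (using $|\t|_p\le|q|_p^2$ and $|x_1|_p=|q|_p^2$ to isolate $-q$ as the dominant term), whereas you route every denominator modulus through the single identity $|x_i|_p\,|x_i+\t+q-1|_p^2=|\t x_i+q|_p^2$ coming from \eqref{eq12} — a device the paper itself deploys, but only in the $|q|_p=1$ case. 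Your version is the more uniform of the two and correctly isolates where the hypothesis $|\t|_p\le|q|_p^2$ enters (namely in forcing $|\t x_2+q|_p=|q|_p$, exactly as in \eqref{fx2-q}); the paper's version has the minor advantage of exhibiting the explicit linear relation between $x_2+\t+q-1$ and $x_1$, which it reuses later as \eqref{fx2-p}. Either route is sound.
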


\begin{proof}
 First assume that $q$ is divided by $p$, i.e. $|q|_p\leq 1/p$.
Note that, in this case, according to Theorem \ref{q2} there exist
the solutions $x_1$ and $x_2$. Hence, from \eqref{vieta-h} we
conclude that $|x_1+x_2|_p=1$ and $|x_1\cdot x_2|_p=|q^2|_p$. From
the last equalities, without loss of generality, it yields that
\eqref{fx-12-q1}. Hence, we immediately  obtain \eqref{fx1-q}.
 The equality \eqref{vieta-h} implies that
$$
x_2-1=\t^2-2\t-2q-x_1.
$$
This with the strong triangle inequality and \eqref{fx-12-q1} yields
\begin{eqnarray*}
&&|x_2+\t+q-1|_p=|\t^2-\t-q-x_1|_p=|q|_p,\\
&&|\t x_2+q|_p=|q|_p,
\end{eqnarray*}
if $|\t|_p\leq |q|^2_p$.

Now suppose that $|q|_p=1$, and there exist solutions $x_1$ and
$x_2$. Note that, in general, the solutions may not exist (see
Theorems \ref{q1} and \ref{q2}).  Then from \eqref{vieta-h} we find
that
\begin{eqnarray}\label{n-h-12-2}
|x_1+x_2|_p\leq 1,\\
\label{n-h-12-3} |x_1\cdot x_2|_p=1.
\end{eqnarray}
In this case, one has $|x_1|_p=1$, $|x_2|_p=1$. Indeed, assume that
$|x_1|_p<1$, then the equality \eqref{n-h-12-3} yields $|x_2|_p>1$.
Due to the strong triangle inequality we get $|x_1+x_2|_p>1$ which
contradicts to \eqref{n-h-12-2}.

So, due to $|\t|_p<1$,  we have $|\t x_i+q|_p=1$. On the other hand,
we know that $x_i$ ($i=1,2$) are solutions \eqref{eq12}, therefore,
from \eqref{eq12} one gets
\begin{equation*}\label{e-h-12}
|x_i+\t+q-1|_p^2=\frac{|\t x_i+q|_p^2}{|x_i|_p}=1.
\end{equation*}
This completes the proof.
\end{proof}

Let us find behavior of the fixed points. From \eqref{df(x)} we find
\begin{equation}\label{fdf-x0}
|f'(x_0)|_p=\bigg|\frac{\t-1}{\t+q}\bigg|_p= \left\{
\begin{array}{ll}
1, \ \ \ \  \textrm{if}\ |q|_p=1,\\[2mm]
1/|q|_p, \ \ \textrm{if}\ |q|_p<1.
\end{array}
\right.
\end{equation}

Let us consider the other fixed points. Again from \eqref{df(x)} one
gets
\begin{equation}\label{fdf-xi}
|f'(x_i)|_p=\frac{|x_i|_p|\t-1|_p|\t+q|_p}{|\t
x_i+q|_p|x_i+\t+q-1|_p}, \ \ \ i=1,2.
\end{equation}

Now taking into account \eqref{fx-12-q1}--\eqref{fx12-q} we derive
\begin{eqnarray*}
&&|f'(x_1)|_p= \left\{
\begin{array}{ll}
1, \ \ \ \  \textrm{if}\ |q|_p=1,\\[2mm]
|q|_p^2, \ \ \textrm{if}\ |q|_p<1,
\end{array}
\right.\\[2mm]
&&|f'(x_2)|_p= \left\{
\begin{array}{ll}
1, \ \ \ \  \textrm{if}\ |q|_p=1,\\[2mm]
1/|q|_p, \ \ \textrm{if}\ |q|_p<1,
\end{array}
\right.\\[2mm]
\end{eqnarray*}

Consequently, one has

\begin{prop} Let $|\t|_p<1$ and assume that the dynamical system $f$ given by \eqref{f(x)} has three fixed points $x_0$,$x_1$, $x_2$.
Then the following assertions hold true:
\begin{enumerate}
\item[(i)] if $|q|_1=1$, then the fixed points are neutral;

\item[(ii)] if $|q|_p<1$ and $|\t|_p\leq|q|^2_p$, then $x_1$ is attractive, and $x_0$,$x_2$ are repelling.
\end{enumerate}
\end{prop}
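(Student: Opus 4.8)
The plan is to read off both assertions directly from the computed quantities $|f'(x_0)|_p$, $|f'(x_1)|_p$, $|f'(x_2)|_p$ from \eqref{fdf-x0}, \eqref{fdf-xi} together with Lemma~\ref{fx12-p}, and then invoke the classification of fixed points at the end of Section~2.2 ($0\le|\l|_p<1$ attractive, $|\l|_p=1$ indifferent/neutral, $|\l|_p>1$ repelling). The nontrivial content is repackaging the multiplier estimates in a form usable for the notions of \emph{attractive}, \emph{repelling}, and \emph{neutral}; the word ``repelling'' in Section~2.2 is phrased pointwise as $|f(x)-x^{(0)}|_p>|x-x^{(0)}|_p$ in a neighborhood, so that part cannot be taken purely from the multiplier and needs the sharper two-point formula \eqref{f-xy}.

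For assertion (i), when $|q|_p=1$ all three computations give $|f'(x_i)|_p=1$ for $i=0,1,2$, so each fixed point is indifferent by definition, i.e. neutral. (I would keep the hypothesis as written, $|q|_1=1$, evidently meaning $|q|_p=1$.) For assertion (ii), assume $|q|_p<1$ and $|\t|_p\le|q|_p^2$. Then \eqref{fdf-x0} gives $|f'(x_0)|_p=1/|q|_p>1$, and the second block of displayed computations gives $|f'(x_2)|_p=1/|q|_p>1$ and $|f'(x_1)|_p=|q|_p^2<1$. By the classification, $x_1$ is attractive. To conclude $x_0$ and $x_2$ are repelling in the strong sense of Section~2.2, I would substitute $y=x_i$ ($i=0,2$) into \eqref{f-xy} and use Lemma~\ref{fx12-p}: the prefactor $|\t-1|_p|\t+q|_p/\big(|x+\t+q-1|_p^2|x_i+\t+q-1|_p^2\big)$ equals $1/|q|_p$ (resp. $1/|q|_p$ times a unit) for $x$ in a small enough ball around $x_i$, since $|x+\t+q-1|_p=1$ there when $i=0$ and $=|q|_p$ when $i=2$; one then checks that $|\eta(\t,q;x,x_i)|_p$ stabilises to its value at $x=x_i$ on that ball, giving $|f(x)-x_i|_p=|q|_p^{-1}|x-x_i|_p>|x-x_i|_p$, which is exactly the repelling condition. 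The one point requiring care is that the ball must be chosen small enough that the strong triangle inequality pins down all the relevant valuations; this is routine once Lemma~\ref{fx12-p} is in hand.

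The main obstacle I anticipate is purely bookkeeping: verifying that $|\eta(\t,q;x,x_i)|_p$ is constant on a neighborhood of each repelling fixed point, which requires plugging $x_0=1$ (and the Vieta data \eqref{vieta-h} for $x_2$) into \eqref{eta}, expanding $\eta$, and bounding each monomial's valuation under the standing assumptions $|\t|_p<1$, $|q|_p<1$, $|\t|_p\le|q|_p^2$. No genuinely new idea is needed beyond Lemma~\ref{fx12-p} and the ultrametric inequality; the argument is a direct computation feeding into the Section~2.2 definitions. I would therefore present the proof as: (1) recall the multiplier values; (2) conclude neutrality in case~(i) and attractivity of $x_1$ in case~(ii) from the classification; (3) for the repelling claim in case~(ii), specialise \eqref{f-xy} and use Lemma~\ref{fx12-p} to get the strict expansion estimate on a small ball.
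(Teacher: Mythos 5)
Your proposal is correct and follows essentially the same route as the paper: the paper simply records the multiplier values from \eqref{fdf-x0} and \eqref{fdf-xi} together with Lemma \ref{fx12-p} and invokes the $|\l|_p$-based classification of Section 2.2, under which ``repelling'' already means $|\l|_p>1$, so your extra two-point verification via \eqref{f-xy} is not required for this statement (it is essentially what the paper carries out later, in steps (V) and (VII) of the proof of Theorem \ref{f-att}). One bookkeeping slip in that extra step: near $x_0=1$ one has $|x+\t+q-1|_p=|\t+q|_p=|q|_p$, not $1$ (it is $x_1$ for which $|x_1+\t+q-1|_p=1$ by \eqref{fx1-q}); this does not affect the resulting expansion estimate $|f(x)-x_0|_p=|x-x_0|_p/|q|_p$.
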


Furthermore, we concentrate ourselves to the case $|q|_p<1$, which
is more interesting.

For a given set $B\subset\bq_p$, let us denote
\begin{equation}\label{JB}
J(B)=\{x\in S_1(0) :\ f^n(x)\in B \ \ \textrm{for some} \ n\geq 0\}.
\end{equation}

\begin{thm}\label{f-att} Let $|q|_p<1$, and $|\t|_p\leq|q|^2_p$. Then one has
$$
A(x_1)\supset \{x\in\bq_p:\ |x|_p\neq 1\}\cup \{x\in S_1(0):\
|x-1|_p>|q|_p\}\cup J(B_{|q|^2_p,|q|_p}(x_0))\cup
J(B_{|q|^2_p,|q|_p}(x_2))
$$
\end{thm}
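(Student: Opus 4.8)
The plan is to analyze the dynamical system $f$ on the sphere $S_1(0)$ together with its complement, showing that every point listed on the right-hand side lands (eventually) in a small ball around the attractive fixed point $x_1$, which by Lemma \ref{fx12-p} satisfies $|x_1|_p=|q|_p^2$ and $|\t x_1+q|_p=|q|_p$, $|x_1+\t+q-1|_p=1$. The key computational tool is the distance formula \eqref{f-xy}: for $x,y\in\bq_p$,
\[
|f(x)-f(y)|_p=\frac{|\t-1|_p|\t+q|_p|x-y|_p|\eta(\t,q;x,y)|_p}{|x+\t+q-1|_p^2|y+\t+q-1|_p^2},
\]
and since $|\t|_p<1$, $|q|_p<1$ we have $|\t-1|_p=1$ and $|\t+q|_p=|q|_p$. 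First I would treat $\{x:|x|_p\neq1\}$. If $|x|_p>1$ then $|f(x)|_p=|\t x+q|_p^2/|x+\t+q-1|_p^2=|\t x|_p^2/|x|_p^2=|\t|_p^2<1$; so one step pushes such $x$ into $\bar B_{|\t|_p^2}(0)$, and a further analysis (using $|f(x)|_p=|q|_p^2$ once $|x|_p<1$, coming from $|\t x+q|_p=|q|_p$, $|x+\t+q-1|_p=1$) shows the orbit lands on the sphere $S_{|q|_p^2}(0)$, which is exactly the sphere containing $x_1$. If $|x|_p<1$, similarly $f(x)$ lands on $S_{|q|_p^2}(0)$ in one step. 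So it suffices to prove that every point of $S_{|q|_p^2}(0)$ converges to $x_1$, and to handle the points of $S_1(0)$ described in the last three pieces of the union by showing they are eventually mapped into $S_{|q|_p^2}(0)$ (equivalently into a neighborhood of $x_1$).

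Next I would establish the local contraction at $x_1$: take $y=x_1$ in the formula. Using the Lemma's estimates, for $x\in S_{|q|_p^2}(0)$ one has $|x+\t+q-1|_p=1$ (since the dominant term is the constant $-1$ modulo lower-order contributions, as $|x|_p,|\t|_p,|q|_p<1$), hence
\[
|f(x)-x_1|_p=|q|_p^2\,|x-x_1|_p\,|\eta(\t,q;x,x_1)|_p .
\]
A short inspection of $\eta(\t,q;x,x_1)=A\t(x+x_1)+2\t x x_1+2qA+q(x+x_1)$ with $A=\t+q-1$ shows $|\eta|_p\le|q|_p<1$ (every term is divisible by $q$ or by $\t$, and the terms carrying $q$ alone are $2qA+q(x+x_1)$ with $|A|_p=1$, giving $|q|_p$ exactly). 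Therefore $|f(x)-x_1|_p\le|q|_p^3|x-x_1|_p$ on all of $S_{|q|_p^2}(0)$, so $f$ is a strict contraction there with ratio $\le|q|_p^3$, and $S_{|q|_p^2}(0)\subset A(x_1)$. This also shows $S_{|q|_p^2}(0)$ is $f$-invariant: $|f(x)|_p=|f(x)-x_1+x_1|_p=|x_1|_p=|q|_p^2$ since $|f(x)-x_1|_p<|q|_p^2$.

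Finally, the remaining pieces. For $x\in S_1(0)$ with $|x-1|_p>|q|_p$, I would show directly that $|f(x)|_p\neq1$, forcing the orbit off $S_1(0)$ and hence (by the first step) into $S_{|q|_p^2}(0)\subset A(x_1)$. Concretely, write $f(x)-1=\big((\t x+q)^2-(x+\t+q-1)^2\big)/(x+\t+q-1)^2$ and factor the numerator; one gets a multiple of $(x-1)$ times a factor controlled by the $p$-adic sizes, and the hypothesis $|x-1|_p>|q|_p$ together with $|x|_p=1$ makes $|f(x)-1|_p$ strictly larger than (or at any rate different from) what is compatible with staying on $S_1(0)$ with $|f(x)-1|_p\le|q|_p$ — more precisely, it shows $x\notin A(x_0)\cup A(x_2)$ and the orbit escapes $S_1(0)$. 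The sets $J(B_{|q|_p^2,|q|_p}(x_0))$ and $J(B_{|q|_p^2,|q|_p}(x_2))$ are included in $A(x_1)$ essentially by definition once one checks that the annulus $B_{|q|_p^2,|q|_p}(x_i)$ (for $i=0,2$) is itself absorbed: a point at distance $r$ from $x_i$ with $|q|_p^2<r<|q|_p$ has $|x|_p=1$ but its image has $p$-adic absolute value forced to be $|q|_p^2$ by the size computations above (because near $x_0$ and $x_2$ one has $|\t x+q|_p=|q|_p$ while $|x+\t+q-1|_p$ drops to size $\max(r,|q|_p)$... here one uses \eqref{fx1-q}, \eqref{fx2-q}), so again the orbit enters $S_{|q|_p^2}(0)$.

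The main obstacle I anticipate is the bookkeeping in the last paragraph: precisely tracking, via the ultrametric inequality, how $|x+\t+q-1|_p$ and $|\t x+q|_p$ behave as $x$ ranges over the "bad" sphere $S_1(0)$ and its various sub-balls, and verifying that the only $x\in S_1(0)$ whose entire forward orbit stays on $S_1(0)$ are those in $A(x_0)\cup A(x_2)$ — i.e., that the complement of $J(B_{|q|_p^2,|q|_p}(x_0))\cup J(B_{|q|_p^2,|q|_p}(x_2))\cup\{|x-1|_p>|q|_p\}$ inside $S_1(0)$ is exactly the (repelling) basins of $x_0$ and $x_2$, so that nothing is lost. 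This is the delicate case-split on the digits of $x$ modulo the relevant powers of $p$; everything else reduces to one-line applications of \eqref{f-xy} and Lemma \ref{fx12-p}.
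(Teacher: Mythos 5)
Your overall strategy --- case analysis on $|x|_p$ and on $|x-1|_p$ using the distance formula \eqref{f-xy} together with the norms from Lemma \ref{fx12-p}, funnelling everything into a small ball around $x_1$ where $f$ contracts --- is the same as the paper's, and the parts concerning $\{|x|_p\neq 1\}$ and $\{x\in S_1(0): |x-1|_p>|q|_p\}$ go through (there the paper simply computes $|f(x)|_p=|q|_p^2/|x-1|_p^2<1$ from $|x+\t+q-1|_p=|x-1|_p$, which is cleaner than your factoring of $f(x)-1$, but both work; your contraction constant also carries a spurious extra factor of $|q|_p$, which is harmless).

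The genuine gap is in your treatment of the annuli, i.e.\ exactly the sets $J(B_{|q|^2_p,|q|_p}(x_0))$ and $J(B_{|q|^2_p,|q|_p}(x_2))$. You assert that a point $x$ with $|q|_p^2<|x-x_0|_p<|q|_p$ has its image ``forced to be of absolute value $|q|_p^2$,'' but your own size bookkeeping contradicts this: for such $x$ one has $|\t x+q|_p=|q|_p$ and $|x+\t+q-1|_p=|q|_p$ (since $|x-1|_p<|q|_p=|\t+q|_p$), hence $|f(x)|_p=|q|_p^2/|q|_p^2=1$, so the orbit does \emph{not} leave $S_1(0)$ in one step. What is actually needed --- and what the paper proves --- is the expansion estimate $|f(x)-1|_p=|x-1|_p/|q|_p$ on $\{|x-1|_p<|q|_p\}$ (and likewise $|f(x)-x_2|_p=|x-x_2|_p/|q|_p$ near $x_2$, using $|x_2-1|_p=|q|_p$ and $|\eta(\t,q;x,x_2)|_p=|q|_p^2$). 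These show that a point of the annulus is pushed to distance $>|q|_p$ from the repelling fixed point while still on $S_1(0)$, and only the \emph{next} iterate satisfies $|f^2(x)|_p<1$ and enters $A(x_1)$. Without deriving these expansion formulas your argument does not establish that the annuli, let alone their full preimage sets $J(\cdot)$, lie in $A(x_1)$. (Separately, your closing remark about the complement being ``exactly the basins of $x_0$ and $x_2$'' is not needed --- the theorem only claims an inclusion --- and is conceptually off, since $x_0$ and $x_2$ are repelling and have no basins of attraction.)
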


\begin{proof} Let us consider several cases with respect to $|x|_p$.

(I) Assume that $x\in B_1(0)$, then one finds $|f(x)|_p=|q|_p^2<1$,
hence $f(B_1(0))\subset B_1(0)$.

Note that in the considered case we have $|A|_p=1$, therefore for
$x\in B_1(0)$ from \eqref{eta} one immediately gets
$|\eta(\t,q;x,x_1)|_p=|q|_p$. So, \eqref{f-xy},\eqref{fx1-q} with
$|x+\t+q-1|_p=1$ imply that
$$
|f(x)-x_1|_p=|q|_p^2|x-x_1|_p.
$$
Hence, $f$ is a contraction of $B_1(0)$, which means $f^n(x)\to x_1$
for every $x\in B_1(0)$, i.e. $B_1(0)\subset A(x_1)$.

Note that $\bar B_1(0)\varsubsetneq A(x_1)$, since
$|x_0|_p=|x_2|_p=1$, i.e. $S_1(0)\varsubsetneq A(x_1)$.

(II) Assume that $1<|x|_p\leq \frac{|q|_p}{|\t|_p}$, then $|\t
x+q|_p\leq |q|_p$, therefore one finds
$$
|f(x)|_p=\bigg|\frac{\t
x+q}{x+\t+q-1}\bigg|^2\leq\bigg(\frac{|q|_p}{|x|_p}\bigg)^2\leq
|q|^2<1.
$$

(III) Now let $|x|_p>\frac{|q|_p}{|\t|_p}$, then $|\t x+q|_p=|\t
x|_p$, so we have
$$
|f(x)|_p=\frac{|\t x|^2_p}{|x|^2_p}=|\t|^2<1.
$$

Hence, from (II), (III) one concludes that $f(x)\in B_1(0)$, for any
$x$ with $|x|_p>1$, which, due to (I), yields $x\in A(x_1)$.

Consequently, we infer that
\begin{equation}\label{f-A1}
\{x\in\bq_p:\ |x|_p\neq 1\}\subset A(x_1).
\end{equation}

(IV) Now assume that $|x|_p=1$, $|x-1|_p>|q|_p$. Then
$|x+\t+q-1|_p=|x-1|_p$, so one finds
$$
|f(x)|_p=\frac{|q|^2_p}{|x-1|^2_p}<1,
$$
which, due to (I), implies $x\in A(x_1)$.

(V) Suppose that $|x-1|_p<|q|_p$. Then $|x+\t+q-1|_p=|q|_p$, and
from \eqref{eta} we find $|\eta(\t,q;x,1)|_p=|q|_p$. Consequently,
\eqref{f-xy} implies
\begin{equation}\label{f-x-1}
|f(x)-1|_p=\frac{|x-1|_p}{|q|_p}.
\end{equation}
Hence, if $|x-1|_p>|q|^2_p$, then $|f(x)-1|_p>|q|_p$, which, due
(IV), means $x\in A(x_1)$.

(VI) Consider $J(B_{|q|_p^2,|q|_p}(x_0)$. One can see that
$J(B_{|q|_p^2,|q|_p}(x_0)\subset A(x_1)$. Indeed, if $x\in
J(B_{|q|_p^2,|q|_p}(x_0)$, then $|q|^2<|f^{n_0}(x)-1|_p<|q|_p$ for
some $n_0\in\bn$. From \eqref{f-x-1} we obtain
$|f^{n_0+1}(x)-1|_p>|q|_p$, which with (V) yields $x\in A(x_1)$.

Now look to $x_2$. From \eqref{vieta-h} one finds
\begin{equation}\label{fx2-p}
|x_2-1|_p=|q|_p, \ \ |x_2-1+q|_p=|q|_p.
\end{equation}
Note that the strong triangle inequality implies that
$|x-x_2|_p>|q|_p$ if and only if $|x-1|_p>|q|_p$.

(VII) Therefore, assume that $|x-x_2|_p<|q|_p$, which implies that
$|x-1|_p=|q|_p$. So, by means of \eqref{fx2-q},\eqref{fx2-p} from
\eqref{eta} we derive that $|\eta(\t,q;x,x_2)|_p=|q|^2_p$. Hence,
from \eqref{f-xy} with \eqref{fx2-q} and $|x+\t-1+q|_p=|q|_p$ one
finds
\begin{equation}\label{f-x-2}
|f(x)-x_2|_p=\frac{|x-x_2|_p}{|q|_p}.
\end{equation}

Now using the same argument as in (V)-(VII) with \eqref{f-x-2} we
obtain that $J(B_{|q|_p^2,|q|_p}(x_2))\subset A(x_1)$. Note that the
sets $J_1$ and $J_2$ are disjoint. This completes the proof.
\end{proof}

Now we are going to investigate solutions of \eqref{eq1} over the
invariant line $(1,1,\dots,h,1,\dots,1)$. Let us introduce some
notations. If $x\in W_n$, then instead of $h_x$ we use the symbol
$h_x^{(n)}$.

Denote
\begin{equation}\label{g(x)}
g(x)=\frac{\t x+q}{x+\t+q-1}.
\end{equation}
Note that $f(x)=(g(x))^2$. Then one can see that
\begin{eqnarray}\label{g-xy}
&&|g(x)-g(x)|_p=\frac{|x-y|_p|\t-1|_p|\t+q|_p}{|x+\t+q-1|_p|y+\t+q-1|_p},\\[3mm]
\label{g-1} && g^{-1}(x)=\frac{(\t+q-1)x-q}{\t-x}
\end{eqnarray}
Moreover, one has the following
\begin{lem}\label{g-p} Let $|q|_p<1$, and $|\t|_p\leq|q|^2_p$. The following assertions hold true:
\begin{enumerate}
\item[(i)] If $|x|_p\neq 1$, then $|g(x)|_p\leq\max\{|q|_p,|\t|_p\}$;
\item[(ii)] If $|g(x)|_p>1$, then $|x|_p=1$.
\end{enumerate}
\end{lem}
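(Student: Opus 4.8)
The plan is to read off both assertions directly from the formula \eqref{g(x)} by a careful case analysis on $|x|_p$, using the standing hypotheses $|q|_p<1$ and $|\t|_p\le|q|_p^2$ (so in particular $|\t|_p<|q|_p<1$ and $|\t+q|_p=|q|_p$, $|\t+q-1|_p=1$). For part (i) I would split into the two cases $|x|_p<1$ and $|x|_p>1$. When $|x|_p<1$: the denominator satisfies $|x+\t+q-1|_p=|{-1}|_p=1$ by the strong triangle inequality, while the numerator satisfies $|\t x+q|_p\le\max\{|\t|_p|x|_p,|q|_p\}\le|q|_p$; hence $|g(x)|_p\le|q|_p\le\max\{|q|_p,|\t|_p\}$. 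When $|x|_p>1$: the denominator is $|x+\t+q-1|_p=|x|_p$, so I must bound the numerator by cases on $|x|_p$ versus $|q|_p/|\t|_p$ exactly as in steps (II)--(III) of the proof of Theorem \ref{f-att}. If $1<|x|_p\le|q|_p/|\t|_p$ then $|\t x+q|_p\le|q|_p$, giving $|g(x)|_p\le|q|_p/|x|_p<|q|_p$; if $|x|_p>|q|_p/|\t|_p$ then $|\t x+q|_p=|\t x|_p$, giving $|g(x)|_p=|\t|_p<1$. In all subcases $|g(x)|_p\le\max\{|q|_p,|\t|_p\}$, which is (i).

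Part (ii) is then essentially the contrapositive of (i): if $|x|_p\neq1$ then by (i) we have $|g(x)|_p\le\max\{|q|_p,|\t|_p\}=|q|_p<1$, so $|g(x)|_p\le1$; therefore $|g(x)|_p>1$ forces $|x|_p=1$. I would state it this way rather than re-deriving it, since (i) already does all the work.

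I do not expect a genuine obstacle here — the lemma is a bookkeeping consequence of the earlier norm computations in Lemma \ref{fx12-p} and the case split already carried out inside Theorem \ref{f-att}. The one point that requires a little care is making sure the strong triangle inequality is applied in the correct (non-strict vs. strict) direction: $|x+\t+q-1|_p=|{-1}|_p$ when $|x|_p<1$ uses that $|x|_p<1=|{-1}|_p$ strictly, and $|x+\t+q-1|_p=|x|_p$ when $|x|_p>1$ uses $|x|_p>\max\{|\t|_p,|q|_p,1\}=1$; both are legitimate since the hypotheses guarantee $|\t|_p<|q|_p<1$ so the term $\t+q-1$ has norm exactly $1$. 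Once these are in place the estimates $|g(x)|_p\le\max\{|q|_p,|\t|_p\}$ for $|x|_p\neq1$ fall out immediately, and (ii) follows.
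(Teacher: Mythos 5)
Your proof is correct, and part (i) is essentially identical to the paper's: the same case split on $|x|_p<1$, $1<|x|_p\le|q|_p/|\t|_p$, and $|x|_p>|q|_p/|\t|_p$, with the same norm computations on numerator and denominator. The only divergence is in part (ii): you obtain it as the contrapositive of (i), using that $\max\{|q|_p,|\t|_p\}=|q|_p<1$ under the standing hypotheses, whereas the paper argues directly via the explicit inverse $g^{-1}(y)=\frac{(\t+q-1)y-q}{\t-y}$ from \eqref{g-1}, computing $|x|_p=|g^{-1}(y)|_p=|y|_p/|y|_p=1$ when $|y|_p>1$. Both arguments are valid and reach the same conclusion; yours is logically more economical since it reuses (i), while the paper's makes the one-to-one correspondence $|g(x)|_p>1\Rightarrow|x|_p=1$ visible without appealing to the trichotomy of the first part. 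This is a cosmetic difference rather than a substantive one.
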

\begin{proof} (i). Let $|x|_p<1$, then from \eqref{g(x)} we get
$$
|g(x)|_p=\bigg|\frac{\t x+q}{x+\t+q-1}\bigg|_p=|q|_p<1.
$$
Now assume $|x|_p>1$, then analogously one finds
$$
|g(x)|_p\left\{
\begin{array}{ll}
=|\t|_p, \ \ \textrm{if} \ |x|_p>\frac{|q|_p}{|\t|_p},\\[2mm]
\leq |q|_p, \ \ \textrm{if} \ 1<|x|_p\leq \frac{|q|_p}{|\t|_p}.
\end{array}
\right.
$$

(ii) Denoting $y=g(x)$, from \eqref{g-1} one finds
$$
|x|_p=|g^{-1}(y)|_p=\bigg|\frac{(\t+q-1)y-q}{\t-y}\bigg|_p=\frac{|y|_p}{|y|_p}=1.
$$
\end{proof}

\begin{thm}\label{U-1} Let $|q|_p<1$, and $|\t|_p\leq|q|^2_p$.
Assume that $\{h_x\}_{x\in V\setminus\{(0)\}}$ is a solution of \eqref{eq1} such that $|h_x|_p\neq 1$ for all $x\in V\setminus\{(0)\}$. Then $h_x=x_1$ for every $x$.
\end{thm}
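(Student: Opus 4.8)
The plan is to reduce \eqref{eq1} to the one–dimensional recursion on the invariant line and then show that $x_1$ is reached from every admissible boundary condition. On the line $(1,\dots,1,h,1,\dots,1)$ the equation \eqref{eq1} becomes, for each $x\in V\setminus\{(0)\}$,
\[
h_x=\prod_{y\in S(x)}g(h_y)=g(h_{y_1})\,g(h_{y_2}),
\]
where $S(x)=\{y_1,y_2\}$ and $g$ is the map from \eqref{g(x)}. First I would pin down the relevant $p$-adic norms. Since $|h_y|_p\neq1$ for every $y$ by hypothesis, Lemma~\ref{g-p}(i) gives $|g(h_y)|_p\le\max\{|q|_p,|\t|_p\}=|q|_p$ (using $|\t|_p\le|q|_p^2<|q|_p$), whence $|h_x|_p\le|q|_p^2<1$ for every $x$; in particular $|h_x+\t+q-1|_p=1$. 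On the other hand, by Lemma~\ref{fx12-p} one has $|x_1|_p=|q|_p^2$ and $|x_1+\t+q-1|_p=1$, and since $x_1=f(x_1)=g(x_1)^2$ this forces $|g(x_1)|_p^2=|x_1|_p$, i.e.\ $|g(x_1)|_p=|q|_p$.

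Next I would establish a contraction estimate towards $x_1$. Plugging the norms just found into \eqref{g-xy} (with $|\t-1|_p=1$, $|\t+q|_p=|q|_p$, and $|h_y+\t+q-1|_p=|x_1+\t+q-1|_p=1$) yields, for each successor $y$ of $x$,
\[
|g(h_y)-g(x_1)|_p=|q|_p\,|h_y-x_1|_p .
\]
Writing
\[
h_x-x_1=g(h_{y_1})\bigl(g(h_{y_2})-g(x_1)\bigr)+g(x_1)\bigl(g(h_{y_1})-g(x_1)\bigr)
\]
and applying the strong triangle inequality together with $|g(h_{y_1})|_p\le|q|_p$ and $|g(x_1)|_p=|q|_p$, I obtain
\[
|h_x-x_1|_p\le|q|_p^2\,\max\{|h_{y_1}-x_1|_p,\ |h_{y_2}-x_1|_p\}.
\]

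Finally I would iterate this down the tree. Put $r_n=\sup_{x\in W_n}|h_x-x_1|_p$; since $|h_x|_p\le|q|_p^2$ and $|x_1|_p=|q|_p^2$ we have $r_n\le|q|_p^2$ for all $n$, and the estimate above reads $r_n\le|q|_p^2\,r_{n+1}$. Iterating $m$ times gives $r_n\le|q|_p^{2m}r_{n+m}\le|q|_p^{2(m+1)}$, and letting $m\to\infty$ yields $r_n=0$ for every $n\ge1$, i.e.\ $h_x=x_1$ at every vertex. The point I expect to require the most care is the identification $|h_x+\t+q-1|_p=1$: it relies essentially on the hypothesis that $|h_x|_p\neq1$ holds at \emph{every} vertex, because only then does the recursion together with Lemma~\ref{g-p}(i) force $|h_x|_p\le|q|_p^2$ and rule out the case $|h_x|_p>1$, in which \eqref{g-xy} would produce a different, non-contracting factor. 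Everything else is routine bookkeeping with the ultrametric inequality.
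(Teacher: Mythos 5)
Your proof is correct and follows essentially the same route as the paper: use Lemma~\ref{g-p}(i) together with the recursion to force $|h_x|_p<1$ at every vertex, then combine \eqref{g-xy} with the product decomposition $g(h_{y_1})g(h_{y_2})-g(x_1)^2$ to get the contraction factor $|q|_p^2$ level by level, and iterate down the tree. Your version is marginally cleaner in two spots — you get $|h_x|_p\le|q|_p^2$ directly rather than by contradiction, and you make explicit the uniform bound $r_{n+m}\le|q|_p^2$ that justifies the limit (the paper's exponent $|q|_p^{2^N}$ should read $|q|_p^{2N}$, but this does not affect the conclusion).
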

\begin{proof} Let us first show that  $|h_x|_p<1$ for all $x$.
Suppose that $|h_x^{(n_0)}|_p>1$ for some $n_0\in\bn$ and $x\in
W_{n_0}$. Since $\{h_x\}$ is a solution of \eqref{eq1}, therefore,
we have
\begin{equation}\label{h-x}
h_x^{(n_0)}=g(h_{(x,1)}^{(n_0+1)})g(h_{(x,2)}^{(n_0+1)}),
\end{equation}
here we have used coordinate structure of the tree.

Now according to $|h_{(x,1)}^{(n_0+1)}|_p\neq 1$,
$|h_{(x,2)}^{(n_0+1)}|_p\neq 1$, then Lemma \ref{g-p} (i) implies
that $|g(h_{(x,1)}^{(n_0+1)})|_p<1$, $|g(h_{(x,1)}^{(n_0+1)})|_p<1$,
which with \eqref{h-x} means $|h_{x}^{(n_0)}|_p<1$. It is a
contradiction.

Hence, $|h_x|_p<1$ for all $x$. Then from \eqref{g-xy} we obtain
\begin{equation}\label{g-xy-11}
|g(h_{x})-g(x_1)|_p=|q|_p|h_x-x_1|_p
\end{equation}
for any $x\in V\setminus\{(0)\}$.

Now denote
$$
\|h^{(n)}\|_p=\max\{|h_{x}^{(n)}|_p:\ x\in W_n\}.
$$

Let $\e>0$ be an arbitrary number. Then from the prof of Lemma
\ref{g-p} (i) with \eqref{g-xy-11} one finds
\begin{eqnarray}\label{f-hx-x11}
|h_x^{(n)}-x_1|_p&=&|g(h_{(x,1)}^{(n+1)})g(h_{(x,2)}^{(n+1)})-(g(x_1))^2|_p\nonumber\\
&=&\bigg|g(h_{(x,1)}^{(n+1)})\big(g(h_{(x,2)}^{(n+1)})-g(x_1)\big)+g(x_1)\big(g(h_{(x,2)}^{(n+1)})-g(x_1)\big)\bigg|_p\nonumber\\
&\leq&\max\bigg\{|g(h_{(x,1)}^{(n+1)})|_p\big|g(h_{(x,2)}^{(n+1)})-g(x_1)\big|_p,
|g(x_1)|_p\big|g(h_{(x,2)}^{(n+1)})-g(x_1)\big|_p\bigg\}\\
&\leq &
|q|^2_p\max\bigg\{|h_{(x,1)}^{(n+1)}-x_1|_p,|h_{(x,2)}^{(n+1)}-x_1|_p\bigg\}.
\nonumber
\end{eqnarray}
Thus, we derive
$$
\|h^{(n)}-x_1\|_p\leq |q|^2_p\|h^{(n+1)}-x_1\|_p.
$$
So, iterating the last inequality $N$ times one gets
\begin{equation}\label{f-iter1}
\|h^{(n)}-x_1\|_p\leq |q|^{2^N}_p\|h^{(n+N)}-x_1\|_p.
\end{equation}
Choosing $N$ such that $|q|^{2^N}_p<\e$, from \eqref{f-iter1} we
find $\|h^{(n)}-x_1\|_p<\e$. Arbitrariness of $\e$ yields that
$h_x=x_1$. This completes the proof.
\end{proof}

\subsection{Antiferromagnetic case} In this subsection we assume that $N<0$, this means $|\t|_p=p^{\bar N}>1$, where
$\bar N=-N$. In this setting equation \eqref{eq12} has three
solutions $x_0$ (i.e. $x_0=1$), and $x_1,x_2$. Note that $x_1$ and
$x_2$ are solutions of \eqref{eq13}, therefore one gets
\begin{eqnarray}\label{ax-12-q1}
&&|x_1+x_2|_p=|\t|_p^2, \ \ \ |x_1\cdot x_2|_p=|q|_p^2\\
\label{ax-12-q2} &&|x_1|_p=|\t|^2_p, \ \
|x_2|_p=\bigg|\frac{q}{\t}\bigg|_p^2.
\end{eqnarray}
Hence, we obtain
\begin{eqnarray}\label{ax1-q}
&&|\t x_1+q|_p=|\t|^3_p, \ \ |x_1+\t+q-1|_p=|\t|^2_p, \\[2mm]
\label{ax2-q} &&|\t x_2+q|_p=|q|_p, \ \ |x_2+\t+q-1|_p=|\t|_p.
\end{eqnarray}

\begin{prop} Assume that $|\t|_p>1$, then a fixed point $x_0$ is neutral, and the fixed points $x_1$, $x_2$ are
attractive.
\end{prop}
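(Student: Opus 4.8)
The plan is to compute the derivative $f'(x)$ at each of the three fixed points using the closed form \eqref{df(x)}, exactly as was done in the ferromagnetic case, and then read off the nature of each fixed point from the $p$-adic absolute value of the derivative according to the trichotomy (attractive if $|\l|_p<1$, neutral if $|\l|_p=1$, repelling if $|\l|_p>1$). All the necessary norm estimates are already collected in \eqref{ax-12-q1}--\eqref{ax2-q}, so the proof is essentially a bookkeeping exercise once those are in hand.

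First I would handle $x_0=1$. From \eqref{fdf-x0}-style computation, $|f'(x_0)|_p=|\t-1|_p/|\t+q|_p$. Since $|\t|_p>1$ and $|q|_p\le 1$, the strong triangle inequality gives $|\t-1|_p=|\t|_p$ and $|\t+q|_p=|\t|_p$, hence $|f'(x_0)|_p=1$, so $x_0$ is neutral. Next, for $x_1$ and $x_2$ I would use \eqref{df(x)} in the form
\[
|f'(x_i)|_p=\frac{|x_i|_p\,|\t-1|_p\,|\t+q|_p}{|\t x_i+q|_p\,|x_i+\t+q-1|_p}, \qquad i=1,2,
\]
together with $|\t-1|_p=|\t+q|_p=|\t|_p$. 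For $x_1$: plugging in $|x_1|_p=|\t|_p^2$ from \eqref{ax-12-q2} and $|\t x_1+q|_p=|\t|_p^3$, $|x_1+\t+q-1|_p=|\t|_p^2$ from \eqref{ax1-q}, the numerator is $|\t|_p^4$ and the denominator is $|\t|_p^5$, so $|f'(x_1)|_p=|\t|_p^{-1}<1$; thus $x_1$ is attractive. For $x_2$: plugging in $|x_2|_p=|q/\t|_p^2$ and $|\t x_2+q|_p=|q|_p$, $|x_2+\t+q-1|_p=|\t|_p$ from \eqref{ax2-q}, the numerator is $|q|_p^2|\t|_p^{-2}\cdot|\t|_p^2=|q|_p^2$ and the denominator is $|q|_p|\t|_p$, so $|f'(x_2)|_p=|q|_p/|\t|_p<1$; thus $x_2$ is attractive as well.

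There is really no serious obstacle here: the only thing requiring a moment's care is justifying the norm identities \eqref{ax-12-q1}--\eqref{ax2-q} themselves, which follow from Vieta's relations \eqref{vieta-h} ($x_1+x_2=-2q+1+\t^2-2\t$ and $x_1x_2=q^2$) combined with the strong triangle inequality: $|x_1+x_2|_p=|\t|_p^2$ dominates the other terms because $|\t|_p>1\ge|q|_p$, and then $|x_1x_2|_p=|q|_p^2$ forces the split $|x_1|_p=|\t|_p^2$, $|x_2|_p=|q|_p^2/|\t|_p^2$. Once these are available the conclusion is immediate, and I would simply state that $x_0$ is neutral while $x_1,x_2$ are attractive, completing the proof.
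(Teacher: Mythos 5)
Your proposal is correct and follows essentially the same route as the paper: compute $|f'(x_i)|_p$ via the closed form \eqref{fdf-xi}, substitute the norm identities \eqref{ax-12-q1}--\eqref{ax2-q}, and conclude $|f'(x_0)|_p=1$, $|f'(x_1)|_p=1/|\t|_p$, $|f'(x_2)|_p=|q/\t|_p$. Your extra paragraph justifying those norm identities from Vieta's relations and the strong triangle inequality is a welcome addition, since the paper states them without detailed derivation.
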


\begin{proof} From \eqref{df(x)} we find
\begin{equation*}
|f'(x_0)|_p=\bigg|\frac{\t-1}{\t+q}\bigg|_p=1,
\end{equation*}
this means that $x_0$ is neutral.

Let us consider the other fixed points. Now taking into account
\eqref{fdf-xi} with \eqref{ax-12-q1},\eqref{ax1-q},\eqref{ax2-q} one
gets
\begin{eqnarray*}
|f'(x_1)|_p=\frac{1}{|\t|_p}<1, \ \ \
|f'(x_2)|_p=\bigg|\frac{q}{\t}\bigg|_p<1
\end{eqnarray*}
which is the required assertion.
\end{proof}

\begin{lem}\label{af-pp} Let $|\t|_p>1$ and $f$ be given by \eqref{f(x)}. Then the following assertions
hold true:
\begin{enumerate}
\item[(i)] if $|x|_p>|\t|_p$, then $|f(x)|_p=|\t|_p^2$. Hence, $|f^n(x)|_p=|\t|_p^2$ for all $n\in\bn$;
\item[(ii)] if $\frac{|q|_p}{|\t|_p}< |x|_p<|\t|_p$, then $|f(x)|_p=|x|^2_p$;
\item[(iii)] if $|x|\leq \frac{|q|_p}{|\t|_p}$, then $|f(x)|_p\leq\big(\frac{|q|_p}{|\t|_p}\big)^2$.
\end{enumerate}
\end{lem}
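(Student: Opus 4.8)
The plan is to prove all three estimates by the same mechanism: each is a direct application of the ultrametric (strong triangle) inequality applied to the numerator and denominator of
$$
f(x)=\bigg(\frac{\t x+q}{x+\t+q-1}\bigg)^2,
$$
using the standing hypothesis $|\t|_p>1$ of the antiferromagnetic subsection, together with the observation (already available from Lemma \ref{quadrat}-type reasoning in Section 4) that $N<0$ forces $|q|_p\le 1<|\t|_p$, so in particular $|\t+q-1|_p=|\t|_p$. I would therefore first record the two auxiliary magnitudes that recur throughout: $|x+\t+q-1|_p$ and $|\t x+q|_p$, each evaluated in the relevant range of $|x|_p$, and then read off $|f(x)|_p$ as the square of their ratio.

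For part (i), assume $|x|_p>|\t|_p$. Then in the denominator $|x+\t+q-1|_p=|x|_p$ since $|x|_p$ strictly dominates $|\t+q-1|_p=|\t|_p$; in the numerator $|\t x|_p=|\t|_p|x|_p>|\t|_p\ge|q|_p$, so $|\t x+q|_p=|\t x|_p=|\t|_p|x|_p$. Hence
$$
|f(x)|_p=\frac{|\t|_p^2|x|_p^2}{|x|_p^2}=|\t|_p^2 .
$$
In particular $|f(x)|_p=|\t|_p^2>|\t|_p$, so $f(x)$ again lies in the range to which part (i) applies, and an immediate induction gives $|f^n(x)|_p=|\t|_p^2$ for all $n\in\bn$.

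For part (ii), assume $\frac{|q|_p}{|\t|_p}<|x|_p<|\t|_p$. In the denominator, $|\t+q-1|_p=|\t|_p$ strictly dominates $|x|_p$, so $|x+\t+q-1|_p=|\t|_p$. In the numerator, $|\t x|_p=|\t|_p|x|_p>|q|_p$, so $|\t x+q|_p=|\t|_p|x|_p$. Therefore
$$
|f(x)|_p=\frac{|\t|_p^2|x|_p^2}{|\t|_p^2}=|x|_p^2 .
$$
For part (iii), assume $|x|_p\le\frac{|q|_p}{|\t|_p}$. As before $|x+\t+q-1|_p=|\t|_p$. In the numerator $|\t x|_p=|\t|_p|x|_p\le|q|_p$, so by the (non-strict) triangle inequality $|\t x+q|_p\le|q|_p$. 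Hence
$$
|f(x)|_p=\frac{|\t x+q|_p^2}{|\t|_p^2}\le\frac{|q|_p^2}{|\t|_p^2}=\Big(\frac{|q|_p}{|\t|_p}\Big)^2 .
$$
The only subtlety — and the one place a careless argument could fail — is the boundary case treatment: in part (iii) the numerator estimate is genuinely only an inequality (equality of $|\t x|_p$ and $|q|_p$ on the boundary need not force $|\t x+q|_p=|q|_p$), which is exactly why the statement is phrased with $\le$; in parts (i) and (ii) the inequalities among $|x|_p$, $|\t|_p$, $|q|_p/|\t|_p$ are strict, so every application of the strong triangle inequality is an equality and no loss occurs. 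This case-boundary bookkeeping is the main thing to get right; the rest is routine ultrametric arithmetic.
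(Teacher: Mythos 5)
Your proof is correct and follows essentially the same route as the paper's: in each range you evaluate $|\t x+q|_p$ and $|x+\t+q-1|_p$ by the ultrametric equality (using $|\t+q-1|_p=|\t|_p$ since $|q-1|_p\le 1<|\t|_p$) and read off $|f(x)|_p$ as the squared ratio. Your extra remarks — the explicit induction for the iterates in (i) and the observation that (iii) is genuinely only an inequality on the boundary — are accurate and slightly more careful than the paper's own write-up.
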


\begin{proof} (i). Let  $|x|_p>|\t|_p$, then from \eqref{f(x)} we find
\begin{equation*}
|f(x)|_p=\bigg(\frac{|\t x|_p}{|x|_p}\bigg)^2=|\t|_p^2.
\end{equation*}

(ii) Let $\frac{|q|_p}{|\t|_p}< |x|_p<|\t|_p$, then
$|\t+x+q-1|_p=|\t|_p$, $|\t x|_p>|q|_p$ therefore, one gets
\begin{equation*}
|f(x)|_p=\bigg(\frac{|\t x|_p}{|\t|_p}\bigg)^2=|x|_p^2.
\end{equation*}

(iii) Let $|x|\leq\frac{|q|_p}{|\t|_p}$, then $|\t x|_p\leq|q|_p$,
$|\t+x+q-1|_p=|\t|_p$, hence one finds the required equality.
\end{proof}

Now we are going to examine attractors of $x_1$ and $x_2$.

\begin{thm} Let $|\t|_p>1$ and $f(x)$ is given by \eqref{f(x)}. Then the following assertions holds true:
\begin{enumerate}
\item[(i)] $f(S_1(0))\subset S_1(0)$;
\item[(ii)] $A(x_2)=B_1(0)$;
\item[(iii)] $A(x_1)=\{x\in\bq_p: \ |x|_p>1\}\setminus\bigcup\limits_{n=0}^\infty f^{-n}(1-\t-q)
$.
\end{enumerate}
\end{thm}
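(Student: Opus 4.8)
The plan is to prove the three assertions in turn, using the size estimates of Lemma~\ref{af-pp} together with the explicit values \eqref{ax-12-q1}--\eqref{ax2-q} for the fixed points $x_1,x_2$ and the derivative bounds from the preceding Proposition.

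\emph{Part (i).} I would argue that on $S_1(0)$ the factors in \eqref{f(x)} have controlled norm: since $|\t|_p>1$ and $|q|_p\leq 1$, for $|x|_p=1$ one has $|\t x+q|_p=|\t|_p$ (the $\t x$ term dominates) and $|x+\t+q-1|_p=|\t|_p$ as well, hence $|f(x)|_p=(|\t|_p/|\t|_p)^2=1$. Thus $f(S_1(0))\subset S_1(0)$. I would also note, for later use, that $f$ does not map any point of $|x|_p\neq 1$ into $S_1(0)$: Lemma~\ref{af-pp} shows that $|f(x)|_p$ is always a power of $|\t|_p$ or at most $(|q|_p/|\t|_p)^2<1$ when $|x|_p\neq 1$, so $S_1(0)$ is precisely the set of points whose orbit can stay on $S_1(0)$. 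This already restricts $A(x_1)$ and $A(x_2)$ to $\{|x|_p\neq 1\}$ up to the preimage subtleties.

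\emph{Part (ii).} I would show $B_1(0)\subset A(x_2)$ by contraction. For $|x|_p<1$ (in particular $|x|_p\leq 1/p\leq |q|_p^{-1}\cdot|\t|_p^{-1}$ is \emph{not} automatic, so I would split into $|x|_p\leq |q|_p/|\t|_p$ and $|q|_p/|\t|_p<|x|_p<1$) one checks via \eqref{f(x)} that $f(B_1(0))\subset \bar B_{(|q|_p/|\t|_p)^2}(0)\subset B_1(0)$, so $B_1(0)$ is forward-invariant, and since $|x_2|_p=(|q|_p/|\t|_p)^2<1$ we have $x_2\in B_1(0)$. Then I would use \eqref{f-xy}--\eqref{eta} with $y=x_2$: for $x,x_2$ both in $B_1(0)$ one computes $|x+\t+q-1|_p=|\t|_p$, $|x_2+\t+q-1|_p=|\t|_p$ from \eqref{ax2-q}, and $|\eta(\t,q;x,x_2)|_p$ — here $A=\t+q-1$ has $|A|_p=|\t|_p$, and the dominant term of $\eta$ is $A\t(x+x_2)$ or $2qA$; a short case check gives $|\eta|_p=|\t|_p\cdot|q|_p$ (or the appropriate value making the ratio a genuine contraction factor). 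Plugging into \eqref{f-xy} yields $|f(x)-x_2|_p = c\,|x-x_2|_p$ with $|c|_p<1$, so $f$ is a contraction on $B_1(0)$ and every orbit there converges to $x_2$. For the reverse inclusion $A(x_2)\subset B_1(0)$: by part~(i)'s remark, any $x$ with $|x|_p=1$ stays in $S_1(0)$ forever so cannot converge to $x_2\notin S_1(0)$; any $x$ with $|x|_p>1$ has $|f^n(x)|_p\in\{|\t|_p^2,\dots\}$ by Lemma~\ref{af-pp}(i),(ii) and in fact $|f^n(x)|_p\geq 1$ always (the orbit never enters $B_1(0)$), so again it cannot reach $x_2$. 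Hence $A(x_2)=B_1(0)$.

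\emph{Part (iii).} The inclusion $A(x_1)\subset\{|x|_p>1\}$ follows as above (orbits from $|x|_p\leq 1$ stay in $\bar B_1(0)$ and cannot converge to $x_1$ with $|x_1|_p=|\t|_p^2>1$), and the point $1-\t-q$ must be excluded together with all its preimages since $f(1-\t-q)$ is undefined (it is the pole of \eqref{f(x)}), so $\bigcup_n f^{-n}(1-\t-q)$ is genuinely outside the domain of the iteration. For the main inclusion, take $x$ with $|x|_p>1$ and $x\notin\bigcup_n f^{-n}(1-\t-q)$. By Lemma~\ref{af-pp}(i), if ever $|f^m(x)|_p>|\t|_p$ then $|f^n(x)|_p=|\t|_p^2$ for all $n\geq m$, putting the orbit on the sphere $S_{|\t|_p^2}(0)$ which contains $x_1$ (since $|x_1|_p=|\t|_p^2$); by Lemma~\ref{af-pp}(ii) if $|f^m(x)|_p$ lands strictly between $|q|_p/|\t|_p$ and $|\t|_p$ then $|f^{m+1}(x)|_p=|f^m(x)|_p^2$, which is $>1$ but could grow — iterating, the norm squares each step until it exceeds $|\t|_p$, landing us in case~(i). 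The remaining subcase $|f^m(x)|_p=1$ is excluded since then the orbit enters $S_1(0)$ and is trapped there. So in all cases the orbit eventually reaches the sphere $S_{|\t|_p^2}(0)$ and thereafter stays there. On that sphere I would run the contraction argument toward $x_1$: for $x,x_1\in S_{|\t|_p^2}(0)$, compute $|x+\t+q-1|_p=|\t|_p^2$, $|x_1+\t+q-1|_p=|\t|_p^2$ from \eqref{ax1-q}, and estimate $|\eta(\t,q;x,x_1)|_p$ — the dominant term is $2\t x x_1$ with norm $|\t|_p\cdot|\t|_p^2\cdot|\t|_p^2=|\t|_p^5$ — so \eqref{f-xy} gives $|f(x)-x_1|_p=\frac{|\t-1|_p|\t+q|_p|\eta|_p}{|x+\t+q-1|_p^2|x_1+\t+q-1|_p^2}|x-x_1|_p$, and a bookkeeping of the $|\t|_p$ powers yields a factor of $p$-adic norm $1/|\t|_p<1$. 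Hence $f$ contracts on $S_{|\t|_p^2}(0)$ toward $x_1$, so every admissible orbit from $\{|x|_p>1\}$ converges to $x_1$.

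\textbf{Main obstacle.} The delicate point throughout is the precise evaluation of $|\eta(\t,q;x,y)|_p$ in \eqref{eta} on each invariant sphere — one must verify that no cancellation occurs among the four terms $A\t(x+y)$, $2\t xy$, $2qA$, $q(x+y)$, which requires comparing their $p$-adic orders carefully using $|\t|_p>1$ and $|q|_p\leq 1$, and separately when $|q|_p<1$ versus $|q|_p=1$. The second nuisance is the orbit-analysis in part~(iii): showing that an orbit starting with $1<|x|_p<|\t|_p$ genuinely escapes (its norm strictly increases under squaring as long as it stays below $|\t|_p$) and does not get stuck at norm exactly $1$ — this uses that the exceptional preimage set $\bigcup_n f^{-n}(1-\t-q)$ has been removed and that $S_1(0)$ is forward-invariant and disjoint from a neighbourhood of $x_1$.
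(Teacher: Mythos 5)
Your proposal is correct and follows essentially the same route as the paper: the norm case analysis on $|x|_p$ (giving invariance of $S_1(0)$, $B_1(0)$ and $\{|x|_p>1\}$), the contraction estimates obtained by evaluating $|\eta(\t,q;x,x_i)|_p$ on the relevant balls (factor $1/|\t|_p$ near $x_1$ on $S_{|\t|_p^2}(0)$, and a factor $<1$ on $B_1(0)$ for $x_2$), and the norm-squaring escape argument on $1<|x|_p<|\t|_p$ with the backward orbit of the pole $1-\t-q$ removed. The only cosmetic slip is your remark that the subcase $|f^m(x)|_p=1$ is ``excluded because the orbit is trapped in $S_1(0)$'' --- the correct reason, which your own squaring argument already supplies, is that this subcase never occurs for $|x|_p>1$.
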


\begin{proof} (i) Let $|x|_p=1$, then due to Lemma \ref{af-pp} (ii) we find that $|f(x)|_1=1$, which means $S_1(0)$ is invariant w.r.t. $f$.\\

Now consider (ii). First note that  $|x_2|_p=(|q|_p/|\t|_p)^2<1$.
Now consider several cases w.r.t. $|x|_p$.

(I$_1$) Assume that $|x-x_2|_p<\big(\frac{|q|_p}{|\t|_p}\big)^2$.
Then $|x+\t+q-1|_p=|\t|_p$ and
$|x+x_2|_p=\big(\frac{|q|_p}{|\t|_p}\big)^2$. Therefore, from
\eqref{eta} one gets that
$$
|\eta(\t,q;x,x_2)|_p=|\t|^2_p\bigg(\frac{|q|_p}{|\t|_p}\bigg)^2=|q|_p^2.
$$
Hence, the last equality with \eqref{f-xy},\eqref{ax2-q} yields that
\begin{eqnarray}\label{af-xx2}
|f(x)-x_2|_p&=&\frac{|\t|_p^2|x-x_2|_p|\eta(\t,q;x,x_2)|_p}{|x+\t+q-1|_p^2|\t|_p^2}\\[2mm]
&=&\frac{|q|_p^2}{|\t|_p^2}|x-x_2|_p \nonumber.
\end{eqnarray}
This means that $f$ maps $B_{\frac{|q|^2_p}{|\t|_p^2}}(x_2)$ into
itself, and it is a contraction. So, for every $x\in
B_{\frac{|q|^2_p}{|\t|_p^2}}(x_2)$, one has $f^n(x)\to x_2$ as
$n\to\infty$. Hence, $B_{\frac{|q|^2_p}{|\t|_p^2}}(x_2)\subset
A(x_2)$.

(II$_1$) Let  $|x-x_2|_p=\frac{|q|^2}{|\t|_p^2}$, i.e. $|x|_p\leq
\frac{|q|^2}{|\t|_p^2}$. Then from \eqref{eta} we find that
$|\eta(\t,q;x,x_2)|_p=|q|_p|\t|_p$. Hence, from \eqref{af-xx2} we
derive
\begin{equation*}
|f(x)-x_2|_p=\frac{|x-x_2|_p||q|_p|\t|_p}{|\t|_p^2}=\frac{|q|^3_p}{|\t|^3_p},
\end{equation*}
which implies that $f(x)\in B_{\frac{|q|^2_p}{|\t|_p^2}}(x_2)$,
hence due to (I$_1$) one has $x\in A(x_2)$.

(III$_1$) Let $|x_2|_p<|x|_p\leq \frac{|q|_p}{|\t|_p}$, then
$|\eta(\t,q;x,x_2)|_p\leq |q|_p|\t|_p$, therefore using the same
argument as (II$_1$) we obtain
\begin{equation*}
|f(x)-x_2|_p\leq \frac{|q|^3_p}{|\t|^3_p},
\end{equation*}
which with (I$_1$) yields $x\in A(x_2)$.

(IV$_1$) Let $\frac{|q|_p}{|\t|_p}<|x|_p<1$, then  $|x-x_2|_p=|x|_p$
and $|\eta(\t,q;x,x_2)|_p=|\t|^2_p|x|_p$. It follows from
\eqref{af-xx2} that
\begin{equation}\label{af-xxx}
|f(x)-x_2|_p=\frac{|x-x_2|_p|\t|^2_p|x|_p}{|\t|_p^2}=|x|^2_p.
\end{equation}
If $|x|^2_p\leq \frac{|q|_p}{|\t|_p}$, then $f(x)$ falls to
(III$_1$) case, so $x\in A(x_2)$. If
 $|x|^2_p>\frac{|q|_p}{|\t|_p}$, then again repeating \eqref{af-xxx} one gets
$|f^2(x)-x_2|_p=|x|^4_p$. Continuing this procedure, we conclude
that in any case $f(x)$ falls to (III$_1$). Hence, $x\in A(x_2)$.

According (i) $S_1(0)$ is invariant w.r.t. $f$, hence $S_1(0)\cap A(x_2)=\emptyset$. Hence, (ii) is proved. \\

Now let us prove (iii). From \eqref{f-xy} with \eqref{ax1-q} we
easily obtain
\begin{equation}\label{af-xx1}
|f(x)-x_1|_p=\frac{|\t|_p^2|x-x_1|_p|\eta(\t,q;x,x_1)|_p}{|x+\t+q-1|_p^2|\t|_p^4},
\end{equation}
where $\eta(\t,q;x,x_1)$ is defined in \eqref{eta}.

(I$_2$) Let $x\in B_{|\t|_p^2}(x_1)$ (i.e. $|x-x_1|_p<|\t^2|_p$).
Then $|x|_p=|x_1|_p=|\t|^2_p$ and $|x+\t+q-1|_p=|\t|^2_p$.
Therefore, from \eqref{eta} one finds that
$|\eta(\t,q;x,x_1)|_p=|\t|_p^5$. So, the last ones with
\eqref{af-xx1} imply
\begin{equation}\label{af-xx11}
|f(x)-x_1|_p=\frac{|x-x_1|_p}{|\t|_p},
\end{equation}
this means that $f$ is a contraction of $B_{|\t|_p^2}(x_1)$, i.e.
for every $x\in B_{|\t|_p^2}(x_1)$ one has $f^n(x)\to x_1$ as
$n\to\infty$. Hence, $B_{|\t|_p^2}(x_1)\subset A(x_1)$. Note that
$S_{|\t|_p^2}(x_1)\nsubseteq A(x_1)$, since $x_2\in
S_{|\t|_p^2}(x_1)$.

(II$_2$) Let $|x|_p=|\t|^2_p$, which implies $|x-x_1|_p\leq
|\t|_p^2$. Similarly reasoning as (I$_2$) we find
$|\eta(\t,q;x,x_1)|_p=|\t|_p^5$, hence \eqref{af-xx11} holds. So,
$f(x)\in B_{|\t|_p^2}(x_1)$, therefore, due to (I$_2$), we get $x\in
A(x_1)$.

(III$_2$) Let us assume that $|x-x_1|>|\t|^2_p$, then
$|x|_p>|\t|_p^2$. This implies $|x-x_1|_p=|x|_p$. So, we have
$|\eta(\t,q;x,x_1)|_p=|\t|^3_p|x|_p$ and $|x+\t+q-1|_p=|x|_p$, hence
from \eqref{af-xx1} one finds
\begin{equation*}
|f(x)-x_1|_p=\frac{|x|_p|\t|^3_p|x|_p}{|x|_p^2|\t|_p^2}=|\t|_p.
\end{equation*}
This implies that $f(x)\in B_{|\t|_p^2}(x_1)$, which with (I$_2$)
means
$$
\{x\in\bq_p: \ |x|_p>|\t|_p^2\}\subset A(x_1).
$$

(IV$_2$) Let $|x|=|\t|_p$ with $x\neq 1-\t-q$. Denote $\g=x+\t-q+1$,
then $|\g|_p\leq |\t|_p$ and $|\g|_p\neq 0$. From \eqref{f(x)} one
finds
\begin{equation}\label{af-fxx}
|f(x)|_p=\frac{|\t|_p^4}{|\g|_p^2}\geq |\t|^2_p.
\end{equation}
Hence due to (II$_2$) and (III$_2$) we conclude that $x\in A(x_1)$.

(V$_2$) Let $|x|_p>|\t|_p$, then analogously from Lemma
\ref{af-pp}(i) one finds that $|f(x)|_p=|\t|^2_p$, which, due to
(II$_2$), yields that $x\in A(x_1)$.

(VI$_2$) Now assume that $1<|x|<|\t|_p$.  Then
$|x+\t+q-1|_p=|\t|_p$, $|\eta(\t,q;x,x_1)|_p=|\t|^4_p$, hence it
follows from \eqref{af-xx1} that
\begin{equation}\label{af-xx12}
|f(x)-x_1|_p=\frac{|x-x_1|_p||\t|^4_p}{|\t|_p^2|\t|_p^2}=|x-x_1|_p.
\end{equation}
On the other hand, according to Lemma \ref{af-pp} (ii) we have
$|f(x)|_p=|x|_p^2$.

(a) If $|x|^2=|\t|_p$, and $f(x)\neq 1-\t-q$, then due to (IV$_2$)
one gets $x\in A(x_1)$. If $f(x)=1-\t-q$, then $x\notin A(x_1)$.

(b) If $|x|^2>|\t|_p$, then from (V$_2$) we get
$|f^2(x)|_p=|\t|_p^2$, hence $x\in A(x_1)$.

(c) If $|x|^2<|\t|_p$, then repeating above made argument we find
$|f^2(x)-x_1|_p=|x-x_1|_p$ and $|f^2(x)|_p=|x|_p^4$.  Therefore,
continuing above made procedure, we conclude there can occur either
(a) or (b). Hence, $x\in A(x_1)$ if
$x\notin\bigcup\limits_{n=0}^\infty f^{-n}(1-\t-q)$.
\end{proof}

To prove our main result, we need the following auxiliary result.

\begin{lem}\label{ag-p} Let $|\t|_p>1$ and $g(x)$ be a function given by \eqref{g(x)}.
The following assertions hold true:
\begin{enumerate}
\item[(i)] if $|x|_p\leq \frac{1}{|\t|_p}$, then $|g(x)|_p\leq\frac{1}{|\t|_p}$;
\item[(ii)] if $\frac{1}{|\t|_p}<|x|_p<1$, then $|g(x)|_p=|x|_p$;
\item[(iii)] if $x,y\in B_1(0)$ then one has
\begin{eqnarray}\label{ag-xy}
|g(x)-g(x)|_p=|x-y|_p;
\end{eqnarray}
\item[(iv)] if $1<|x|_p<|\t|_p$, then $|g(x)|_p=|x|_p$;
\item[(v)] if $|x|_p\geq |\t|_p$, then $|g(x)|_p\geq |\t|_p$. Moreover, if $|x|_p>|\t|_p$, then $|g(x)|_p=|\t|_p$;
\item[(vi)] if $|x|_p,|y|_p\geq |\t|_p^2$ then one has
\begin{eqnarray}\label{ag-xy1}
|g(x)-g(x)|_p\leq\frac{1}{|\t|_p^2}|x-y|_p;
\end{eqnarray}
\end{enumerate}
\end{lem}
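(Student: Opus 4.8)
The plan is to reduce every item to one ultrametric observation about the constant $A=\t+q-1$ appearing in \eqref{eta}: since $q$ is a positive integer and $|\t|_p>1$, one has $|q|_p\le 1$ and $|q-1|_p\le 1$, both strictly dominated by $|\t|_p$, so that
$$
|A|_p=|\t+q-1|_p=|\t-1|_p=|\t+q|_p=|\t|_p .
$$
After recording these equalities, items (i), (ii), (iv), (v) will be settled by a case analysis on $|x|_p$ applied to the explicit quotient \eqref{g(x)}, while (iii) and (vi) will be read directly off the difference formula \eqref{g-xy} once one knows how the denominators there behave.

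For (i)--(ii) and (iv)--(v) I will first estimate the numerator $\t x+q$ of \eqref{g(x)}: as soon as $|x|_p>1/|\t|_p$ we have $|\t x|_p=|\t|_p|x|_p>1\ge|q|_p$, so the strong triangle inequality gives $|\t x+q|_p=|\t|_p|x|_p$; when $|x|_p\le 1/|\t|_p$ instead $|\t x|_p\le 1$, hence $|\t x+q|_p\le 1$. Next I estimate the denominator $x+A$: whenever $|x|_p<|\t|_p=|A|_p$ we get $|x+A|_p=|\t|_p$, and whenever $|x|_p>|\t|_p$ we get $|x+A|_p=|x|_p$. Multiplying the numerator and denominator estimates together in each of the four ranges $|x|_p\le 1/|\t|_p$, $\ 1/|\t|_p<|x|_p<1$, $\ 1<|x|_p<|\t|_p$, $\ |x|_p\ge|\t|_p$, produces exactly the asserted equalities and inequalities.

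For (iii): if $x,y\in B_1(0)$ then $|x|_p,|y|_p<1<|\t|_p=|A|_p$, so both denominators in \eqref{g-xy} equal $|\t|_p$; combined with $|\t-1|_p=|\t+q|_p=|\t|_p$, formula \eqref{g-xy} collapses to $|g(x)-g(y)|_p=|x-y|_p$. For (vi): if $|x|_p,|y|_p\ge|\t|_p^2$ then $|x|_p,|y|_p>|\t|_p=|A|_p$, so the denominators in \eqref{g-xy} are $|x|_p$ and $|y|_p$; hence \eqref{g-xy} gives $|g(x)-g(y)|_p=|x-y|_p\,|\t|_p^2/(|x|_p|y|_p)\le |x-y|_p/|\t|_p^2$.

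The only genuinely delicate point is the borderline sphere $|x|_p=|\t|_p$ inside item (v): there the denominator $|x+A|_p$ need not equal $|\t|_p$, since cancellation may occur and only $|x+A|_p\le|\t|_p$ is guaranteed. Consequently one obtains merely the one-sided bound $|g(x)|_p=|\t|_p^2/|x+A|_p\ge|\t|_p$, not an equality; this is precisely why (v) is stated with ``$\ge$'' on the whole set $\{|x|_p\ge|\t|_p\}$ and with equality only on the open region $\{|x|_p>|\t|_p\}$, where $|x+A|_p=|x|_p$. Every other case is a clean application of the strong triangle inequality, so no further obstacle is expected.
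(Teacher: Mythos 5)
Your proof is correct and follows essentially the same route as the paper: a case analysis on $|x|_p$ applied to the quotient \eqref{g(x)} via the strong triangle inequality (using $|\t+q-1|_p=|\t-1|_p=|\t+q|_p=|\t|_p$), together with reading (iii) and (vi) directly off \eqref{g-xy}. Your explicit treatment of the borderline sphere $|x|_p=|\t|_p$ in item (v) is a welcome clarification of a point the paper leaves implicit.
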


\begin{proof} From \eqref{g(x)} we get
$$
|g(x)|_p=\frac{|\t x+q|_p}{|\t|_p} \left\{
\begin{array}{ll}
\leq \frac{1}{|\t|_p}, \ \ \textrm{if} \ \ |x|_p\leq \frac{1}{|\t|_p},\\[3mm]
=|x|_p, \ \ \textrm{if} \ \ \frac{1}{|\t|_p}<|x|_p<1,
\end{array}
\right.
$$
which proves (i) and (ii).

The assertions (iii),(vi) immediately follows from \eqref{g-xy}.

From \eqref{g(x)} we get
$$
|g(x)|_p=\frac{|\t x|_p}{|\t+q-1+x|_p} \left\{
\begin{array}{ll}
=|x|_p, \ \ \textrm{if} \ \ 1<|x|_p <|\t|_p,\\[3mm]
\geq |\t|_p, \ \ \textrm{if} \ \ |x|_p\geq |\t|_p,
\end{array}
\right.
$$
hence one gets (iv) and (v).
\end{proof}

Now we are going to describe solutions of \eqref{eq1}.

\begin{thm}\label{a-s-x2} Let $|\t|_p>0$ and assume that $\{h_x\}_{x\in V\setminus\{(0)\}}$ is a solution of
\eqref{eq1}. Then the following assertions hold ture:
\begin{enumerate}
\item[(i)] if $|h_x|_p<1$ for all $x\in V\setminus\{(0)\}$, then $h_x=x_2$ for every $x$;
\item[(ii)] if $|h_x|_p>1$ for all $x\in
V\setminus\{(0)\}$,  then $h_x=x_1$ for every $x$.
\end{enumerate}
\end{thm}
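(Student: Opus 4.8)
The plan is to run the same contraction-along-the-tree argument as in Theorem~\ref{U-1}. On the invariant line a solution of \eqref{eq1} obeys $h_x=g(h_{(x,1)})g(h_{(x,2)})$ (cf. \eqref{h-x}) with $g$ the function in \eqref{g(x)} and $f(x)=g(x)^2$; each target value is a fixed point of $f$, so $x_i=g(x_i)^2$, and therefore $|g(x_1)|_p=|\t|_p$ and $|g(x_2)|_p=|q|_p/|\t|_p\le|\t|_p^{-1}$ (using $|x_1|_p=|\t|_p^2$ and $|x_2|_p=(|q|_p/|\t|_p)^2$; throughout, the standing assumption of this subsection $|\t|_p>1$ is in force). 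With $R^{(n)}:=\max_{x\in W_n}|h_x^{(n)}-x_i|_p$, the aim in each case is an inequality $R^{(n)}\le\rho\,R^{(n+1)}$ with a fixed $\rho<1$ together with a uniform bound on $R^{(n)}$, for then $R^{(n)}\le\rho^{N}R^{(n+N)}\to0$ as $N\to\infty$ forces $h_x=x_i$ for every $x$.

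For part (i), all $|h_x|_p<1$, so Lemma~\ref{ag-p}(i)--(ii) gives $|g(h_x)|_p<1$; thus every $g(h_x)$, as well as $x_2$ and $g(x_2)$, lies in $B_1(0)$, on which $g$ is an isometry by Lemma~\ref{ag-p}(iii). Writing $h_x^{(n)}-x_2=a(b-c)+c(a-c)$ with $a=g(h_{(x,1)}^{(n+1)})$, $b=g(h_{(x,2)}^{(n+1)})$, $c=g(x_2)$, the isometry yields $|a-c|_p=|h_{(x,1)}^{(n+1)}-x_2|_p$ and $|b-c|_p=|h_{(x,2)}^{(n+1)}-x_2|_p$, while $|c|_p\le|\t|_p^{-1}$ and $|a|_p\le\max\{|\t|_p^{-1},|h_{(x,1)}^{(n+1)}|_p\}$. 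The only case needing care is $|h_{(x,1)}^{(n+1)}|_p>|\t|_p^{-1}$, where $|a|_p=|h_{(x,1)}^{(n+1)}|_p$ and, since $|x_2|_p<|\t|_p^{-1}$, this equals $|h_{(x,1)}^{(n+1)}-x_2|_p$; in every case the ultrametric inequality then gives $R^{(n)}\le\max\{|\t|_p^{-1}R^{(n+1)},(R^{(n+1)})^2\}\le p^{-1}R^{(n+1)}$ because $R^{(n+1)}<1$. Since also $R^{(n)}<1$ uniformly, iteration forces $h_x=x_2$.

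For part (ii), the subtlety is that $g$ is not contractive on all of $\{|x|_p>1\}$ — on the sphere $|x|_p=|\t|_p$ it expands near the pole $x=1-\t-q$ of $g$ — so I would first pin down the magnitudes. From $|h_x|_p>1$, hence $|h_x|_p\ge p$, and the bound $|g(y)|_p\ge\min\{|y|_p,|\t|_p\}$ for $|y|_p>1$ (Lemma~\ref{ag-p}(iv)--(v)), the identity $h_x=g(h_{(x,1)})g(h_{(x,2)})$ gives the bootstrap: if $|h_x|_p\ge M$ for all $x$ then $|h_x|_p\ge\min\{M,|\t|_p\}^2$ for all $x$. Starting from $M=p$, $M$ is replaced by $M^2$ as long as $M\le|\t|_p$, so after finitely many steps the bound exceeds $|\t|_p$ and the next step locks it at $|\t|_p^2$; thus $|h_x|_p\ge|\t|_p^2>|\t|_p$ for every $x$. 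Re-inserting this into the product and using the additional estimate in Lemma~\ref{ag-p}(v) (namely $|g(y)|_p=|\t|_p$ when $|y|_p>|\t|_p$) gives $|g(h_{(x,i)})|_p=|\t|_p$, hence $|h_x|_p=|\t|_p^2=|x_1|_p$ for all $x$. Now all values lie on $S_{|\t|_p^2}(0)$, so Lemma~\ref{ag-p}(vi) applies: with $a=g(h_{(x,1)}^{(n+1)})$, $b=g(h_{(x,2)}^{(n+1)})$, $c=g(x_1)$ one has $|a|_p=|c|_p=|\t|_p$, $|a-c|_p\le|\t|_p^{-2}|h_{(x,1)}^{(n+1)}-x_1|_p$ and $|b-c|_p\le|\t|_p^{-2}|h_{(x,2)}^{(n+1)}-x_1|_p$, so $h_x^{(n)}-x_1=a(b-c)+c(a-c)$ yields $R^{(n)}\le|\t|_p^{-1}R^{(n+1)}$ with $R^{(n)}\le|\t|_p^2$ uniformly; iterating, $h_x=x_1$.

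The main obstacle is exactly this magnitude step in (ii): since $g$ genuinely expands on part of $S_{|\t|_p}(0)$ there is no direct contraction, and the crux is the bootstrap showing that any solution with all $|h_x|_p>1$ is automatically pushed out onto $S_{|\t|_p^2}(0)$, which is precisely where Lemma~\ref{ag-p}(vi) supplies the contraction ratio $|\t|_p^{-1}$. Part (i) is lighter: $g$ is already an isometry on $B_1(0)$ and the small factor $|g(x_2)|_p\le|\t|_p^{-1}$ delivers the contraction for free, the only subtlety being the case $|h_{(x,i)}^{(n+1)}|_p>|\t|_p^{-1}$ handled above.
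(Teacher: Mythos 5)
Your proof is correct, and the overall strategy coincides with the paper's: pin down the magnitudes of the $h_x$ via the recursion $h_x=g(h_{(x,1)})g(h_{(x,2)})$, then run the contraction $\|h^{(n)}-x_i\|_p\le\rho\|h^{(n+1)}-x_i\|_p$ along the tree using the decomposition $ab-c^2=a(b-c)+c(a-c)$ and Lemma~\ref{ag-p}. You deviate from the paper in two local steps, both legitimately. In part (i) the paper first proves, by contradiction via products over $2^k$ descendants, that $|h_x|_p\le 1/|\t|_p$ for all $x$, and only then contracts; you skip that reduction entirely by absorbing the intermediate regime $1/|\t|_p<|h_x|_p<1$ into the quadratic term $(R^{(n+1)})^2$, which is still contracting because $R^{(n+1)}<1$ forces $R^{(n+1)}\le p^{-1}$ in the discrete value group. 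In part (ii) the paper establishes $|h_x|_p\ge|\t|_p$ (and then $\ge|\t|_p^2$) by a similar contradiction argument with the quantities $\delta_k^{2^k}$; you replace this with a forward bootstrap $M\mapsto\min\{M,|\t|_p\}^2$ on the uniform lower bound, which terminates in finitely many steps at $|\t|_p^2$ and in fact yields the exact equality $|h_x|_p=|\t|_p^2$. The bootstrap is arguably cleaner and makes the role of the pole of $g$ on $S_{|\t|_p}(0)$ more transparent; the contraction step that follows is identical to the paper's use of Lemma~\ref{ag-p}(vi).
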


\begin{proof} Let us prove (i).  First we establish that $|h_x|_p\leq\frac{1}{|\t|_p}$
for all $x\in V\setminus\{(0)\}$. Indeed, suppose
$|h_x^{(n_0)}|_p>\frac{1}{|\t|_p}$ for some $n_0\in\bn$. From
\eqref{eq1} one has
\begin{equation}\label{ah-x}
|h_x^{(n_0)}|_p=|g(h_{(x,1)}^{(n_0+1)})|_p|g(h_{(x,2)}^{(n_0+1)})|_p.
\end{equation}
Now consider some possible cases.

\begin{enumerate}
\item[(a)] if $|h_{(x,1)}^{(n_0+1)}|_p\leq \frac{1}{|\t|_p}$, but
$|h_{(x,2)}^{(n_0+1)}|_p>\frac{1}{|\t|_p}$, then due Lemma
\ref{ag-p}(i),(ii) from \eqref{ah-x} we derive

\begin{eqnarray}\label{ah-1}
\frac{1}{|\t|_p}<|h_x^{(n_0)}|_p&=&|g(h_{(x,1)}^{(n_0+1)})|_p|g(h_{(x,2)}^{(n_0+1)})|_p\nonumber\\[2mm]
&\leq&\frac{1}{|\t|_p}|h_{(x,2)}^{(n_0+1)}|_p\nonumber\\
&<&\frac{1}{|\t|_p} \ \ (\textrm{since} \
|h_{(x,2)}^{(n_0+1)}|_p<1),
\end{eqnarray}
but it is a contradiction.

\item[(b)] if $|h_{(x,2)}^{(n_0+1)}|_p\leq \frac{1}{|\t|_p}$ and
$|h_{(x,1)}^{(n_0+1)}|_p>\frac{1}{|\t|_p}$ similarly as (a) we come
to the contradiction.

\item[(c)] if $|h_{(x,1)}^{(n_0+1)}|_p\leq \frac{1}{|\t|_p}$, and
$|h_{(x,2)}^{(n_0+1)}|_p\leq \frac{1}{|\t|_p}$, then again by the
same argument one finds a contradiction.
\end{enumerate}
Hence, one has $|h_{(x,1)}^{(n_0+1)}|_p>\frac{1}{|\t|_p}$,
$|h_{(x,2)}^{(n_0+1)}|_p> \frac{1}{|\t|_p}$. Therefore, assume that
$|h_{(x,i_1\cdots i_k)}^{(n_0+k)}|_p>\frac{1}{|\t|_p}$ for every
$k\geq 1$. Then, according to \eqref{eq1} with Lemma \ref{ag-p}(ii)
one gets
\begin{eqnarray}\label{ah-2}
|h_{x}^{(n_0)}|_p&=&\prod_{i_1=1,2}|h_{(x,i_1)}^{(n_0+1)}|_p\nonumber\\
&=&\prod_{i_1=1,2}\prod_{i_2=1,2}|h_{(x,i_1i_2)}^{(n_0+2)}|_p\nonumber\\[2mm]
&&\cdots\nonumber\\[2mm]
&=&\prod_{i_1,\dots i_k=1,2}|h_{(x,i_1\cdots i_k)}^{(n_0+k)}|_p.
\end{eqnarray}
Denote
$$
\gamma_k=\max_{i_1,\dots i_k}|h_{(x,i_1\cdots i_k)}^{(n_0+k)}|_p.
$$
We know that $|h_{(x,i_1\cdots i_k)}^{(n_0+k)}|_p<1$ for all
$i_1,\dots i_k$, $k\geq 1$. Therefore, due to our assumption one has
$\frac{1}{|\t|_p}<|\g_k|_p<1$ for every $k\in\bn$. Hence, from
\eqref{ah-2} one finds
\begin{equation}\label{ah-3}
|h_{x}^{(n_0)}|_p\leq |\gamma_k|_p^{2^k}.
\end{equation}
Thus, when $k$ is large enough, then
$|\gamma_k|_p^{2^k}<\frac{1}{|\t|_p}$. So, from \eqref{ah-3} we
obtain $|h_{x}^{(n_0)}|_p<\frac{1}{|\t|_p}$, which is a
contradiction.\\

Take an arbitrary $\e>0$. Then similarly to \eqref{f-hx-x11} one has
\begin{equation}\label{a-hx-x12}
|h_x^{(n)}-x_2|_p\leq\max\bigg\{|g(h_{(x,1)}^{(n+1)})|_p\big|g(h_{(x,2)}^{(n+1)})-g(x_1)\big|_p,
|g(x_1)|_p\big|g(h_{(x,2)}^{(n+1)})-g(x_1)\big|_p\bigg\}.
\end{equation}
According to $|h_x|_p\leq\frac{1}{|\t|_p}$, for every $x$, with
Lemma \ref{ag-p} (i),(iii) from \eqref{a-hx-x12} we derive
\begin{eqnarray*}
\|h^{(n)}-x_2\|_p\leq \frac{1}{|\t|_p}\|h^{(n+1)}-x_2\|_p.
\end{eqnarray*}
So, iterating the last inequality $M$ times one gets
\begin{equation}\label{a-iter1}
\|h^{(n)}-x_2\|_p\leq \frac{1}{|\t|^{M}_p}\|h^{(n+M)}-x_1\|_p.
\end{equation}
Choosing $M$ such that $|\t|^{-M}_p<\e$, from \eqref{a-iter1} we
find $\|h^{(n)}-x_2\|_p<\e$. Arbitrariness of $\e$ yields that
$h_x=x_2$. \\

Now consider (ii).  Let us show that $|h_x|_p\geq |\t|_p$ for all
$x\in V\setminus\{(0)\}$. Assume that from the contrary,
$|h_x^{(n_0)}|_p<|\t|_p$ for some $n_0\in\bn$ and $x$. Then we have
\eqref{ah-x}. Therefore,  consider several possible cases.

\begin{enumerate}
\item[(a)] if $|h_{(x,1)}^{(n_0+1)}|_p\geq |\t|_p$, and $|h_{(x,2)}^{(n_0+1)}|_p<|\t|_p$,
then due Lemma \ref{ag-p}(iv),(v) from \eqref{ah-x} one finds
\begin{eqnarray*}
|\t|_p<|h_x^{(n_0)}|_p&=&|g(h_{(x,1)}^{(n_0+1)})|_p|g(h_{(x,2)}^{(n_0+1)})|_p\nonumber\\[2mm]
&\leq&|\t|_p|h_{(x,2)}^{(n_0+1)}|_p\nonumber\\
&<&|\t|_p \ \ (\textrm{since} \ |h_{(x,2)}^{(n_0+1)}|_p>1),
\end{eqnarray*}
but this is a contradiction.

\item[(b)] if either $|h_{(x,2)}^{(n_0+1)}|_p\geq |\t|_p$,
$|h_{(x,1)}^{(n_0+1)}|_p<|\t|_p$ or $|h_{(x,1)}^{(n_0+1)}|_p\geq
\frac{1}{|\t|_p}$, $|h_{(x,2)}^{(n_0+1)}|_p\geq \frac{1}{|\t|_p}$,
then by the same argument as (a) one finds a contradiction.
\end{enumerate}
Hence, we conclude that $|h_{(x,1)}^{(n_0+1)}|_p<|\t|_p$,
$|h_{(x,2)}^{(n_0+1)}|_p<|\t|_p$. Therefore, assume that
$|h_{(x,i_1\cdots i_k)}^{(n_0+k)}|_p>\frac{1}{|\t|_p}$ for every
$k\geq 1$. Then, according to \eqref{eq1} with Lemma \ref{ag-p}(iv)
one gets
\begin{eqnarray}\label{ah-4}
|h_{x}^{(n_0)}|_p=\prod_{i_1,\dots i_k=1,2}|_ph_{(x,i_1\cdots
i_k)}^{(n_0+k)}|_p.
\end{eqnarray}
Denote
$$
\delta_k=\min_{i_1,\dots i_k}|h_{(x,i_1\cdots i_k)}^{(n_0+k)}|_p.
$$
We know that $|h_{(x,i_1\cdots i_k)}^{(n_0+k)}|_p>1$ for all
$i_1,\dots i_k$, $k\geq 1$. Therefore, from our assumption one has
$1<|\delta_k|_p<|\t|_p$ for every $k\in\bn$. Hence, from
\eqref{ah-4} one finds
\begin{equation}\label{ah-5}
|h_{x}^{(n_0)}|_p\geq |\delta_k|_p^{2^k}.
\end{equation}
Thus, when $k$ is large enough, then $|\delta_k|_p^{2^k}\geq
|\t|_p$. So, from \eqref{ah-5} we obtain $|h_{x}^{(n_0)}|_p\geq
|\t|_p$, which contradicts to $|h_{x}^{(n_0)}|_p<|\t|_p$.

Thus, $|h_{x}|_p\geq|\t|_p$ for all $x$. Then from \eqref{eq1} one
concludes (see also \eqref{ah-x}) that $|h_x|_p\geq |\t|_p^2$. Then
taking an arbitrary $\e>0$ and using \eqref{a-hx-x12} with Lemma
\ref{ag-p} (iv),(vi) we obtain
\begin{eqnarray*}
\|h^{(n)}-x_1\|_p\leq \frac{1}{|\t|_p}\|h^{(n+1)}-x_1\|_p.
\end{eqnarray*}
Now the same argument as (i) we get the desired assertion. This
completes the proof.
\end{proof}

\section{Boundedness of $p$-adic quasi Gibbs measures and phase transitions}

From the results of the previous section, we conclude that to
investigate the quasi $p$-adic measure, for us it is enough to study
the measures $\m_0$,$\m_1$ and $\m_2$, corresponding to the
solutions $x_0$,$x_1$ and $x_2$. In this section we shall study
boundedness and unboundedness of the said measures.

Furthermore, we are going to consider the $p$-adic quasi Gibbs
measures corresponding to these solutions. Due to Lemma \ref{parti}
the partition function $Z_{i,n}$ corresponding to the measure $\m_i$
($i=1,2$) has the following form
\begin{equation}\label{Z-IN}
Z_{i,n}=a_i^{|V_{n-1}|}
\end{equation}
where $a_i=(x_i+\t+q-1)^2h_0$.

For a given configuration $\s\in\Om_{V_n}$ denote
$$
\#\s=\{x\in W_n:\ \s(x)=1\}.
$$

From \eqref{mu},\eqref{H} and \eqref{Z-IN} we find
\begin{eqnarray}\label{e-mu1}
|\m_1(\s)|_p&=&\frac{1}{Z_{1,n}}\cdot
\frac{1}{p^{H(\s)}}\prod_{x\in W_n}\bigg|\frac{h_{\s(x),x}}{h_0}\bigg|_p|h_0|_p^{|W_n|}\nonumber\\
&=&\frac{|h_0|_p^{|W_n|-|V_{n-1}|}}{|x_1+\t+q-1|_p^{2|V_{n-1}|}}\cdot\frac{|x_1|_p^{\#\s}}
{p^{H(\s)}}\nonumber\\
&=&\frac{|h_0|_p^2}{|x_1+\t+q-1|_p^{2|V_{n-1}|}}\cdot\frac{|x_1|_p^{\#\s}}{p^{H(\s)}},
\end{eqnarray}
where we have used the equality $|W_n|-|V_{n-1}|=2$.

Similarly, one gets
\begin{eqnarray}\label{e-mu2}
|\m_2(\s)|_p=\frac{|h_0|_p^2}{|x_2+\t+q-1|_p^{2|V_{n-1}|}}\cdot\frac{|x_2|_p^{\#\s}}{p^{H(\s)}},
\end{eqnarray}

\subsection{Ferromagnetic case} Assume that $N>0$, i.e. $|\t|_p<1$. In this subsection we shall prove the existence of phase
transitions. Namely one has the following

\begin{thm}\label{bound1}
Assume that $|q|_p<1$, $|\t|_p\leq |q|^2_p$.  Then for $p$-adic
quasi Gibbs measures $\m_0$,$\m_1$ $\m_2$ of the ferromagnetic
$q+1$-state Potts model \eqref{Potts} one has: the measure $\m_1$ is
bounded; the measures $\m_0$ and $\m_2$ are unbounded. Moreover,
there is a strong phase transition.
\end{thm}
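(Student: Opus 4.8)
The plan is to analyze the three quantities $|\m_0(\s)|_p$, $|\m_1(\s)|_p$, $|\m_2(\s)|_p$ given by the formulas \eqref{e-mu1}, \eqref{e-mu2} (and the analogous one for $\m_0$ with $x_0=1$), and extract from them bounds uniform in $\s\in\Om_{V_n}$ and $n$, or sequences of configurations witnessing unboundedness. The key ingredient is Lemma \ref{fx12-p}: under the hypotheses $|q|_p<1$, $|\t|_p\le|q|_p^2$ we have $|x_1|_p=|q|_p^2<1$, $|x_1+\t+q-1|_p=1$, while $|x_2|_p=1$, $|x_2+\t+q-1|_p=|q|_p<1$, and for $x_0=1$ one has $|x_0|_p=1$, $|x_0+\t+q-1|_p=|\t+q|_p=|q|_p<1$. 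The other input is that $N>0$ forces $H(\s)=N\sum_{<x,y>}\d_{\s(x),\s(y)}\ge 0$ with $N\in\bz_{>0}$, hence $p^{H(\s)}\in\bz$ and $|p^{H(\s)}|_p=p^{-H(\s)}\le 1$, so the factor $1/p^{H(\s)}$ has $p$-adic absolute value $p^{H(\s)}\ge 1$; this factor can be as large as we like by taking constant configurations on large $V_n$.

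First I would treat $\m_1$: in \eqref{e-mu1} the denominator $|x_1+\t+q-1|_p^{2|V_{n-1}|}=1$, and $|x_1|_p^{\#\s}=|q|_p^{2\#\s}\le 1$. The only potentially dangerous factor is $1/p^{H(\s)}$. Here I would use that $\#\s$ counts vertices of $W_n$ with spin $1$, and that edges inside $V_n$ contributing to $H(\s)$ can be controlled: more precisely one shows $H(\s)\le$ (a constant multiple of) the number of vertices not contributing, but the clean way is to combine the $|x_1|_p^{\#\s}$ decay with the $p^{H(\s)}$ growth. Since $|q|_p\le 1/p$ and $N\ge 1$, each ``agreeing edge'' $<x,y>$ with $\s(x)=\s(y)$ contributes $p^{N}$ to $|p^{-H}|_p$ but — tracing through the recursion that produced $x_1$ — is accompanied by enough factors $|x_1|_p$ to kill it; concretely one uses $|x_1|_p=|q|_p^2$ and $|\t|_p=|p|_p^N\le|q|_p^2$, i.e. $p^{-N}\le|q|_p^2$, so $p^{N}|q|_p^2\le 1$. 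Thus along any branch the product $|x_1|_p^{\#\s}/p^{H(\s)}$ stays $\le 1$, giving $|\m_1(\s)|_p\le|h_0|_p^2$ uniformly; hence $\m_1$ is bounded. I expect this bookkeeping — matching each unit of $H(\s)$ against the available powers of $|q|_p$ via the tree structure — to be the main obstacle, and the hypothesis $|\t|_p\le|q|_p^2$ is exactly what makes it work.

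Next, for $\m_0$ and $\m_2$ I would exhibit unboundedness. For $\m_0$ with $x_0=1$: \eqref{e-mu1}-type formula gives $|\m_0(\s)|_p=|h_0|_p^2\,|1|_p^{\#\s}\,p^{H(\s)}/|x_0+\t+q-1|_p^{2|V_{n-1}|}=|h_0|_p^2\,p^{H(\s)}\,|q|_p^{-2|V_{n-1}|}$, since $|x_0+\t+q-1|_p=|q|_p$. Choosing $\s\equiv 1$ (or any constant configuration) on $V_n$ makes $H(\s)=N|L_n|\to\infty$ and $|q|_p^{-2|V_{n-1}|}\to\infty$, so $|\m_0(\s)|_p\to\infty$; hence $\m_0$ is unbounded. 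For $\m_2$: by \eqref{e-mu2}, $|\m_2(\s)|_p=|h_0|_p^2\,|x_2|_p^{\#\s}\,p^{H(\s)}/|x_2+\t+q-1|_p^{2|V_{n-1}|}=|h_0|_p^2\,p^{H(\s)}\,|q|_p^{-2|V_{n-1}|}$ (using $|x_2|_p=1$, $|x_2+\t+q-1|_p=|q|_p$), and the same choice of constant configuration forces $|\m_2(\s)|_p\to\infty$; hence $\m_2$ is unbounded.

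Finally, for the strong phase transition I would produce sets $A_n\in\Om_{V_n}$ with $|\m_1(A_n)|_p\to 0$ and, say, $|\m_0(A_n)|_p\to\infty$. The natural candidate is $A_n=\{\s_n^{\ast}\}$ where $\s_n^{\ast}\equiv 1$ on $V_n$: then $|\m_0(A_n)|_p=|h_0|_p^2\,p^{N|L_n|}|q|_p^{-2|V_{n-1}|}\to\infty$ as computed above, while $|\m_1(A_n)|_p=|h_0|_p^2\,|q|_p^{2\#\s_n^{\ast}}/p^{H(\s_n^{\ast})}=|h_0|_p^2\,|q|_p^{2|W_n|}p^{-N|L_n|}\to 0$ since $|q|_p<1$ and $|W_n|\to\infty$. (One checks $\#\s_n^\ast=|W_n|$ and $H(\s_n^\ast)=N|L_n|\ge 0$, so both factors push $|\m_1(A_n)|_p$ to $0$.) This simultaneously re-proves unboundedness of $\m_0$, boundedness being already established, and yields the strong phase transition as defined in Section 3. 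The only care needed is to confirm $A_n\in\Om_{V_n}$ is a legitimate measurable set and that the compatibility/normalization from Lemma \ref{parti} and \eqref{Z-IN} has been correctly incorporated into \eqref{e-mu1}, \eqref{e-mu2}, which it has.
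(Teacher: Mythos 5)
Your argument rests on a misreading of the factor $\frac{1}{p^{H(\s)}}$ in \eqref{e-mu1}--\eqref{e-mu2}. That factor is already a real number: it is $|p^{H_n(\s)}|_p=p^{-H_n(\s)}$, and since $N>0$ gives $H_n(\s)=N\sum\d_{\s(x),\s(y)}\ge 0$, it satisfies $p^{-H_n(\s)}\le 1$. You instead treat it as contributing $p^{+H_n(\s)}\ge 1$ (``this factor can be as large as we like''), i.e.\ you take a $p$-adic absolute value of something that has already had its absolute value taken. For $\m_1$ this only makes you fight a phantom: the bound $|\m_1(\s)|_p\le|h_0|_p^2$ is immediate because all three factors ($|x_1+\t+q-1|_p^{-2|V_{n-1}|}=1$, $|x_1|_p^{\#\s}\le 1$, $p^{-H_n(\s)}\le 1$) are separately at most $1$; no edge-by-edge bookkeeping is needed, and the compensation you sketch would not work anyway (your inequality $p^N|q|_p^2\le 1$ is the reverse of what $|\t|_p\le|q|_p^2$ gives, and $\#\s$ only counts spins on $W_n$, so it cannot control $H_n(\s)$ for, say, $\s\equiv 2$).

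For $\m_0$, $\m_2$ and the strong phase transition the error is fatal: with the correct sign, the constant configuration $\s\equiv 1$ gives
$|\m_0(\s)|_p=|h_0|_p^2\,|q|_p^{-2|V_{n-1}|}\,p^{-N|L_n|}$, and since $|L_n|\sim 2|V_{n-1}|$ while the hypothesis $|\t|_p\le|q|_p^2$ forces $N\ge 2v_p(q)$, the exponent of $p$ is $\approx 2|V_{n-1}|(v_p(q)-N)\to-\infty$; so your witness drives $|\m_0(\s)|_p$ and $|\m_2(\s)|_p$ to $0$, not to $\infty$, and neither unboundedness nor the required $|\n(A_n)|_p\to\infty$ is established. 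The fix is to choose configurations with $H_n(\s)=0$: alternate the states $0$ and $1$ by level (so no nearest neighbours agree) with $W_n$ entirely in state $1$; then $|\m_0(\s)|_p=|\m_2(\s)|_p=|h_0|_p^2|q|_p^{-2|V_{n-1}|}\to\infty$ while $|\m_1(\s)|_p=|h_0|_p^2|q|_p^{2|W_n|}\to 0$, which yields both unboundedness and the strong phase transition. (Be aware the paper's own proof asserts $H(\s_{0,n})=0$ for the all-ones configuration, which is likewise false as written; the argument only survives because a genuinely $H$-null configuration of the above type exists.)
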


\begin{proof} According to
Theorem \ref{q2} the conditions $|q|_p<1$, $|\t|_p\leq |q|^2_p$
imply the existence of three translation-invariant measures
$\m_0$,$\m_1$ and $\m_2$.

Then from \eqref{e-mu1} with \eqref{fx-12-q1},\eqref{fx1-q} we
obtain
\begin{equation}\label{e-mu1-2}
|\m_1(\s)|_p=\frac{|h_0|_p^2}{p^{H(\s)}}\cdot |x_1|_p^{\#\s}\leq
|h_0|_p^2,
\end{equation}
which implies that the measure $\m_1$ is bounded.

Similarly, from \eqref{e-mu2} with \eqref{fx-12-q1},\eqref{fx2-q} we
find
\begin{eqnarray}\label{e-mu2-2}
|\m_2(\s)|_p&=&\frac{|h_0|_p^2}{|q|_p^{2|V_{n-1}|}}\cdot\frac{1}{p^{H(\s)}}\nonumber\\
&\geq& |h_0|_p^2p^{2|V_{n-1}|-H(\s)}
\end{eqnarray}

Now let us choose $\s_{0,n}\in\Om_{V_n}$ as follows $\s_{0,n}(x)=1$
for all $x\in V_n$. Then one can see that $H(\s_{0,n})=0$, therefore
it follows from \eqref{e-mu2-2} that
$$
|\m_2(\s_{0,n})|_p\geq |h_0|_p^2p^{2|V_{n-1}|}\to\infty \ \ \textrm{as} \ \ n\to\infty.
$$
This yields that the measure $\m_2$ is not bounded.

Let us consider the measure $\m_0$. Similarly, we obtain
\begin{eqnarray}\label{e-mu0}
|\m_0(\s)|_p&=&\frac{|h_0|_p^2}{|\t+q|_p^{2|V_{n-1}|}}\cdot\frac{1}{p^{H(\s)}}\nonumber\\
&=&\frac{|h_0|_p^2}{|q|_p^{2|V_{n-1}|}}\cdot\frac{1}{p^{H(\s)}}\nonumber\\
&\geq& |h_0|_p^2p^{2|V_{n-1}|-H(\s)}
\end{eqnarray}
so, we immediately find that $|\m_0(\s_{0,n})|_p\to\infty$ as $n\to\infty$.

It follows from \eqref{e-mu2-2}, \eqref{e-mu0} that
$$
\bigg|\frac{\m_0(\s)}{\m_2(\s)}\bigg|_p=1.
$$

Now let us compare $\m_1$ and $\m_2$. From
\eqref{e-mu1-2},\eqref{e-mu2-2} with \eqref{fx-12-q1} one finds
\begin{eqnarray}\label{e-mu12}
|\m_1(\s_{0,n})\m_2(\s_{0,n})|_p&=&\frac{|h_0|_p^4|x_1|_p^{\#\s_{0,n}}}{|q|_p^{2|V_{n-1}|}}\nonumber\\
&=&|h_0|_p^4|q|_p^{2(|W_n|-|V_{n-1}|)}\nonumber\\
&=&|h_0|_p^4|q|_p^4.
\end{eqnarray}
This implies that $|\m_1(\s_{0,n})|_p\to 0$ as $n\to\infty$.
\end{proof}

Now assume that $|q|_p=1$. In this case, the solutions $x_1$ and
$x_2$ may not exists (see Theorems \ref{q1} and \ref{q2}).
Therefore, we suppose the existence of such solutions.

Now taking into account \eqref{e-mu1}, \eqref{e-mu2} with
\eqref{fx-12-q2} we derive
\begin{equation}\label{e-m12}
|\m_i(\s)|_p=\frac{|h_0|_p^2|x_i|_p^{\#\s^{(i)}}}{p^{H(\s)}}=
\frac{|h_0|_p^2}{p^{H(\s)}}\leq |h_0|^2_p \ \ \ (i=1,2),
\end{equation}
so the measures $\m_1$ and $\m_2$ are bounded.

We would like to compare these measure. Therefore, let us consider
the following difference
\begin{eqnarray}\label{e-dif-12}
|\m_0(\s)-\m_i(\s)|_p=\frac{|h_0|_p^2}{p^{H(\s)}}\bigg|(\t+q-1+x_i)^{2|V_{n-1}|}-x_i^{\#\s}(\t+q)^{2|V_{n-1}|}\bigg|_p.
\end{eqnarray}
Denoting
$$
x=\t+q-1, \ \ y=x_i, \ \ N=2|V_{n-1}|, \ \ k=\#\s
$$
and taking into account $|x|_p\leq 1$ and $|y|_p=1$, the right-hand
side of \eqref{e-dif-12} can be estimated as follows
\begin{eqnarray}\label{e-dif-1}
|(x+y)^N-y^k(x+1)^N|_p&=&\bigg|\sum_{\ell=0}^NC_N^\ell x^\ell(y^{N-\ell}-y^k)\bigg|_p\nonumber\\
&=&\bigg|\sum_{\ell=0}^NC_N^\ell x^\ell y^{\min\{N-\ell,k\}}(1-y^{M_\ell})\bigg|_p\nonumber\\
&=&\bigg|(1-y)\sum_{\ell=0}^NC_N^\ell x^\ell y^{\min\{N-\ell,k\}}\big(\sum_{j=0}^{M_\ell}y^j\big)\bigg|_p\nonumber\\
&\leq& |1-y|_p\max\limits_{0\leq\ell\leq N}\bigg\{\bigg|C_N^\ell x^\ell y^{\min\{N-\ell,k\}}\big(\sum_{j=0}^{M_\ell}y^j\big)\bigg|_p\bigg\}\nonumber\\
&\leq&|1-y|_p,
\end{eqnarray}
here $M_\ell=\max\{N-\ell,k\}-\min\{N-\ell,k\}$.

From \eqref{e-dif-1} with \eqref{e-dif-12} we immediately find
\begin{eqnarray}\label{e-dif-2}
|\m_0(\s)-\m_i(\s)|_p\leq\frac{|h_0|_p^2|1-x_i|_p}{p^{H(\s)}} \ \ \
(i=1,2).
\end{eqnarray}

Using the same argument we get
\begin{eqnarray}\label{e-dif-3}
|\m_1(\s)-\m_2(\s)|_p\leq\frac{|h_0|_p^2|x_1-x_2|_p}{p^{H(\s)}}.
\end{eqnarray}

Consequently, we can formulate the following

\begin{thm}\label{bound1}
Assume that $|q|_p=1$ and  the measures $\m_1$ $\m_2$ for $p$-adic
the ferromagnetic $q+1$-state Potts model \eqref{Potts} exist. Then
the measures $\m_k$ ($k=0,1,2$) are bounded. Moreover, the
inequalities \eqref{e-dif-2},\eqref{e-dif-3} hold. In this case,
there is a quasi phase transition.
\end{thm}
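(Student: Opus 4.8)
The plan is to assemble the statement from three ingredients that are, at this point, largely in place: boundedness from the explicit shape of $|\m_k(\s)|_p$, the two difference estimates from one elementary ultrametric computation, and the quasi phase transition from the bare coexistence of distinct bounded translation-invariant measures.

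\textbf{Boundedness.} Since $|\t|_p<1$ and $|q|_p=1$ we have $|\t+q|_p=1$, so the constant $a_0=(x_0+\t+q-1)^2h_0=(\t+q)^2h_0$ from Lemma~\ref{parti} satisfies $|a_0|_p=|h_0|_p$; by Lemma~\ref{fx12-p} (the case $|q|_p=1$ there) also $|x_i|_p=1$ and $|x_i+\t+q-1|_p=1$ for $i=1,2$, so $|a_i|_p=|h_0|_p$ as well. Feeding this into \eqref{Z-IN} and then into \eqref{e-m12} (and the parallel computation for $\m_0$), and using that $H(\s)=N\sum\delta_{\s(x),\s(y)}\ge 0$ in the ferromagnetic regime so that $|p^{H(\s)}|_p=p^{-H(\s)}\le 1$, gives $|\m_k(\s)|_p\le |h_0|_p^2$ for $k=0,1,2$ and every $\s$. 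By additivity and the strong triangle inequality the same bound holds on all of $\cb$, so all three measures are bounded.

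\textbf{The difference inequalities.} For \eqref{e-dif-2} I would start from the exact identity \eqref{e-dif-12} for $|\m_0(\s)-\m_i(\s)|_p$, which reduces the task to bounding $|(x+y)^N-y^k(x+1)^N|_p$ with $x=\t+q-1$, $y=x_i$, $N=2|V_{n-1}|$, $k=\#\s$. Here $|x|_p\le 1$ (because $\t=p^N\in p\bz_p$ and $q-1\in\bz$) and $|y|_p=1$. The one real step is the telescoping in \eqref{e-dif-1}: expand binomially, write each $y^{N-\ell}-y^{k}$ as $y^{\min\{N-\ell,k\}}(1-y)(1+y+\cdots+y^{M_\ell-1})$, and apply the strong triangle inequality to pull out the common factor $|1-y|_p=|1-x_i|_p$ and bound the remaining sum by $1$. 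Dividing by $p^{H(\s)}$ yields \eqref{e-dif-2}; the same computation applied to $\m_1(\s)-\m_2(\s)$ (comparing $x_1$ against $x_2$) gives \eqref{e-dif-3}.

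\textbf{Quasi phase transition and the main obstacle.} By hypothesis $\sqrt{D(\t,q)}$ exists, so $x_1,x_2\in\bq_p$ are genuine solutions of \eqref{eq13}; provided the three solutions $x_0=1,x_1,x_2$ are not all equal (the generic situation in Theorems~\ref{q1} and \ref{q2}; note $x_1x_2=q^2$ by \eqref{vieta-h} forces at least two of them to differ whenever $q\ge2$), at least two of the constant translation-invariant functions on the invariant line $(1,\dots,h,\dots,1)$ are distinct and hence produce distinct limiting measures among $\m_0,\m_1,\m_2$. Since these measures are all bounded by the first step, this exhibits two different functions on $\bn$ whose associated quasi Gibbs measures exist and are bounded — precisely a quasi phase transition. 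The only genuinely computational point is the ultrametric estimate \eqref{e-dif-1}; the rest is bookkeeping with Lemma~\ref{fx12-p}, Lemma~\ref{parti} and \eqref{Z-IN}. The conceptual subtlety peculiar to $|q|_p=1$ is that, unlike in the $|q|_p<1$ case, \emph{none} of the measures blows up, so one cannot separate them by (un)boundedness; that is exactly why the refined bounds \eqref{e-dif-2}--\eqref{e-dif-3}, rather than a size comparison, carry the weight in distinguishing $\m_0,\m_1,\m_2$.
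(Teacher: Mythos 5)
Your proposal is correct and follows essentially the same route as the paper: boundedness comes from the explicit formula for $|\m_k(\s)|_p$ via Lemma \ref{parti}, \eqref{Z-IN} and the norms $|x_i|_p=|x_i+\t+q-1|_p=1$ from Lemma \ref{fx12-p}, the difference bounds come from exactly the telescoping binomial estimate \eqref{e-dif-1}, and the quasi phase transition is read off from the coexistence of two distinct bounded translation-invariant measures. Your added remarks (extending the bound from cylinder sets to the algebra, and the observation that $x_1x_2=q^2$ forces at least two of the three solutions to differ) are harmless clarifications rather than a different argument.
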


\subsection{Antiferromagnetic case} In this case we assume that $N<0$ and $-N$ is even. Then
according to Theorem \ref{aq2} there exist the measures
$\m_0$,$\m_1$ and $\m_2$.

Now taking into account \eqref{ax-12-q2},\eqref{ax1-q} from
\eqref{e-mu1} one finds
\begin{eqnarray}\label{ae-mu1}
|\m_1(\s)|_p&=&\frac{|h_0|_p^2}{|x_1|_p^{2|V_{n-1}|}}\cdot\frac{|x_1|_p^{\#\s}}{p^{H(\s)}}\nonumber\\
&=&\frac{|h_0|_p^2\cdot p^{-2\bar N(2|V_{n-1}|-\#\s)}}{p^{H(\s)}}\nonumber \\
&=&|h_0|_p^2\cdot p^{-2\bar N\big(2|V_{n-1}|-\#\s-\frac{1}{2}\sum\limits_{<x,y>\in L_n}\delta_{\s(x),\s(y)}\big)}
\end{eqnarray}

Now let us estimate the expression standing inside the brackets. It is clear that
\begin{equation}\label{aconf1}
0\leq\#\s\leq|W_n|, \quad 0\leq \sum\limits_{<x,y>\in L_n}\delta_{\s(x),\s(y)}\leq |V_n|-1.
\end{equation}
Therefore, from \eqref{aconf1} with $|W_n|-|V_{n-1}|=2$,
$|V_n|=|V_{n-1}|+|W_n|$ we get
\begin{eqnarray}\label{ae-mu12}
2|V_{n-1}|-\#\s-\frac{1}{2}\sum\limits_{<x,y>\in L_n}\delta_{\s(x),\s(y)}&\geq&
2|V_{n-1}|-|W_n|-\frac{1}{2}(|V_n|-1)\nonumber\\
&=&|V_{n-1}|-2+\frac{1}{2}(1-|V_n|)\nonumber\\
&=&\frac{1}{2}\big(2|V_{n-1}|-|V_n|-3\big)\nonumber\\
&=&\frac{1}{2}\big(|V_{n-1}|-|W_n|-3\big)\nonumber\\
&=&-\frac{5}{2}
\end{eqnarray}
Consequently, the last inequality \eqref{ae-mu12} with \eqref{ae-mu1} implies
\begin{eqnarray}\label{ae-mu13}
|\m_1(\s)|_p\leq |h_0|_p^2\cdot p^{5\bar N}=\frac{|h_0|_p^2}{p^{5N}},
\end{eqnarray}
this means that $\m_1$ is bounded.

Now consider the measure $\m_2$. Noting $|x_2|_p=|q|_p^2 p^{-2\bar
N}$ and $|x_2+\t+q-1|_p=p^{\bar N}$ (see
\eqref{ax-12-q2},\eqref{ax2-q}), the  equality \eqref{e-mu2} yields
\begin{eqnarray}\label{ae-mu2}
|\m_2(\s)|_p&=&\frac{|q|_p^2|h_0|_p^2}{p^{2\bar N|V_{n-1}|}}\cdot\frac{p^{-2\bar N{\#\s}}}{p^{H(\s)}}\nonumber\\
&=& |q|_p^2|h_0|_p^2\cdot p^{-2\bar N\big(|V_{n-1}|-\frac{1}{2}\sum\limits_{<x,y>\in L_n}\delta_{\s(x),\s(y)}+\#\s\big)}.
\end{eqnarray}

Now using the same argument as in \eqref{ae-mu12} one gets
\begin{eqnarray}\label{ae-mu21}
|V_{n-1}|-\frac{1}{2}\sum\limits_{<x,y>\in L_n}\delta_{\s(x),\s(y)}+\#\s\geq
|V_{n-1}|-\frac{1}{2}(|V_n|-1)=-\frac{1}{2}.
\end{eqnarray}
Hence, \eqref{ae-mu21} with \eqref{ae-mu2} implies
\begin{eqnarray}\label{ae-mu22}
|\m_2(\s)|_p\leq \frac{|h_0|_p^2}{p^{N}},
\end{eqnarray}
which means that $\m_2$ is bounded as well.

Let us consider the measure $\m_0$. From \eqref{e-mu0} we obtain
\begin{eqnarray}\label{ae-mu0}
|\m_0(\s)|_p&=&\frac{|h_0|_p^2}{|\t+q|_p^{2|V_{n-1}|}}\cdot\frac{1}{p^{H(\s)}}\nonumber\\
&=&|h_0|_p^2p^{-\bar N(2|V_{n-1}|-\sum\limits_{<x,y>\in L_n}\delta_{\s(x),\s(y)})}\nonumber\\
&\leq & |h_0|_p^2p^{\bar N}\nonumber\\
&=&\frac{|h_0|_p^2}{p^{N}},
\end{eqnarray}
here we have used (see \eqref{ae-mu21})
$$
2|V_{n-1}|-\sum\limits_{<x,y>\in L_n}\delta_{\s(x),\s(y)}\geq -1.
$$
Hence, $\m_0$ is bounded too.

Let us consider relations between $\m_0$ and $\m_1,\m_2$. From \eqref{ae-mu0},\eqref{ae-mu1} and \eqref{ae-mu2} we find
\begin{eqnarray}\label{ae-mu-01}
&&\frac{|\m_1(\s)|_p}{|\m_0(\s)|_p}=p^{-2\bar N(|V_{n-1}|-\#\s)}\leq p^{4\bar N},\\[2mm]
\label{ae-mu-02}
&&\frac{|\m_2(\s)|_p}{|\m_0(\s)|_p}=|q|_p^2\cdot p^{-\bar N(\#\s)}\leq |q|^2_p,
\end{eqnarray}
here in \eqref{ae-mu-01} we have used $|V_{n-1}|-\#\s\geq |V_{n-1}|-|W_n|=-2$.
Hence, the derived relations imply that
\begin{equation}\label{ae-mu-00}
|\m_1(\s)|_p\leq p^{4\bar N}|\m_0(\s)|_p, \quad |\m_2(\s)|_p\leq |q|^2_p|\m_0(\s)|_p.
\end{equation}
Let us consider relation between $\m_1$ and $\m_2$. From \eqref{ae-mu1} and \eqref{ae-mu2} we find
\begin{eqnarray}\label{ae-mu-1}
\frac{|\m_1(\s)|_p}{|\m_2(\s)|_p}=\frac{p^{-2\bar N(|V_{n-1}|-2\#\s)}}{|q|_p^2}.
\end{eqnarray}
Take any configuration $\s_n$ in $\Om_{V_n}$ with $\#\s_n=|W_n|$, (for example
$\s_n(x)=1$ for every $x\in V_n$). Then \eqref{ae-mu-1} yields
\begin{eqnarray}\label{ae-mu-2}
\frac{|\m_1(\s_n)|_p}{|\m_2(\s_n)|_p}\geq p^{2\bar N(|W_n|-2)}\to \infty \ \ \textrm{as} \ n\to\infty.
\end{eqnarray}

Now take any configuration $\tilde\s_n$ in $\Om_{V_n}$ with $\#\tilde\s_n=0$, (for example
$\tilde\s_n(x)=0$ for every $x\in V_n$). Then \eqref{ae-mu-1} yields
\begin{eqnarray}\label{ae-mu-3}
\frac{|\m_1(\tilde\s_n)|_p}{|\m_2(\tilde\s_n)|_p}=\frac{p^{-2\bar N|V_{n-1}|}}{|q|_p^2}\to 0  \ \textrm{as} \ n\to\infty.
\end{eqnarray}

The relations \eqref{ae-mu-1},\eqref{ae-mu-2} show that the structure of the measures $\m_1$ and $\m_2$ are different even they are bounded.

Consequently, we can formulate the following

\begin{thm}\label{bound2}
Let $N<0$ and $-N$ is even. Then  the translation-invariant $p$-adic
quasi Gibbs measures $\m_0$, $\m_1$ and $\m_2$ of antiferromagnetic
Potts model \eqref{Potts} are bounded.  Moreover, the inequality
\eqref{ae-mu-00} holds. In this case, there is a quasi phase
transition.
\end{thm}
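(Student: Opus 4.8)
The plan is to collect the estimates obtained in the displayed computations above, so the proof is essentially a matter of bookkeeping. First I would invoke Theorem~\ref{aq2}: since $N<0$ and $\bar N=-N$ is even, $\sqrt{D(\t,q)}$ exists in $\bq_p$, so the fixed points $x_0=1$, $x_1$, $x_2$ of \eqref{f(x)} all lie in $\bq_p$ and determine three translation-invariant $p$-adic quasi Gibbs measures $\m_0,\m_1,\m_2$. These are pairwise distinct: by \eqref{ax-12-q2} one has $|x_1|_p=|\t|_p^2>1$, $|x_2|_p=|q/\t|_p^2<1$ and $|x_0|_p=1$, so the three solutions of \eqref{eq12} are different, hence so are the corresponding measures (which is in any case witnessed a posteriori by \eqref{ae-mu-2} and \eqref{ae-mu-3}).

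Next I would convert the general formulas \eqref{e-mu1}, \eqref{e-mu2} and \eqref{e-mu0} for $|\m_i(\s)|_p$ into explicit powers of $p$ by substituting the valuations $|x_1|_p$, $|x_1+\t+q-1|_p$, $|x_2|_p$, $|x_2+\t+q-1|_p$, $|\t+q|_p$ from \eqref{ax-12-q2}, \eqref{ax1-q}, \eqref{ax2-q}, together with the form \eqref{Potts} of the Hamiltonian, $H(\s)=N\sum_{<x,y>\in L_n}\delta_{\s(x),\s(y)}$. For each $i\in\{0,1,2\}$ this writes $|\m_i(\s)|_p$ as $|h_0|_p^2$ (times $|q|_p^2$ when $i=2$) multiplied by $p$ raised to a fixed linear form in $|V_{n-1}|$, $\#\s$ and $\sum_{<x,y>\in L_n}\delta_{\s(x),\s(y)}$; this is exactly \eqref{ae-mu1}, \eqref{ae-mu2}, \eqref{ae-mu0}. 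The core of the argument --- and the step I expect to be the only real obstacle --- is then a purely combinatorial one: using $0\le\#\s\le|W_n|$, $0\le\sum_{<x,y>\in L_n}\delta_{\s(x),\s(y)}\le|V_n|-1$ and the identities $|W_n|=|V_{n-1}|+2$, $|V_n|=|V_{n-1}|+|W_n|$, one must bound the relevant linear form from below by a constant independent of $n$ and $\s$. The point is that the $|V_{n-1}|$-terms cancel and one is left with $-\tfrac{5}{2}$, $-\tfrac{1}{2}$ and $-1$ in the three cases (cf. \eqref{ae-mu12} and \eqref{ae-mu21}), so that $|\m_1(\s)|_p\le|h_0|_p^2p^{5\bar N}$, $|\m_2(\s)|_p\le|h_0|_p^2p^{\bar N}$ and $|\m_0(\s)|_p\le|h_0|_p^2p^{\bar N}$. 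Since these bounds do not depend on $n$ or $\s$, all three measures are bounded.

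Finally, dividing \eqref{ae-mu1} and \eqref{ae-mu2} by \eqref{ae-mu0} and using $|V_{n-1}|-\#\s\ge|V_{n-1}|-|W_n|=-2$ and $\#\s\ge0$ gives \eqref{ae-mu-01} and \eqref{ae-mu-02}, hence the claimed inequality \eqref{ae-mu-00}. For the quasi phase transition I would simply observe that we have produced two distinct $p$-adic quasi Gibbs measures, $\m_1$ and $\m_2$, both of which are bounded --- their difference already being exhibited by \eqref{ae-mu-2} and \eqref{ae-mu-3} on the constant configurations $\s_n\equiv1$ and $\tilde\s_n\equiv0$ --- so, by the definition recalled above, a quasi phase transition occurs. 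This finishes the proof.
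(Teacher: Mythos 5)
Your proposal is correct and follows essentially the same route as the paper: the paper's proof of Theorem~\ref{bound2} is precisely the chain of displayed computations \eqref{ae-mu1}--\eqref{ae-mu-3} preceding the statement, namely rewriting $|\m_i(\s)|_p$ as $|h_0|_p^2$ times $p$ raised to a linear form in $|V_{n-1}|$, $\#\s$ and $\sum_{<x,y>\in L_n}\delta_{\s(x),\s(y)}$, bounding that form below by $-\tfrac52$, $-\tfrac12$, $-1$ respectively, and then deducing \eqref{ae-mu-00} and the quasi phase transition from the two distinct bounded measures. Your bookkeeping and the constants match the paper's exactly.
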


\section{Conclusions}

It is known that to investigate phase transitions, a dynamical
system approach, in real case, has greatly enhanced our
understanding of complex properties of models. The interplay of
statistical mechanics with chaos theory has even led to novel
conceptual frameworks in different physical settings \cite{E}.
Therefore, in the present paper, we have investigated a phase
transition phenomena from such a dynamical system point of view. For
$p$-adic quasi Gibbs measures of $q+1$-state Potts model on a Cayley
tree of order two,  we derived a recursive relations with respect to
the boundary conditions, then we defined one dimensional fractional
$p$-adic dynamical system. In ferromagnetic case, we have
established that if $q$ is divisible by $p$, then such a dynamical
system has two repelling and one attractive fixed points. We found
basin of attraction of the attractive fixed point, and this allowed
us to describe all solutions of the nonlinear recursive equations.
Moreover, in that case we prove the existence of the strong phase
transition. If $q$ is not divisible by $p$, then the fixed points
are neutral, and the existence of the quasi phase transition has
been established. In antiferromagnetic case, there are two
attractive and one repelling fixed points. We found  basins of
attraction of both attractive fixed points, and described solutions
of the nonlinear recursive equation. In this case, we proved the
existence of a quasi phase transition as well. These investigations
show that there are some similarities with the real case, for
example, the existence of two repelling fixed points implies the
occurrence of the strong phase transition. Moreover, using such a
method one can study other $p$-adic models over trees.

 Note that the
obtained results are totaly different from the results of
\cite{MR1,MR2}, since when $q$ is divisible by $p$ means that $q+1$
is not divided by $p$, which according to \cite{MR1} means that
uniqueness and boundedness of $p$-adic Gibbs measure.

\section*{Acknowledgement} The present study have been done within
the grant FRGS0409-109 of Malaysian Ministry of Higher Education.

\end{document}